\newcommand\blfootnotea[1]{%
  \begingroup
  \renewcommand\thefootnote{}\footnote{#1}%
  \endgroup
}
\crefname{equation}{equation}{equations}
\crefname{lemma}{lemma}{lemmata}
\crefname{claim}{claim}{claims}
\crefname{theorem}{theorem}{theorems}
\crefname{proposition}{proposition}{propositions}
\crefname{corollary}{corollary}{corollaries}
\crefname{claim}{claim}{claims}
\crefname{remark}{remark}{remarks}
\crefname{definition}{definition}{definitions}
\crefname{fact}{fact}{facts}
\crefname{question}{question}{questions}
\crefname{condition}{condition}{conditions}
\crefname{algorithm}{algorithm}{algorithms}
\newtheorem{theorem}{Theorem}[section]
\newtheorem{lemma}[theorem]{Lemma}
\newtheorem{proposition}[theorem]{Proposition}
\newtheorem{definition}[theorem]{Definition}
\newtheorem{fact}[theorem]{Fact}
\newtheorem{problem}{Problem}
\newcommand{\eps}{\epsilon}
\newcommand{\poly}{\mathrm{poly}}
\newcommand{\I}{{\mathbb I}}
\newcommand{\trace}{\operatorname{tr}}
\newcommand{\Var}{\operatorname{Var}}
\def\P{\mathbb P}
\def\R{\mathbb R}
\def\N{\mathbb N}
\def\I{\mathbb I}
\newcommand{\cA}{\mathcal{A}}
\newcommand{\cD}{\mathcal{D}}
\newcommand{\cE}{\mathcal{E}}
\newcommand{\cF}{\mathcal{F}}
\newcommand{\cP}{\mathcal{P}}
\newcommand{\cU}{\mathcal{U}}
\newcommand{\cX}{\mathcal{X}}
\newcommand{\Chi}{\cX}
\DeclareMathOperator*{\E}{\mathbb{E}}
\renewcommand{\Pr}{\operatorname*{\mathbb{P}}}
\newcommand{\Exp}{\operatorname*{\mathbb{E}}}
\newcommand{\muhat}{\hat{\mu}}
\newcommand{\Normal}{\mathcal{N}}
\newcommand{\Bernoulli}{\mathrm{Ber}}
\newcommand{\DTV}{d_\mathrm{TV}}
\newcommand{\DK}{d_\mathrm{K}}
\renewcommand{\boxed}[1]{\text{\fboxsep=.2em\fbox{\m@th$\displaystyle#1$}}}
\newcommand{\res}{B}
\newcommand{\AKPprob}{q}
\newcommand{\AKPtail}{s_1}
\newcommand{\Qconst}{s_2}
\newcommand{\Qscale}{s_3}
\newcommand{\InftyBd}{r}
\newcommand{\alphacoordinates}{3\alpha}
\newcommand{\alphacoordinateslp}{\alpha}
\newcommand{\Real}{\mathbb{R}}
\newcommand{\1}{\mathds{1}}
\renewcommand{\d}{\mathrm{d}}
\newcommand{\tr}{\mathrm{tr}}
\def\d{\mathrm{d}}
\def\colorful{0}
\newcommand{\new}[1]{{\color{red} #1}}
\newcommand{\new}[1]{{#1}}
\title{Outlier-Robust Sparse Mean Estimation for Heavy-Tailed Distributions\blfootnotea{Authors are in alphabetical order. Part of this work was done while a subset of the authors were visiting the Simons Institute for the Theory of Computing.}}
\author{
Ilias Diakonikolas\thanks{Supported by NSF Medium Award CCF-2107079, NSF Award CCF-1652862 (CAREER), a Sloan Research Fellowship, and a DARPA Learning with Less Labels (LwLL) grant.}\\
University of Wisconsin-Madison\\
{\tt ilias@cs.wisc.edu}\\
\and
Daniel M. Kane\thanks{Supported by NSF Medium Award CCF-2107547, NSF Award CCF-1553288 (CAREER), and a grant from CasperLabs.}\\
University of California, San Diego\\
{\tt dakane@cs.ucsd.edu}
\and
Jasper C.H. Lee\thanks{Supported in part by the generous funding of a Croucher Fellowship for Postdoctoral Research, NSF award DMS-2023239, NSF Medium Award CCF-2107079 and NSF AiTF Award CCF-2006206.}\\
University of Wisconsin-Madison\\
{\tt jasper.lee@wisc.edu}\\
\and
Ankit Pensia\thanks{Supported by NSF grants NSF Award CCF-1652862 (CAREER), DMS-1749857, and CCF-1841190.}\\
University of Wisconsin-Madison\\
{\tt ankitp@cs.wisc.edu}\\
}
\begin{document}

\maketitle

\setcounter{page}{0}
\thispagestyle{empty}

\allowdisplaybreaks

\begin{abstract}
We study the fundamental task of outlier-robust mean estimation 
for heavy-tailed distributions in the presence of sparsity. 
Specifically, given a small number of corrupted samples from a high-dimensional heavy-tailed distribution 
whose mean $\mu$ is guaranteed to be sparse, 
the goal is to efficiently compute a hypothesis that accurately approximates $\mu$ with high probability.
Prior work had obtained efficient algorithms for robust sparse mean estimation of light-tailed distributions.
In this work, we give the first sample-efficient and polynomial-time robust sparse mean estimator 
for heavy-tailed distributions under mild moment assumptions. 
Our algorithm achieves the optimal asymptotic error 
using a number of samples scaling logarithmically with the ambient dimension.
Importantly, the sample complexity of our method is optimal 
as a function of the failure probability $\tau$, having an {\em additive} $\log(1/\tau)$ dependence. 
Our algorithm leverages the stability-based approach from the algorithmic robust statistics literature, 
with crucial (and necessary) adaptations required in our setting.
Our analysis may be of independent interest, 
involving the delicate design of a (non-spectral) decomposition 
for positive semi-definite matrices satisfying certain sparsity properties.
\end{abstract}

\newpage

\section{Introduction}

\subsection{Background}

One of the most fundamental problem setups in statistics is as follows: 
given $n$ i.i.d.~samples drawn from an unknown distribution $P$ 
chosen arbitrarily from some known distribution family $\cP$, 
infer some particular property of $P$ from the data.
This generic model captures a range of statistical problems of interest, 
for example, parameter estimation (such as the mean and (co)variance of $P$), 
as well as hypothesis testing.
While long lines of work have given us a deep understanding of the statistical 
and computational possibilities and limitations on these problems, 
these results are not always applicable in real-world settings due to 
(i) modeling issues, that the underlying distribution $P$ might not actually 
be in the known family $\cP$ but only being close to it, and 
(ii) the fact that the $n$ samples supplied might be corrupted, 
for example by nefarious actors in high-stakes applications~\cite{AndBHHRT72}.

The field of \emph{robust statistics} aims to design estimators and testers 
that can tolerate up to a \emph{constant} fraction of corrupted samples, 
independent of the potentially high dimensionality of the data~\cite{Tukey60,Huber09}.
Classical works in the field have identified and resolved the statistical limits 
of problems in this setup, both in terms of constructing estimators and proving impossibility results~\cite{Yatracos85,DonLiu88a,Donoho92,Huber09}. However, the proposed estimators 
were not computationally efficient, often requiring exponential time 
to compute either in the number of samples or the number of dimensions~\cite{Huber09}.

A recent line of work, originating in the computer science community, 
has developed the subfield of \emph{algorithmic} robust statistics, 
aiming to design estimators that not only attain tight statistical guarantees, 
but are also computable in polynomial time.
This line of research has provided computationally and statistically efficient estimators 
in a variety of problem settings (e.g., mean estimation, covariance estimation, and linear regression) 
under different assumptions (e.g., the distribution might be assumed 
to be (sub-)Gaussian, or can be heavy tailed); 
see \cite{DK19-survey} for a recent survey of results.

The focus of this paper is the robust mean estimation problem 
under sparsity constraints on the mean vector.
Sparsity is an important structural constraint that is both relevant in practice, 
especially in the face of increasing dimensionality of modern data, 
and extensively studied for statistical estimation 
(see, e.g., the books \cite{Hastie15,eldar2012compressed,vandeGeer16}).
In the specific context of robust {sparse} mean estimation, 
prior works have studied the case where the underlying distribution has light-tails, 
e.g., sub-exponential tails~\cite{BDLS17,DKKPS19-sparse,CDKGS21-sparse,DKKPP22-sparse-sos}.
In particular, the case of a spherical Gaussian distribution 
is now rather well-understood both in terms of the optimal information-theoretic estimation error, 
as well as the conjectured \emph{computational-statistical tradeoff} --- namely, 
that there is a gap between the statistical performance 
of computationally efficient and inefficient estimators~\cite{DKS17-sq,BrennanB20}.
In this work, we initiate the investigation of outlier-robust sparse mean estimation 
for \emph{heavy-tailed} distributions, under only mild moment assumptions.
Our main result is the first computationally efficient robust mean estimator in the heavy-tailed setting 
which leverages sparsity to reduce the {sample complexity} from depending polynomially 
on the dimensionality to a logarithmic dependence.
Importantly, our algorithm also achieves the optimal dependence 
on the failure probability $\tau$ as it tends to $0$; 
see the next two subsections for further discussion.

\subsection{Problem Setup}
We first define the input contamination model before formally stating the statistical problem.

\begin{definition}[Strong Contamination Model]
Given a \emph{corruption} parameter $\eps \in (0,1/2)$ 
and a distribution $P$ on uncorrupted samples, 
an algorithm takes samples from $P$ with \emph{$\eps$-contamination} 
as follows: 
(i) The algorithm specifies the number $n$ of samples it requires. 
(ii) $n$ i.i.d.\ samples from $P$ are drawn but not yet shown to the algorithm. 
(iii) An arbitrarily powerful adversary then inspects 
the entirety of the $n$ i.i.d.\ samples, 
before deciding to replace any subset of $\lceil \eps n \rceil$ 
samples with arbitrarily corrupted points, 
and returning the modified set of $n$ samples to the algorithm.
\end{definition}

Define the $\ell_{2,k}$-norm of a vector $v$, denoted by $\|v\|_{2,k}$, 
as the $\ell_2$-norm of the largest $k$ entries of 
a vector $v$ in magnitude.
The goal is to estimate the mean vector in this sparse norm.

\begin{problem}
\label{prob:1}
Fix a corruption parameter $\eps\in(0,1/2)$, error parameter $\delta>0$, 
failure probability $\tau\in(0,1)$, and distribution family $\cD$ over $\Real^d$.
Suppose we have access to $\eps$-contaminated samples 
drawn from an unknown distribution $P \in \cD$ with mean $\mu$.
The task is to compute an estimate $\muhat$ such that $\|\muhat - \mu\|_{2,k}$ 
is bounded above by error $\delta$ with probability at least $1-\tau$ over $n$ samples.
The goal is then to give an estimator with the minimal \emph{sample complexity} $n(k,\eps,\delta,\tau)$.
\end{problem}

The above problem is slightly more general than sparse mean estimation in the following sense.
To estimate a $k$-sparse mean vector $\mu$ to error $\delta$, 
it suffices (see, e.g.,~\cite[Lemma 3.2]{CDKGS21-sparse}) to:
1) obtain an estimate $\tilde{\mu}$ with $\|\tilde{\mu}-\mu\|_{2,k} \le \delta/3$, 
and 2) round $\tilde{\mu}$ to the $k$ entries with the largest magnitude, and zero out all the other entries.
The main result of this paper solves the problem of robust mean estimation in the $\ell_{2,k}$ norm.

A key aspect of robust statistics is that, depending on the distribution family $\cD$ we consider, 
the above problem is generally not solvable for all error parameters $\delta > 0$.
This work focuses on sparse mean estimation for \emph{heavy-tailed} distributions, 
where a commonly used model for heavy-tailedness is imposing only the mild assumption 
that the covariance of the clean distribution is bounded by the identity $I$, 
without any further tail assumptions (see \Cref{sec:literature} for more discussion).
Even when $d=1$ and even when there are infinitely many samples~\cite{DK19-survey}, 
it is known that in the heavy-tailed setting the minimum $\delta$ achievable is in the order of $\sqrt{\eps}$.
This immediately implies the same lower bound of $\Omega(\sqrt{\eps})$ for the minimum achievable $\delta$ in \Cref{prob:1}.

Before discussing the algorithmic results in this paper, 
we state known information-theoretic bounds on the sample complexity 
that applies to all estimators, efficient or not, for \Cref{prob:1} 
on distributions with covariance bounded by $I$, 
and for $\delta = \Theta(\sqrt{\eps})$.

\begin{fact}[Information-theoretic sample complexity: computationally-inefficient] \label{fact:info-theoretic}
In \Cref{prob:1}, for the distribution family $\cD_2$ defined as the set of distributions 
with covariance $\Sigma \preceq I$, and for $\delta = \Theta(\sqrt{\eps})$, 
we have that $n(k,\eps,\delta,\tau)  \asymp  (k \log(d/k) + \log(1/\tau))/\eps$.
That is, any estimator requires at least these 
many samples and there exists a (computationally-inefficient) 
estimator with this sample complexity.
The upper bound is from~\cite{Dep20-vc,PrasadBR20} 
and the lower bound follows from~\cite{LugMen19-gen}, 
even in the absence of outliers (see also Footnote 2 in~\cite{DepLec22-pseudo}) 
and even when we restrict to the distribution family $\cD_{\textrm{Gaussian}}$ 
which is the set of the Gaussian distributions with identity covariance.
\end{fact}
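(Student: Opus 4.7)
The statement has two sides, which I would establish by essentially orthogonal arguments.

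For the upper bound, the plan is to reduce the $\ell_{2,k}$ estimation problem to a family of $\binom{d}{k}$ low-dimensional heavy-tailed mean estimation subproblems. The key identity is $\|\muhat - \mu\|_{2,k} = \sup_{S \subseteq [d],\,|S|=k} \|(\muhat - \mu)_S\|_2$, so it suffices to control the error of a $k$-dimensional estimator applied to the projected samples $\{x_i|_S\}$ for every size-$k$ coordinate set $S$. On each such subset the projected distribution still has covariance $\preceq I_k$. Using a (computationally-inefficient, Tukey-depth / median-of-means style) estimator in dimension $k$ as in Lugosi--Mendelson, one can achieve error $O(\sqrt{\eps})$ under $\eps$-contamination using $O((k + \log(1/\tau'))/\eps)$ samples with failure probability $\tau'$. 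Setting $\tau' = \tau / \binom{d}{k}$ and union-bounding over subsets yields the stated complexity, since $\log \binom{d}{k} = \Theta(k\log(d/k))$. More cleanly, one rephrases this as a single uniform concentration statement over the set of $k$-sparse unit vectors, whose relevant VC-type complexity scales as $k\log(d/k)$; this is the viewpoint taken in~\cite{Dep20-vc,PrasadBR20}.

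For the lower bound, the two additive terms $k\log(d/k)/\eps$ and $\log(1/\tau)/\eps$ are established separately within the Gaussian sub-family $\cD_{\textrm{Gaussian}}$, so that adversarial corruption plays no role. For the sparsity term, I construct a Varshamov--Gilbert-type packing $\{\mu_i\}$ of $k$-sparse vectors in $\R^d$ with pairwise $\ell_2$ distance $\Omega(\sqrt{\eps})$ and cardinality $\exp(\Omega(k \log(d/k)))$, by first selecting supports from a code on $\binom{[d]}{k}$ and then fixing signs/magnitudes of size $\Theta(\sqrt{\eps/k})$ on each coordinate. Applying Fano's inequality to the product measures $\Normal(\mu_i, I)^{\otimes n}$, whose pairwise KL divergences are $O(n\eps)$, forces $n = \Omega(k\log(d/k)/\eps)$. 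The additive $\log(1/\tau)/\eps$ follows from a two-point Le Cam argument between $\Normal(0,I)$ and $\Normal(\mu^*,I)$ with $\|\mu^*\|_2 = \Theta(\sqrt{\eps})$ supported on a single coordinate: distinguishing them with failure probability $\tau$ requires $n = \Omega(\log(1/\tau)/\eps)$ samples by tensorization of total variation.

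The main obstacle is the upper bound step: producing a (computationally-unbounded) estimator that is simultaneously accurate across all exponentially many coordinate subsets while tolerating $\eps$-contamination and achieving sub-Gaussian-type rates under only a second-moment assumption. Sub-Gaussian rates in a single dimension already require median-of-means style aggregation in place of the empirical mean; upgrading this to a uniform statement over the $\ell_{2,k}$-ball, and tracking the additive rather than multiplicative appearance of $\log(1/\tau)$, is the real content of the cited VC-based analyses and would be the step I would most carefully verify.
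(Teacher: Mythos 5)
The paper does not actually prove \Cref{fact:info-theoretic}: it is stated as a known fact, with the upper bound attributed to \cite{Dep20-vc,PrasadBR20} and the lower bound to \cite{LugMen19-gen}, so there is no in-paper argument to compare against line by line. Your sketch is a faithful reconstruction of the standard arguments behind those citations: the lower bound via a Varshamov--Gilbert packing of $k$-sparse means plus Fano (for the $k\log(d/k)/\eps$ term) and a two-point Hellinger/Le Cam argument (for the $\log(1/\tau)/\eps$ term), both inside the uncorrupted Gaussian family exactly as the fact requires; and the upper bound via sub-Gaussian-rate robust estimation uniformly over $k$-sparse directions, which is precisely the median-of-means/VC viewpoint of the cited works.

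One step in your union-bound version of the upper bound does not work as literally written: running a $k$-dimensional estimator separately on each of the $\binom{d}{k}$ coordinate subsets produces $\binom{d}{k}$ mutually inconsistent estimates $\muhat_S \in \R^k$, and these do not assemble into a single vector $\muhat \in \R^d$ (the identity $\|\muhat-\mu\|_{2,k} = \sup_S \|(\muhat-\mu)_S\|_2$ requires one fixed $\muhat$). The standard fix is a one-line tournament/feasibility argument: output any $\muhat$ whose restriction to every $S$ lies within the error radius of $\muhat_S$; such a point exists since $\mu$ itself is feasible, and the triangle inequality then gives $\|\muhat - \mu\|_{2,k} = O(\sqrt{\eps})$. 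With that patch (or by working directly with the uniform-concentration formulation over $\cU_k$, as you note and as \cite{Dep20-vc,PrasadBR20} do), your proposal is correct, and the bookkeeping $\log\binom{d}{k} = \Theta(k\log(d/k))$ indeed yields the additive $\log(1/\tau)$ dependence claimed in the fact. Your closing caveat is also well placed: the genuinely nontrivial content is the existence of a single estimator achieving sub-Gaussian-type deviations uniformly over sparse directions under only a second-moment assumption, which is exactly what the cited references supply.
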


An interesting aspect of robust sparse mean estimation 
is that there is a conjectured statistical-computational tradeoff, 
namely that efficient algorithms require a qualitatively larger sample complexity than inefficient ones.
Specifically, there is evidence (in the form of SQ lower bounds and reduction-based hardness) that 
all efficient algorithms have a quadratically worse dependence on $k$; 
that is, even for constant $\eps, \delta, \tau$, and $\cD_{\textrm{Gaussian}}$ 
being identity-covariance Gaussians in \Cref{prob:1}, 
the sample complexity of all efficient algorithms is at least $\tilde{\Omega}(k^2)$, 
as opposed to $\tilde{O}(k)$ in \Cref{fact:info-theoretic}. 
See~\cite{DKS17-sq,BrennanB20} for a detailed discussion.

Both the information-theoretic bound and the conjectured information-computation tradeoff 
serve as benchmarks for our algorithm to match.

The main result of this paper is the following.

\begin{theorem}[Computationally Efficient Heavy-Tailed Robust Sparse Mean Estimation] \label{thm:ourResult}
 Let $\epsilon \in (0, \epsilon_0)$ for some sufficiently small universal constant $\epsilon_0>0$. 
 Let $P$ be a distribution over $\R^d$, where the mean and covariance of $P$ are $\mu$ and $\Sigma$ respectively.
 Suppose $\Sigma \preceq I$ and further suppose that 
 for all $j \in [d]$, $\E[(X_j-\mu_j)^4] = O(1)$.
  Then there is an algorithm such that on input 
  (i) the corruption parameter $\epsilon$, 
  (ii) the failure probability $\tau$, 
  (iii) the sparsity parameter $k$, and 
  (iv) $T$, an $\eps$-corrupted set of  
   $n \gg  (k^2\log d + \log(1/\tau))/\eps)$ i.i.d.~samples from $P$, 
   the algorithm outputs $\widehat{\mu}$ satisfying 
   $\|\widehat{\mu} -  \mu\|_{2,k}$ $= O(\sqrt{\eps})$ 
   with probability $1 - \tau$ in $\poly(n,d)$ time.
\end{theorem}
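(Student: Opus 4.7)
I would follow the stability-based filtering framework of algorithmic robust statistics, adapted to the sparse heavy-tailed setting. The first step is to define an appropriate notion of \emph{sparse stability}: a sample set $T$ is stable if it contains a large ``core'' subset $S$ whose empirical mean approximates $\mu$ in the $\ell_{2,k}$ norm to within $O(\sqrt{\eps})$, and whose empirical second-moment matrix $M_S$ has bounded $2k$-sparse operator norm, i.e.,\ $\sup_{v : \|v\|_0 \le 2k, \|v\|_2 = 1} v^\top M_S v = O(1)$. A standard stability-type argument then implies that any subset $T^* \subseteq T$ of relative size at least $1 - O(\eps)$ that also satisfies such a sparse operator norm bound automatically has empirical mean within $O(\sqrt{\eps})$ of $\mu$ in $\ell_{2,k}$. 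The algorithmic task thus reduces to identifying such a subset $T^*$ from the corrupted input.

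The next step is a concentration lemma showing that $n \gg (k^2 \log d + \log(1/\tau))/\eps$ clean samples produce a stable set with probability $1 - \tau$. Heavy tails preclude the full sample from being stable directly, so the core $S$ is extracted via a truncation argument that discards the points with the largest coordinate-wise magnitudes. The sparse operator norm bound follows from entrywise concentration of $M_S$: the per-coordinate fourth-moment assumption gives bounded variance for each bilinear term $(X_i - \mu_i)(X_j - \mu_j)$, so Bernstein's inequality combined with a union bound over the $d^2$ coordinate pairs and a duality argument for $2k$-sparse quadratic forms yields the $k^2 \log d$ factor. The additive (rather than multiplicative) $\log(1/\tau)$ in the sample complexity arises from the exponential tails of Bernstein applied coordinatewise for the mean.

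The algorithmic core, and the principal obstacle, is the design of the \emph{sparse matrix decomposition} alluded to in the abstract. In the dense heavy-tailed case, filtering proceeds by certifying a direction of large variance via the top eigenvector of the empirical second-moment matrix; in the sparse setting, a $k$-sparse certificate is required, but sparse PCA is NP-hard, and the SDP relaxation may return a dense PSD pseudo-covariance that would spread blame indiscriminately over clean samples from the heavy tails. My plan is to use the sparse-PCA SDP relaxation together with a non-spectral decomposition of the excess second-moment matrix: write it as a weighted sum of PSD pieces, each either having small sparse operator norm or localized on $O(k)$ coordinates in a way that concentrates blame on the corrupted points. With this decomposition in hand, one defines per-sample filter scores summing the contributions from each localized piece, and argues via a standard potential function, tracking the difference in $\ell_1$-weight between inliers and outliers in the working set, that each iteration removes strictly more outlier mass than inlier mass. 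Each iteration is a polynomial-time convex program, so the filter terminates in $\poly(n, d)$ rounds with a subset satisfying the sparse operator norm bound. The delicate part is constructing the decomposition so that it is simultaneously efficiently computable and charges sufficient total score to genuine outliers without over-penalizing the heavy-tailed inliers; this is where the sparse-covariance portion of stability and the fourth-moment assumption interact most tightly.
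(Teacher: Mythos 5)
Your high-level architecture (truncate, establish a sparse stability condition with $k^2\log d$ samples, then filter) matches the paper's, but there are two genuine gaps at the steps that carry all the difficulty.

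First, the concentration step. You propose to certify the second-moment condition by applying Bernstein entrywise to the bilinear terms $(X_i-\mu_i)(X_j-\mu_j)$ and union-bounding over the $d^2$ coordinate pairs. Before truncation these terms are heavy-tailed (only a bounded fourth moment per coordinate), so Bernstein does not apply; after truncation to an $\ell_\infty$ ball of radius $\Theta(\sqrt{k/\eps})$ each product has range $\Theta(k/\eps)$, and even granting the resulting (weak) exponential tail, a union bound over $d^2$ entries forces a failure probability of the form $d^2\exp(-cn/k^2)$, i.e., a sample complexity of order $k^2(\log d+\log(1/\tau))$ --- a \emph{multiplicative} $k^2\log(1/\tau)$ term. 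The additive $\log(1/\tau)$ dependence is exactly the delicate part of the theorem; the paper obtains it by only requiring that a large \emph{subset} of the samples be stable, reducing (via a minimax/duality step) to a uniform event over $M\in\Chi_k$, proving that event first for a VC-coverable family of $k^2$-sparse bounded-Frobenius matrices, and then transferring it to all of $\Chi_k$ by a probabilistic-method sparsification of $M$ (with Berry--Esseen and Paley--Zygmund). Your Bernstein-plus-union-bound route cannot recover this. A further unaddressed complication is that the truncation must be centered at a data-dependent preliminary estimate, which requires a separate ``Lipschitzness of stability'' argument (itself an LP-duality argument in the paper).

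Second, there is a mismatch between your stability notion and your algorithm. You define stability via the exact $2k$-sparse operator norm $\sup_{\|v\|_0\le 2k}v^\top M_S v$ but then run a filter that can only evaluate an SDP relaxation. Stability in the exact norm does not imply the filter behaves correctly: the relaxation evaluated on the clean core could be much larger than the exact sparse operator norm, so the filter could discard inliers. The paper resolves this by formulating stability directly in the relaxed $\Chi_k$ norm and proving the sample-complexity bound for that (harder) quantity; this is also where the ``non-spectral decomposition'' lives --- it is a tool in the stability \emph{analysis} (passing from sparse rank-one-like pieces to all of $\Chi_k$), not a component of the filter itself, which the paper takes off the shelf from the Gaussian-case algorithm. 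Your proposal places the decomposition inside a newly designed filter with a potential argument, which is left too vague to assess and is in any case not needed once stability in the relaxed norm is established.
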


Phrased in a slightly different language, when our estimator is given 
a sufficiently large number $n$ of $\eps$-corrupted samples, 
it outputs an estimate $\muhat$ satisfying 
$\|\widehat{\mu} -  \mu\|_{2,k} = O\Big(\sqrt{\frac{k^2 \log d}{n}} + \sqrt{\epsilon} + \sqrt{\frac{\log(1/\tau)}{n}}\Big)$ 
with probability $1-\tau$.

We note that the guarantees of our algorithm remain the same 
under a weaker assumption on $\Sigma$: we need only that 
$\|\Sigma\|_{\Chi_k} \leq 1$ instead of the spectral norm being bounded
(the norm $\|\cdot\|_{\Chi_k}$ is formally defined in \Cref{def:chi-k-norm}).
Informally, the $\Chi_k$ norm of a square matrix $A$ is a convex relaxation 
of finding the maximum of $v^\top A v$ over $k$-sparse vectors $v$.
See \Cref{thm:ourResult_Chi_k} in \Cref{sec:alg_analysis} for the stronger 
version of the main result, which assumes only that $\|\Sigma\|_{\Chi_k} \le 1$.

As outlined above, the dependence of our sample complexity result on $k$ 
is tight with respect to the conjectured lower bound for efficient algorithms, 
and its dependence on $\tau$ and $\eps$ are also tight with respect 
to the information-theoretic lower bounds, even in the Gaussian case.
In terms of the smallest achievable asymptotic error (even in infinite sample regime), 
we show in \Cref{lem:InfoTheoretic4thMoment} that, 
even after adding the mild axis-wise $4^\text{th}$ moment assumption in \Cref{thm:ourResult}, 
the asymptotic error remains bounded below by $\Omega(\sqrt{\eps})$ when $k$ is sufficiently large.
The restriction on $k$ is fairly mild, covering most parameter regimes of interest.

The sample complexity of our algorithm has a dependence on the failure probability that is $\log 1/\tau$, 
and --- importantly --- this is an {\em additive} term in the complexity instead of multiplicative.
To be precise, in the i.i.d.\ setting with no outliers, 
we can artificially set $\epsilon = C \max(k^2 \log d, \log (1/\tau)) /n $ for a large constant $C$.
In this setting, when the number of samples $n$ is such that $n \gg k^2 \log d + \log (1/\tau)$, 
then with probability $1-\tau$, our algorithm outputs an estimate $\hat \mu$ satisfying
\begin{align}
    \| \hat \mu - \mu \|_{2,k} = O\left(\sqrt{\frac{k^2 \log d}{n}} + \sqrt{\frac{\log(1/\tau)}{n}} \right) \;.
\end{align}
This additive dependence is non-trivial to achieve 
even in the optimal rates for heavy-tailed mean estimation in the non-robust (and non-sparse) setting.
See the~\cite{LugMen19-survey} survey for a more detailed discussion.
Our work provides the first {\em computationally efficient} estimator 
for heavy-tailed sparse mean estimation 
with such additive dependence, even in the non-robust setting.

\subsection{Our Approach}
\label{sec:our-approach}
\looseness=-1
Our algorithm fits into the stability-based filtering approach;
see~\cite{DKKLMS16} and the survey \cite{DK19-survey}.
The filtering framework is a by-now-standard algorithmic 
technique in robust statistics. 
The approach for robust mean estimation can be summarized as follows: 
1) with high probability over the sampling of the $n$ uncorrupted samples, 
there exists a large subset of uncorrupted samples (say, a $1-O(\eps)$ fraction) 
satisfying a ``stability'' condition with respect to the mean of the uncorrupted distribution, 
and 2) a filtering algorithm taking as input an \emph{$\eps$-corrupted} version 
of the stable set of samples will remove some of the samples, 
such that the sample mean of the remaining points is guaranteed to be close to the true mean 
(which can then be returned as the final mean estimate). The notion of ``stability'' depends 
crucially on the task at hand, and is defined below for the sparse mean estimation problem. 

\paragraph{Stability-Based Algorithms under Sparsity}
Informally speaking, in the context of robust mean estimation, 
we say that a set $S$ is stable when the mean and the covariance of $S$ 
do not deviate too much when we remove a small fraction of elements from $S$.
For the task of {\em sparse} mean estimation, we would like to measure 
the deviation only along the $k$-sparse directions. 
However, it is computationally hard  to calculate the maximum 
of $v^\top A v$ over $k$-sparse unit vectors for an arbitrary matrix $A$ 
(this is known as the sparse PCA problem~\cite{TillPfe14}).
Following \cite{BDLS17}, our definition of stability involves a convex relaxation 
of the above optimization problem, using the following definition of the set $\Chi_k$ 
and the associated matrix norm $\|\cdot\|_{\Chi_k}$.

\begin{definition}[The set $\Chi_k$ and the norm $\|\cdot\|_{\Chi_k}$]
\label{def:chi-k-norm}
The set $\Chi_k$ is defined as the set of positive semidefinite matrices 
that have trace $1$ and $\ell_1$-norm at most $k$ when flattened as a vector.
The matrix norm $\|A\|_{\Chi_k}$ is then defined as 
{$\sup_{M \in \Chi_k} |  A \bullet M  |$, where $ A \bullet M$ denotes 
the trace product $\trace(A^\top M)$.}
\end{definition}

Note that for any square matrix $A$, $\|A\|_{\Chi_k}$ is always bounded above 
by its spectral norm. Furthermore, observe that for any square matrix $A$, 
the maximum of $v^\top Av$ over $k$-sparse unit vectors is bounded above 
by $\|A\|_{\Chi_k}$, and the latter can be calculated efficiently using a convex program.
We are now ready to define the stability condition for our sparse mean estimation task.

\begin{definition}[Stability Condition for Robust Sparse Mean Estimation] \label{def:stab-sparse}
For $0 < \eps < 1/2$ and $\eps \leq \delta$, a set $S$ is $(\epsilon,\delta,k)$-stable 
with respect to $\mu \in \R^d$ and $\sigma \in \R_+$ if it satisfies the following condition: 
for all subsets $S' \subset S$ with $|S'| \geq (1 -\epsilon)|S|$, the following holds:
	(i) 
	$\|\mu_{S'} - \mu\|_{2,k} \leq \sigma \delta$, 
and (ii)	 
$\|\overline{\Sigma}_{S'} - \sigma^2 I \|_{\cX_k} \leq \sigma^2 \delta^2/\epsilon$, 
where $\mu_{S'} = (1/|S'|)\sum_{x \in S'}x$ is the sample mean of $S'$ 
and $\overline{\Sigma}_{S'} = (1/|S'|)\sum_{x \in S'}(x - \mu)(x - \mu)^\top$ 
is the second moment of $S'$.
\end{definition}

\Cref{def:stab-sparse} is intended for distributions with covariance matrices 
at most $\sigma^2$ times the identity.
We will omit $\mu$ and $\sigma$ above when they are clear from the context.

Focusing on the class of identity covariance Gaussian distributions, \cite{BDLS17} 
gave a computationally-efficient algorithm for robust sparse mean estimation
using roughly $k^2\log d$ samples.\footnote{The additional factor of $k$ in their sample complexity (cf. \Cref{fact:info-theoretic}) is because the convex relaxation involving $\Chi_k$ norm can be loose. However, \cite{DKS17-sq,BrennanB20} suggest that $k^2$ samples are needed for efficient algorithms.} As we explain below, their algorithm succeeds 
under the stability condition of \Cref{def:stab-sparse}.

By using the standard median-of-means pre-processing 
described in \Cref{sec:prelim}, we can reduce the robust sparse mean estimation task 
to the case when the corruption parameter $\eps$ is constant, say $0.01$, 
and aim  to achieve only a constant estimator error in the $\ell_{2,k}$ norm.
For this regime, we state the guarantees of robust sparse mean estimation algorithm of \cite{BDLS17} (developed for the Gaussian setting) 
as follows\footnote{\new{See also \cite{ZJS22-generalized-gradients} for a related algorithm.}}:

\begin{fact} \label{fact:SDPAlg}
Let $S$ be a set in $\R^d$ such that there exists a set $S' \subseteq S$ 
such that (i) $|S'| \geq 0.99|S|$, and (ii) $S'$ is an $(0.01,O(1),k)$-stable 
with respect to (unknown) $\mu$ and (unknown) $\sigma$. 
There is a $\poly(|S|,d)$-time algorithm that takes as input $T$, 
an $0.01$-corruption of $S$, and returns a mean estimate $\muhat$ 
such that $\|\muhat - \mu\|_{2,k} \le O(\sigma)$.
\end{fact}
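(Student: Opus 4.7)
The plan is to invoke (a mild adaptation of) the stability-based filtering algorithm of \cite{BDLS17}, whose analysis in fact requires only the stability condition of \Cref{def:stab-sparse} and not any Gaussianity assumption. The first step is to note that since $T$ is an $0.01$-corruption of $S$ with $|T|=|S|$, and $|S'| \geq 0.99|S|$, the symmetric difference satisfies $|T \triangle S'| \leq 0.02|S|$, so $T$ is (after trivial padding) an $O(0.01)$-corruption of the $(0.01, O(1), k)$-stable set $S'$. The rest of the proof reduces to showing that a filter designed around the $\Chi_k$ relaxation drives the empirical second moment into the ``stable'' regime while preserving the stable inliers.

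The algorithm maintains weights on points of $T$ (initially uniform) and iterates: (i) compute the weighted empirical mean $\mu_T$ and centered second moment $\Sigma_T$; (ii) solve the convex program $\max_{M \in \Chi_k}(\Sigma_T - \sigma^2 I)\bullet M$, which simultaneously evaluates $\|\Sigma_T - \sigma^2 I\|_{\Chi_k}$ and returns a maximizer $M^\ast$; (iii) if the value is below a suitable constant multiple of $\sigma^2$, output $\mu_T$; otherwise (iv) score each $x \in T$ by $(x - \mu_T)^\top M^\ast (x - \mu_T)$ and apply the standard randomized filter that downweights points in proportion to their scores. Since $\sigma$ is unknown but bounded by $1$ (as $\Sigma \preceq I$ in our intended application), we can simply use $\sigma = 1$ in the filter; the final error bound is then stated in terms of the true $\sigma$.

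Correctness splits into two pieces. In the \emph{stopping case}, when $\|\Sigma_T - \sigma^2 I\|_{\Chi_k} = O(\sigma^2)$, decompose $\mu_T - \mu$ as a weighted average over $T \cap S'$ and $T \setminus S'$. The first term is $O(\sigma)$ in $\ell_{2,k}$ by stability of $S'$, and the second is controlled by noting that for any $k$-sparse unit vector $v$, the rank-one matrix $vv^\top$ lies in $\Chi_k$, so Cauchy--Schwarz applied against the $\Chi_k$-bounded second moment caps how far an $O(\eps)$ mass of outliers can shift $\mu_T$ along $v$. In the \emph{progress case}, the total $M^\ast$-score of $T \cap S'$ is $O(\sigma^2)$ by stability (because $M^\ast \in \Chi_k$ and the stability clause bounds $\|\overline{\Sigma}_{S'} - \sigma^2 I\|_{\Chi_k}$), whereas the total score over all of $T$ is strictly larger than a chosen constant times $\sigma^2$ by the failure of the stopping condition. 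Hence the outlier contribution dominates, and the randomized filter removes outliers faster than stable inliers in expectation, giving termination in $\poly(n,d)$ iterations.

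The principal obstacle is the progress step. Since computing the sparse spectral norm is NP-hard, the ``witness'' $M^\ast$ used to drive the filter is not rank-one but only $M^\ast \in \Chi_k$, and one must verify that the filter still separates corruptions from stable inliers. This is precisely what building the stability definition around $\|\cdot\|_{\Chi_k}$ (rather than the sparse spectral norm) buys us: the scores assigned to the stable set under any $M \in \Chi_k$ are provably small, so the BDLS17 filter-and-iterate potential argument carries over verbatim with $M^\ast$ replacing a rank-one spectral witness.
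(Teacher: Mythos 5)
Your overall plan is the right one and matches what the paper does: \Cref{fact:SDPAlg} is imported from \cite{BDLS17}, and the paper's only ``proof'' is the observation that the BDLS17 filter's analysis uses nothing about the input beyond the stability condition of \Cref{def:stab-sparse}. Your reduction from a corruption of $S$ to a corruption of the stable subset $S'$, your stopping-case analysis via Cauchy--Schwarz against $vv^\top \in \Chi_k$, and your progress-case comparison of the $M^\ast$-scores of $T \cap S'$ versus all of $T$ are exactly the ingredients of that argument.

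There is, however, one concrete step that fails: your treatment of the unknown scale $\sigma$. Running the filter with $\sigma = 1$ substituted into the stopping condition $\|\Sigma_T - \sigma^2 I\|_{\Chi_k} \le C\sigma^2$ yields a final error of $O(1)$, not $O(\sigma)$: in the stopping case your own Cauchy--Schwarz bound controls the outlier-induced shift by roughly $\sqrt{0.02\,\|\Sigma_T\|_{\Chi_k}}$, and with $\sigma=1$ the stopping rule only certifies $\|\Sigma_T\|_{\Chi_k} = O(1)$. Concretely, if the true $\sigma$ is tiny, an adversary can place the $0.01|S|$ corrupted points at distance $10$ from $\mu$ along a single $k$-sparse coordinate direction; the $\sigma=1$ stopping condition is then satisfied at the very first iteration, yet $\muhat$ is shifted by $\Omega(1) \gg O(\sigma)$. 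This is not a cosmetic issue for the paper: \Cref{alg:main} invokes \Cref{fact:SDPAlg} with $\sigma = 1/\sqrt{m} = \Theta(\sqrt{\eps})$, and an $O(1)$ guarantee in place of $O(\sigma)$ would destroy the final $O(\sqrt{\eps})$ bound of \Cref{thm:ourResult_Chi_k}. The repair is easy in context --- the pipeline actually knows $\sigma$ (it is computed in Step 3 of \Cref{alg:main} and supplied as input in \Cref{fact:MoM}), and by scale invariance one may rescale the data so that $\sigma = 1$ genuinely holds before filtering --- but ``plug in $1$ for an unknown $\sigma$ and state the bound in terms of the true $\sigma$'' is not a valid proof of the fact as stated.
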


Given this prior algorithmic result, the key challenge is to show that, 
even in the setting of {\em heavy-tailed} data, a large subset 
of the uncorrupted samples satisfies the stability condition 
with high probability. Without sparsity constraints, \cite{DiaKP20,HopLZ20} 
showed that $O(d)$ samples suffice for stability 
(under a different definition appropriate for the dense setting), 
which is too large for our purposes.

\paragraph{Truncation is Necessary for Stability}
Recall that our goal is to show that if we draw roughly $k^2 \log d$ samples 
from a heavy-tailed distribution, then it contains a large stable subset. 
For the light-tailed data (Gaussian), this was shown in \cite{BDLS17}. 
However, this desired statement is not true for general heavy-tailed distributions.   
Consider the standard setting for modeling heavy-tailed data, namely 
that the covariance $\Sigma$ of the uncorrupted distribution 
is upper bounded by the identity.
For simplicity, also assume that the sparsity parameter $k$, 
corruption parameter $\eps$ and failure probability $\tau$ are all constants.
Thus, our goal is to show that, with high probability, there is a large stable 
subset among $\log d$ samples. 
Yet, as we show in \Cref{ex:HeavyTailed} in \Cref{sec:truncation}, 
there exists a distribution where deterministically for \emph{any} set of 
up to $o(d)$ many uncorrupted samples, \emph{no} large subset can be stable.
This distribution is the one returning a vector of length $\sqrt{d}$ 
from a randomly chosen axis direction, which has unit covariance.
Essentially, the long length of $\sqrt{d}$ along directions as sparse as the axis directions
causes stability to fail to hold.

In order to circumvent this obstacle, we propose to ``truncate'' all the samples 
in $\ell_\infty$ norm before using a stability-based filtering robust mean estimation algorithm.
Specifically, we start by computing an initial mean estimate, 
and then clip each sample coordinate-wise 
to within a radius of $\Theta(\sqrt{k})$ of the initial mean estimate. 
This radius is chosen carefully to ensure that the mean of the original distribution 
and the clipped distribution is close in $\ell_{2,k}$-norm. 
Ensuring that the clipped distribution also has small variance 
turns out to be non-trivial, as we detail below.  

\paragraph{Necessity of Bounded Higher Moments} 
After truncation, we have the guarantee that no point is too far from the true mean. 
Unfortunately, truncation can potentially also \emph{rotate} a point about the true mean, 
in the sense that for a sample, the direction of its difference 
from the true mean may change after such truncation.
In general, this rotation effect can cause much of the mass of the distribution 
to rotate and concentrate towards certain directions, 
and significantly \emph{increase} the variance in those directions.
(See \Cref{app:truncation} for more details.)
In this work, we identify the mild condition that the $4^\text{th}$ moment 
is bounded along each {\em axis} direction by some constant, 
on top of our assumption that $\Sigma \preceq I$, 
to be sufficient to show that truncation can only increase variance in directions 
that are non-sparse --- in the sense that the resulting covariance 
will still have bounded $\Chi_k$ norm (see \Cref{lem:truncationInfBd}).
Thus, under these mild conditions, we can safely truncate our samples 
(which is necessary for stability to hold, as outlined above), 
and modify our goal to show this truncated distribution 
contains a large stable set with high probability.

\paragraph{Stability of Truncated Samples with High Probability}

Even after truncation and after imposing an axis-wise $4^\text{th}$ moment bound, 
it remains challenging to show that, with high probability, there exists 
a large subset of samples that are stable with respect to the true mean.

As we see in \Cref{sec:stability_after_removing_points_additive_dependence_on_}, 
the analysis reduces to showing that with high probability over the uncorrupted samples, 
for every matrix $M \in \Chi_k$, there exists a large subset of samples $S$ 
whose empirical covariance $\overline{\Sigma}_S$ has a small inner product with $M$, 
namely that $ M \bullet \overline{\Sigma}_S $ is bounded.
In the non-sparse setting, the strategy used in~\cite{DepLec22} and~\cite{DiaKP20} 
is to first show a high probability event for all $M = vv^\top$ for unit vectors $v$, 
and then to show that the event for all $M = vv^\top$ deterministically implies 
that the event holds also for all $M \succeq 0$ with $\tr(M) = 1$.
{This strategy is important because although the cover of PSD matrices 
would roughly be exponential in $d^2$, the cover of $vv^\top$ is only exponential in $d$.}
{Thus, the first step holds with roughly $d$ samples,}
and the second step crucially uses the spectral decomposition (SVD) 
of positive semidefinite (PSD) matrices. 
On the other hand, in our sparse setting, if we applied the usual SVD 
to the PSD matrices $M \in \Chi_k$, the resulting decomposition 
will generally not yield sparse components,
and thus not allowing us to leverage sparsity.
Instead, inspired by certain matrix norm results derived by Li~\cite{li18thesis}, 
we carefully design a (non-spectral) decomposition 
that does yield {$k^2$-}sparse 
components and can be covered {with $k^2\log d$ samples};
as well as a more delicate argument to complete the second step, 
namely that the event holding for all components $M$ 
in the decomposition implies the event holding for all $M \in \Chi_k$.
The intricacies of these arguments also allow us 
to get a sample complexity that ultimately yields 
an {\em additive} (as opposed to multiplicative) dependence on $\log 1/\tau$, 
which as described in the previous section is a crucial feature of our result, 
and in line with the non-robust non-sparse sub-Gaussian mean estimation setting.

\subsection{Related Work}
\label{sec:literature}

\paragraph{Algorithmic Robust Statistics}
The goal of algorithmic robust statistics is to obtain dimension-independent 
asymptotic error even in the presence of constant fraction 
of outliers in high dimensions in a computationally efficient way.
Since the dissemination of \cite{DKKLMS16,LaiRV16}, 
which focused on high-dimensional robust mean estimation, 
the body of work in the field has grown rapidly.
For example, prior work has obtained dimension-independent 
guarantees for various problems such 
as linear regression~\cite{KlivansKM18,DKS19-lr} and convex optimization~\cite{PraSBR20,DiakonikolasKKLSS2018sever}. 
See the survey \cite{DK19-survey} for a more detailed description.
Most relevant to us are the works on robust mean estimation 
that leverage the sparsity constraints and obtain improved sample complexity. 
The algorithms developed in \cite{BDLS17,DKKPS19-sparse,CDKGS21-sparse,DKKPP22-sparse-sos} obtain optimal asymptotic error for light-tailed distributions, such as Gaussians.
However, these algorithms (and their analyses) crucially rely 
on the light-tails and, as outlined in \Cref{sec:our-approach}, 
provably do not work for heavy-tailed distributions.

\paragraph{Heavy-Tailed Statistical Estimation}
The recent decades also saw a growing interest 
in studying statistics in heavy-tailed settings.
Even for the basic question of univariate mean estimation without sample corruption, 
the statistical limits are only recently resolved by 
a line of work started by Catoni~\cite{Cat12} and ending 
with Lee and Valiant~\cite{LeeValiant2020} 
(see also~\cite{Minsker:2022-subgaussian-mean} for an alternative estimator).

The high-dimensional heavy-tailed setting turns out to be much more challenging
and has been extensively studied in recent years, e.g., 
for mean estimation in the $\ell_2$ norm~\cite{LugMen19-mean} 
and in other norms~\cite{LugMen19-gen,DepLec22-pseudo}, 
covariance estimation~\cite{MenZhi20}, and stochastic convex optimization~\cite{BarMen22}.
In absence of contamination, the goal is to obtain sample complexity 
as if the distribution were Gaussian.
Roughly speaking, this corresponds to an additive dependence 
on the logarithm of failure probability 
in various estimation tasks (as we achieve also in this work).
We refer the reader to the survey for more details~\cite{LugMen19-survey}.
This line of work focuses on the statistical limits, 
and the estimators developed are generally computationally inefficient.

A closely-related body of research aims to obtain \emph{efficient} 
algorithms for heavy-tailed distributions with optimal statistical performance, 
\new{ideally} matching the above guarantees.
These works include high-dimensional (dense) mean estimation 
\cite{Hop20,CherapanamjeriF19, DepLec22,LeiLVZ20,DiaKP20,HopLZ20,CheTBJ22,LeeValiant2022}, 
linear regression~\cite{CheHRT20,PenJL20,Dep20}, and covariance estimation~\cite{CheHRT20}.
We note that many of these works are inspired by the algorithmic robust statistics 
literature and can also tolerate a constant fraction of contaminated data.

To the best of our knowledge, none of these works studies sparse estimation 
under heavy-tailed distributions (even in absence of outliers), 
and our work is the first result with sample complexity 
that is additive in the logarithm of the failure probability.

\subsection{Organization} \label{ssec:org}
The structure of this paper is as follows: After the necessary technical preliminaries
in Section~\ref{sec:prelim}, in Section~\ref{sec:truncation} we describe and analyze
our simple pre-processing truncation scheme. In Section~\ref{sec:alg_analysis}, we provide a detailed
description of our algorithm and an outline of its analysis, assuming the necessary
stability conditions are satisfied. Sections~\ref{sec:stability_after_removing_points_additive_dependence_on_}
and~\ref{app:stability-smoothness} establish that the stability condition will be satisfied
with the appropriate sample complexity, and are the main technical contributions of this work.
Finally, Section~\ref{sec:information_theoretic_error} shows that the error guarantee 
of our algorithm is information-theoretically optimal under a mild assumption on the sparsity.
For the sake of the presentation, some technical proofs have been deferred to an appendix.

\section{Preliminaries}
\label{sec:prelim}

\paragraph{Notations}
Here we define the notations we use in the rest of the paper.
For a (multi-)set $S \subset \R^d$, we denote  $\mu_S = (1/|S|)\sum_{x \in S}x$ and $\Sigma_S = (1/|S|) (\sum_{x \in S}  (x - \mu_S)(x - \mu_S)^\top)$. 
When the vector $\mu$ notation is clear from context, we use $\overline{\Sigma}_S$ to denote $(1/|S|) \sum_{x \in S} (x - \mu)(x - \mu)^\top$.

Let $\cU_k$ denote the set of $k$-sparse unit vectors in $\R^d$. For two vectors $x$ and $y$, $\langle x, y\rangle$ denotes the dot product $x ^\top y$.
For a vector $x \in \R^d$, we use $\|x\|_{2,k}:= \sup_{v \in \cU_k} \langle x,v\rangle$ and $\|x\|_{\infty}$ to denote $\max_j |x_j|$.
For a matrix $M$, we use $\|M\|_1$ to denote $\sum_{i,j}|M_{i,j}|$ and $\|M\|_0$ to denote the number of non-zero entries of $M$.
For two matrices $A$ and $B$, we use $ A \bullet B $ to denote the trace inner product $\trace(A^\top B)$.
Define $\cX_k := \{M: M \succcurlyeq 0, \trace(M) = 1 , \|M\|_1 \leq k\}$. For a matrix $A$, we define $\|A\|_{\cX_k}:= \sup_{M \in \cX_k} |  A \bullet  M |$.
For an $n \in \N$, we use $[n]$ to denote the set $\{1,\ldots,n\}$.
For a set $S \subseteq \R^d$ and a function $f$, we also define the set function notation $f(S)$ as $\{f(x) \, | \, x \in S\}$.

\paragraph{Coordinate-wise Median-of-Means}
We use the coordinate-wise median-of-means algorithm to robustly obtain a preliminary mean estimate, with guarantees captured by the following fact.

\begin{fact}
\label{fact:co-MoM}
The coordinate-wise median-of-means algorithm satisfies the following guarantee: given the corruption parameter $\eps$, failure probability $\tau$, and a set $T$ of $n$ many $\eps$-corrupted samples from a distribution $D$ with mean $\mu$ and axis-wise variance $\Exp_{X \sim D}[(X_j - \mu_j)^2] \le \sigma^2$ for all $j \in [d]$, then with probability at least $1-\tau$ over the sample set $T$, the output of the algorithm $\muhat$ is such that $\|\muhat - \mu\|_{\infty} \le \sigma O(\sqrt{\eps} + \sqrt{(\log (d/\tau))/n})$.
\end{fact}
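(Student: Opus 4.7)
My plan is to analyze the standard coordinate-wise median-of-means estimator and obtain the $\ell_\infty$ guarantee by a union bound over the $d$ coordinates. I will partition the sample set $T$ into $B$ equal-sized blocks, and let $\muhat_j$ be the median of the $B$ block-means of the $j$-th coordinate, where the free parameter $B$ will be chosen of order $\eps n + \log(d/\tau)$. This choice is what causes the two summands in the claimed rate to appear: adversarial corruption will contribute the $\sqrt{\eps}$ term, while finite-sample Chebyshev/Chernoff concentration will contribute the $\sqrt{\log(d/\tau)/n}$ term.

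Fix a coordinate $j$. For a block of size $n/B$ of clean samples, the block-mean has expectation $\mu_j$ and variance at most $\sigma^2 B/n$, so Chebyshev's inequality places it in the interval $I_j := [\mu_j - r, \mu_j + r]$, where $r := C\sigma\sqrt{B/n}$, with probability at least $1 - 1/C^2$, which I will make at least $9/10$ by taking $C$ a large constant. The indicator of the complementary event is then a Bernoulli with mean at most $1/10$ for each of the $B$ blocks independently, so a Chernoff bound gives that at most $B/5$ of the clean block-means fall outside $I_j$, except with probability $\exp(-\Omega(B))$. Choosing $B \ge C'\log(d/\tau)$ and taking a union bound over $j \in [d]$ then gives, with probability $\ge 1 - \tau$, that in every coordinate simultaneously at most $B/5$ of the clean block-means fall outside the associated interval.

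Finally, since the adversary replaces at most $\lceil \eps n \rceil$ samples, at most $\lceil \eps n \rceil$ blocks can contain any corrupted sample; taking $B \ge 5\eps n + 5$ forces this count to be at most $B/5$ as well. Combining the two bounds, on the good event the total number of blocks whose means lie outside $I_j$ is at most $2B/5 < B/2$ for every $j$. Since the median of $B$ real numbers lies in any interval containing strictly more than $B/2$ of them, I conclude $|\muhat_j - \mu_j| \le r$ for every $j$, which on taking the $\ell_\infty$ norm yields $\|\muhat - \mu\|_\infty \le r = O(\sigma(\sqrt{\eps} + \sqrt{\log(d/\tau)/n}))$, as desired. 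The only subtlety is tuning the constants so that both ``bad-block'' counts sit strictly below $B/2$; beyond this routine bookkeeping the argument is the standard median-of-means analysis lifted coordinate-wise, and I do not anticipate a genuine obstacle.
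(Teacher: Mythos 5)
Your proof is correct: it is the standard coordinate-wise median-of-means analysis (Chebyshev on clean block-means, Chernoff plus a union bound over the $d$ coordinates, and the observation that at most $\lceil \eps n\rceil$ blocks can be touched by the adversary), with the block count $B \asymp \eps n + \log(d/\tau)$ correctly producing both terms in the rate. The paper itself states this as a known fact and provides no proof, so there is nothing to compare against; your argument fills that gap in the expected way.
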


\paragraph{Median-of-Means Pre-Processing}
 Another standard technique we use in this paper is the median-of-means pre-processing, which is a distinct technique from the coordinate-wise median-of-means algorithm mentioned right above.
Recall that in \Cref{thm:ourResult}, the asymptotic error term is $\sqrt{\eps}$, which tends to 0 as the corruption parameter $\eps \to 0$.
The following pre-processing step allows us to reduce the problem from the $\eps \to 0$ case to a constant $\eps$ case:
Let $T$ be the input $\eps$-corrupted set of samples.
Split the samples $T$ randomly into $g$ equally-sized groups of size $m = n/g$ where $g = 100 \eps n$, and replace each group by the sample mean of the group.
Let $T_{\mathrm{grouped}}$ be this new set of points.
It is easy to check that at most $0.01$-fraction of $T_{\mathrm{grouped}}$ can be corrupted by outliers.
The effects of this pre-processing is captured by the following \Cref{fact:MoM}, which we prove for completeness in \Cref{sec:fact-mom}.

\begin{restatable}[Median-of-Means Pre-Processing]{fact}{FactMoMPreProcessing}
\label{fact:MoM}
Suppose there is an efficient algorithm such that, on input {$\sigma \in \R_+$} and a $0.01$-corrupted set of $n \gg k^2\log d + \log(1/\tau)$ samples from a distribution $D$ with mean $\mu$ and covariance $\Sigma$ with $\|\Sigma\|_{\Chi_k} \leq \sigma^2$ and $\Exp_{X \sim D}[(X_j - \mu_j)^4] = O(\sigma^4)$ for each coordinate $j \in [d]$, returns $\muhat$ such that $\|\muhat - \mu\|_{2,k} \le O(\sigma)$ with probability at least $1-\tau$.

Then, there is an efficient algorithm such that, on {input $\eps \in (0,0.01] $} and an $\eps$-corrupted set of $n \gg (k^2\log d + \log (1/\tau))/\eps$ samples from a distribution with mean $\mu$ and covariance $\Sigma$, satisfying (i)  $\|\Sigma\|_{\Chi_k}  \leq 1$ and (ii) $\Exp_{X \sim D}[(X_j - \mu_j)^4] = O(1)$ for every coordinate $j \in [d]$, returns a mean estimate $\muhat$ such that $\|\muhat - \mu\|_{2,k} \le O(\sqrt{\eps})$ with probability at least $1-\tau$.
\end{restatable}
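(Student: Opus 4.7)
The plan is to execute the median-of-means pre-processing described immediately before the statement and then invoke the hypothesized input algorithm on the resulting set. Given an $\eps$-corrupted sample $T$ of size $n$, partition $T$ into $g := 100 \eps n$ groups of size $m := n/g = 1/(100\eps)$, replace each group by its sample mean to form $T_{\mathrm{grouped}}$, and feed $T_{\mathrm{grouped}}$ to the input algorithm. The first bookkeeping step is that each of the at most $\lceil \eps n \rceil$ adversarial replacements affects at most one group mean, so at most an $\lceil \eps n \rceil / g \leq 0.01$ fraction of $T_{\mathrm{grouped}}$ is contaminated; equivalently, $T_{\mathrm{grouped}}$ is a $0.01$-corruption of $g$ i.i.d.\ samples from the distribution $D'$ of the sample mean of $m$ i.i.d.\ draws from $D$.

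The core step is to verify that $D'$ meets the input algorithm's hypotheses with scale parameter $\sigma^2 := 1/m = 100\eps$. The mean of $D'$ is $\mu$ and the covariance of $D'$ is $\Sigma/m$, so by positive homogeneity of the norm,
\[
\|\mathrm{Cov}(D')\|_{\Chi_k} = \frac{1}{m}\|\Sigma\|_{\Chi_k} \leq \frac{1}{m} = \sigma^2.
\]
For the axis-wise fourth moment, fix a coordinate $j$ and let $Z_1,\ldots,Z_m$ be i.i.d.\ copies of $X_j - \mu_j$ where $X \sim D$. By hypothesis, $\E Z_i = 0$ and $\E Z_i^4 = O(1)$; moreover, $\E Z_i^2 = \Sigma_{jj} \leq \|\Sigma\|_{\Chi_k} \leq 1$ since $e_j e_j^\top$ lies in $\Chi_k$. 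Expanding $(Y_j - \mu_j)^4 = (m^{-1}\sum_i Z_i)^4$ and discarding the terms containing any $Z_i$ to the first power (which vanish in expectation by independence and mean-zero) yields
\[
\E\big[(Y_j - \mu_j)^4\big] = \frac{m\,\E Z_1^4 + 3m(m-1)\,(\E Z_1^2)^2}{m^4} = O(m^{-2}) = O(\sigma^4),
\]
matching the input algorithm's axis-wise $4^{\text{th}}$ moment hypothesis exactly.

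Finally, the input algorithm's sample-size requirement $g \gg k^2 \log d + \log(1/\tau)$ translates directly to $n \gg (k^2 \log d + \log(1/\tau))/\eps$, which is the hypothesis of the target statement. Invoking the input algorithm on $T_{\mathrm{grouped}}$ therefore returns, with probability at least $1 - \tau$, an estimate $\muhat$ satisfying $\|\muhat - \mu\|_{2,k} = O(\sigma) = O(\sqrt{\eps})$, as required. No step poses a real obstacle: the only non-trivial calculation is the fourth-moment propagation for the group-mean distribution displayed above, and the remainder is simply corruption-budget and sample-size bookkeeping.
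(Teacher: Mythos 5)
Your proposal is correct and follows essentially the same route as the paper: median-of-means grouping with $g = 100\eps n$ groups, corruption-budget bookkeeping, and verification of the input algorithm's hypotheses for the group-mean distribution at scale $\sigma^2 = 1/m$. The only (cosmetic) difference is that the paper bounds the fourth moment of the group means via the Marcinkiewicz--Zygmund inequality, whereas you carry out the direct multinomial expansion $\E[(m^{-1}\sum_i Z_i)^4] = (m\,\E Z_1^4 + 3m(m-1)(\E Z_1^2)^2)/m^4 = O(m^{-2})$, which is equally valid and arguably more self-contained.
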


\section{Truncation Pre-Processing} %
\label{sec:truncation}

The general approach of using a stability-based filtering algorithm 
for robust mean estimation is to show that, given sufficiently many samples, 
there exists a large (say $1-O(\eps)$ fraction) subset of the samples 
that are stable with respect to the true mean $\mu$.
However, the following simple example shows that it is not possible 
for i.i.d.\ samples drawn from a heavy-tailed distribution 
to satisfy the sparse stability condition using sample size of $\poly(\log d)$. 
\begin{restatable}{example}{ExmTruncNecessary}
\label{ex:HeavyTailed}
For any number of moments $t \ge 2$, there is a distribution $X$ satisfying the following conditions:
  (i)
  The mean of $X$ is $0$, and for every unit vector $v$, the $t^\text{th}$ moment in direction $v$ is upper bounded by 1, that is, $\Exp[|\langle v,  x\rangle |^t] \le 1$ for $t \geq 2$,
  (ii)
  If $S$ is an {arbitrary} set of $n \le o(d^{2/t})$ points from the support of $X$, then the set $S$ cannot be $(\epsilon,O(\sqrt{\epsilon}),k)$-stable, for any $\eps>0$, with respect to the mean of the distribution.
  As a corollary, no subset of $S$ can be stable either.
\end{restatable}
\begin{proof} 
For $j \in [d]$, let $e_j$ be the vector that is $1$ on the $j$-th coordinate and $0$ otherwise.
For a fixed $r$, consider the distribution $P$, supported uniformly on the $2d$ points $S = \{ \pm re_1,\pm re_2,\dots, \pm re_d \} $.

It follows that $P$ is a zero mean distribution. The covariance of the distribution $P$ is $\sum_j(1/d)r^2 e_ie_i^\top  = (r^2/d)I$. Furthermore, for any unit vector $v$ and $t \geq 2$, 
we have that the $t$-th moment in the direction $v$ is bounded as follows:
\begin{align*}
    \E[ |v \cdot X|^{t}] = \sum _{j=1}^d \frac{1}{d} |v_j|^{t} r^{t} = \frac{r^t}{d} \|v\|_t^t \leq \frac{r^t}{d} \|v\|_2^t \leq \frac{r^t}{d} \;,
\end{align*}
where we use that $t \geq 2$ and $\|v\|_t \leq \|v\|_2$ for any vector $v$.
Thus, we choose $r = d^{1/t}$ for the distribution.

Now we show the second claim, that \emph{any} set of at most $\Omega(d^{2/t})$ samples from this distribution cannot be stable.
Let $S$ be any (multi-)set of $n$ points from the support of $X$.
Let $x_1 \in S$. Since $x_1$ is $1$-sparse and has $\ell_2$ norm $r$, 
we have that $x_1x_1^\top /r^2$ belongs to $\Chi_k$. 
Thus, we have the following: 
\begin{align*}
    \left\|\frac{1}{n}\sum_{i \in S'} x_ix_i^\top  \right \|_{\Chi_k} \geq \left\langle \frac{1}{n}\sum_{i \in S'} x_ix_i^\top  , \frac{1}{r^2} x_1x_1^\top \right\rangle \geq \frac{\|x_1\|^4}{r^2n} = \frac{r^2}{n} \;.
\end{align*}
Therefore, for $r^2/n$ to be upper bounded by a constant, $n$ has to be $\Omega(d^{2/t})$.
\end{proof}

Consequently, if we want to perform robust sparse mean estimation 
using $\poly(k,\log d)$ samples, we need to modify the algorithm.
Our approach is to perform an initial truncation of the samples 
before using a stability-based robust mean estimator.
A balance needs to be struck in order to truncate sufficiently 
for stability to hold (with high probability over the samples), 
but also to truncate mildly enough such that the mean (and covariance) 
of the truncated distribution does not shift too much.

For a scalar $a \in \R_+$ and a vector $b \in \R^d$, let $h_{a,b} : \R^d \to \R^d$ be the following thresholding function:
    \begin{align}
    \label{eq:hDefn}
\forall i \in [d], \,\,\,\,\,\,\,    h_{a, b}(x)_i = \begin{cases} x_i, & \text{ if } |x_i - b_i| \leq a \\                    b_i + a& \text{ if } x_i - b_i \geq a\\
    b_i - a& \text{ if } x_i - b_i \leq - a\end{cases}.
    \end{align}
Note that $h_{a,b}(x)$ projects the point $x$ to the $\ell_\infty$ ball of radius $a$ around $b$.

As explained in the Introduction, truncation in general rotates a point about the true mean, and thus can in fact cause the covariance of the distribution to grow in certain directions.
The following lemma captures the fact that, if we make the further mild assumption that the distribution has bounded $4^\text{th}$ moment along all the axis directions, then we will at least be able to preserve the $\Chi_k$ norm of the covariance matrix.
The proof of \Cref{lem:truncationInfBd} is in \Cref{app:TruncPreserve}.

\begin{restatable}[Truncation in $\ell_\infty$]{lemma}{LemTruncLInftyNorm} 
\label{lem:truncationInfBd}
Let $P$ be a distribution over $\R^d$ with mean $\mu_P$ and covariance $\Sigma_P$, with $\|\Sigma\|_{\Chi_k} \leq \sigma^2$ for some $\sigma^2 > 0$. Let $X \sim P$ and assume that for all $j \in [d]$, $\E[(X - \mu_P)_j^4] \leq \sigma^4 \nu^4 $ for some $\nu \geq 1$.
Let $b \in \R^d$ be such that $\|b - \mu\|_\infty \leq a/2$ and $a := 2\sigma\sqrt{k/\eps}  $ for some $\eps \in (0,1)$.
Define $Q$ to be the distribution of $Y:= h_{a,b}(X)$. Let the mean and covariance of $Q$ be $\mu_Q$ and $\Sigma_Q$ respectively.
Then the following hold:
\begin{enumerate}[label=(\arabic*)]
   \item $\|\mu_P - \mu_Q\|_{\infty} \leq \sigma \sqrt{\epsilon/k}$
   \item $\|\mu_P - \mu_Q\|_{2, k} \leq \sigma \sqrt{\epsilon}$
   \item $\|\Sigma_P - \Sigma_Q\|_{\Chi_k} \leq 3 \sigma^2 \eps \nu^4$
   \item For all $i\in [d]$, $\E[ (Y - \mu_Q)_i^4] \leq 8 \nu^4 \sigma^4$
   \item $\|Y - \mu_Q\|_\infty \leq 2a = 4\sigma\sqrt{k/\epsilon}$ almost surely.
 \end{enumerate} 
\end{restatable}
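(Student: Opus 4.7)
The plan is to handle the five conclusions essentially independently. Parts (5), (1), (2), (4) are direct consequences of the truncation structure together with the axis-wise $4^\text{th}$ moment assumption. Part (5) is immediate: $\|Y - b\|_\infty \le a$ almost surely by definition of $h_{a,b}$, so $\mu_Q$ lies in the same $\ell_\infty$ ball about $b$ and $\|Y - \mu_Q\|_\infty \le 2a$. For (1), let $D := Y - X$, which is supported on $\chi_i := \1(|X_i - b_i| > a)$, on which $|D_i| \le |X_i - b_i|$ and also $|X_i - \mu_{P,i}| > a/2$ (using $\|b - \mu_P\|_\infty \le a/2$). Hence
\[
|\mu_{Q,i} - \mu_{P,i}| = |\E[D_i]| \le \E\bigl[|X_i - \mu_{P,i}|\, \1(|X_i - \mu_{P,i}| > a/2)\bigr] \le \frac{2\,\E[(X_i - \mu_{P,i})^2]}{a} \le \frac{2\sigma^2}{a} = \sigma\sqrt{\eps/k},
\]
using that $e_i e_i^\top \in \Chi_k$ gives $\Var(X_i) \le \sigma^2$. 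Part (2) then follows from $\|v\|_{2,k} \le \sqrt{k}\,\|v\|_\infty$. For (4), a case analysis on the three branches of $h_{a,b}$ shows that truncation only moves $X_i$ toward $b_i$, and since $b_i$ is within $a/2$ of $\mu_{P,i}$ the truncated value stays on the same side of $\mu_{P,i}$ as $X_i$ and no farther away; hence $|Y_i - \mu_{P,i}| \le |X_i - \mu_{P,i}|$ pointwise. Minkowski's inequality in $L^4$ then gives $\|Y_i - \mu_{Q,i}\|_{L^4} \le \|Y_i - \mu_{P,i}\|_{L^4} + |\mu_{P,i} - \mu_{Q,i}| \le \sigma\nu + \sigma\sqrt{\eps/k}$, yielding (4) up to a universal constant upon raising to the fourth power.

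The heart of the argument is part (3). Decompose $Y - \mu_Q = X' + \tilde D$ where $X' := X - \mu_P$ and $\tilde D := D - \E[D]$, giving $\Sigma_Q - \Sigma_P = \E[\tilde D \tilde D^\top] + \E[X' \tilde D^\top] + \E[\tilde D (X')^\top]$. The plan is to bound the $\Chi_k$-norm \emph{entrywise}: since any $M \in \Chi_k$ satisfies $\|M\|_1 \le k$, it suffices to show every $(\Sigma_Q - \Sigma_P)_{ij}$ is $O(\sigma^2 \nu^4 \eps/k)$, which then yields $|M \bullet (\Sigma_Q - \Sigma_P)| \le \|M\|_1 \cdot O(\sigma^2\nu^4\eps/k) = O(\sigma^2\nu^4\eps)$. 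The $4^\text{th}$-moment tail estimate $\E[D_j^2] \le \E[(X_j-\mu_{P,j})^2\, \1(|X_j-\mu_{P,j}|>a/2)] \le 4\sigma^4\nu^4/a^2 = \sigma^2\nu^4\eps/k$ handles the $\E[\tilde D \tilde D^\top]$ piece directly, via Cauchy-Schwarz on $\E[\tilde D_i \tilde D_j]$. For the cross term, since $\E[X'] = 0$ we may replace $\tilde D$ by $D$ and write $\E[X'_i D_j] = \E[X'_i D_j \chi_j]$; two successive Cauchy-Schwarz applications, combined with Markov's estimate $\Pr(\chi_j = 1) \le 16\sigma^4\nu^4/a^4 = \nu^4\eps^2/k^2$, give $\E[(X'_i)^2 \chi_j] \le \sqrt{\E[(X'_i)^4]\,\Pr(\chi_j)} \le \sigma^2\nu^4\eps/k$, whence $|\E[X'_i D_j]| \le \sqrt{\E[(X'_i)^2\chi_j]\,\E[D_j^2]} \le \sigma^2\nu^4\eps/k$.

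The main obstacle is precisely this cross term in the off-diagonal case. A naive single Cauchy-Schwarz would give only $|\E[X'_i D_j]| \le \sqrt{\Var(X_i)\,\E[D_j^2]} = O(\sigma^2\nu^2\sqrt{\eps/k})$, off by a factor of $\sqrt{k/\eps}$, which would produce an unacceptable $\sqrt{\eps}$ dependence in the final $\Chi_k$-norm bound and nullify the benefit of the truncation step for downstream stability. The fix exploits that $D_j$ vanishes outside the rare event $\chi_j$: using the axis-wise $4^\text{th}$ moment on $X'_i$ together with a second Cauchy-Schwarz tightens $\E[(X'_i)^2 \chi_j]$ by the extra factor $\sqrt{\Pr(\chi_j)}$, restoring the correct $\eps/k$ entrywise scaling and hence the claimed $O(\sigma^2\nu^4\eps)$ bound on the $\Chi_k$-norm.
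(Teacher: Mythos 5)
Your proof is correct and follows essentially the same route as the paper's: the same pointwise contraction relations $|Y_i-\mu_{P,i}|\le|X_i-\mu_{P,i}|$ and $|D_i|\le|X_i-\mu_{P,i}|$, the same fourth-moment tail bound $\Pr(\chi_j)\le 16\sigma^4\nu^4/a^4$ feeding a double Cauchy--Schwarz, and the same entrywise-then-$\|M\|_1\le k$ step for the $\Chi_k$ bound (your part-(3) decomposition centered at $\mu_Q$ is just a cosmetic reorganization of the paper's comparison of $\E[XX^\top]$ and $\E[YY^\top]$ plus a rank-one correction). The only deviation is the universal constant in part (4) ($16$ rather than $8$), which is immaterial since the bound is only used as $O(\sigma^4\nu^4)$ downstream.
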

In \Cref{lem:truncationInfBd} above, $b$ represents the initial mean estimate, and $\tilde{\mu}$  will be obtained by \Cref{fact:co-MoM}. 

\section{Algorithm and Analysis}
\label{sec:alg_analysis}

The high-level algorithm we propose is stated as follows.

\begin{algorithm}[H]
\caption{Robust Sparse Mean Estimation with High Probability}
\label{alg:main}
\begin{enumerate}
    \item Input: An $\eps$-corrupted sample set $T \subseteq \Real^d$ of size $n$

    \item Median-of-Means pre-processing: Group points in $T$ into $g$ groups, each of size $m = n/g$, where $g =  100 \eps n$, and take the sample mean of a group to be a new point. Call these new points $T_{\mathrm{grouped}}$.
    \item Define $\sigma = 1/\sqrt{m}$.
    \item Compute coordinate-wise median-of-means estimate $\tilde{\mu}$ from \Cref{fact:co-MoM} with corruption parameter $0.01$ and failure probability $\tau/2$, using the set of points $T_{\mathrm{grouped}}$. %
    \item Truncate all points in $T_{\mathrm{grouped}}$ to within $B_{\infty}(\tilde{\mu}, 4\sigma\sqrt{k})$, namely, given a point $x$, we replace it with the point $h_{4\sigma\sqrt{k},\tilde{\mu}}(x)$, where $h_{a,b}$ is defined in \Cref{eq:hDefn}.
    \item Run the stability-based robust \emph{sparse} mean estimator from \Cref{fact:SDPAlg} on the truncated samples, i.e.~$\left\{h_{4\sigma \sqrt{k},\tilde{\mu}}(x) \mid x \in T_{\mathrm{grouped}}\right\}$.
\end{enumerate}
\end{algorithm}

We note that this algorithm is shift and scale invariant, 
based on the same invariance of the median-of-means pre-processing 
as well as the invariance of the robust sparse mean estimator from \Cref{fact:SDPAlg}.

We will now prove that \Cref{alg:main} satisfies the guarantees of \Cref{thm:ourResult} 
and its stronger version, \Cref{thm:ourResult_Chi_k} stated below.
Our analysis crucially uses \Cref{thm:finalStable}, which states that, 
with high probability, there exists a set consisting of most 
of the truncated samples that is stable with respect to some point 
close to the true mean in $\ell_\infty$ norm.
This is the main structural result of our paper.

\begin{restatable}{theorem}{ThmFinalStable}
\label{thm:finalStable}
Let $S$ be a set of $n$ i.i.d.\ data points from a distribution $P$ over $\R^d$, 
and let $T$ be a $0.01$-corruption of $S$.
Let $\tilde{\mu}$ be the coordinate-wise median-of-means estimate computed from set $T$.
Let the mean of $P$ be $\mu$ and covariance $\Sigma$ such that $\|\Sigma\|_{\Chi_k} \le \sigma^2$, and for all $i \in [d]$, $\E[X_i^4] \le O(\sigma^4)$.
Suppose that $n = \Omega(k^2 \log d + \log(1/\tau))$. 
Let $a = \sigma\sqrt{k}$.
With probability $1- \tau$ over $S$, for all $T$ 
we have that there exists a subset $S' \subseteq T$ 
with $|S'| \ge 0.95 n$ such that $h_{a,\tilde{\mu}}(S')$ 
is $(0.01, O(1) ,k)$-stable with respect to some $\mu'$ 
and $\sigma$ with $\|\mu'-\mu\|_{\infty} \le O(\sigma/\sqrt{k})$.
\end{restatable}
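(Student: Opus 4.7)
The plan is to take $S' := S \cap T$ (which has size at least $0.99n \ge 0.95n$ since $T$ is a $0.01$-corruption of $S$) and to take $\mu'$ to be the mean of the population-truncated distribution $Q$ obtained from $P$ via $h_{a,\tilde\mu}$. The target conclusion then reduces to checking that $\|\mu'-\mu\|_\infty = O(\sigma/\sqrt k)$ (the population-level displacement) and verifying $(0.01, O(1), k)$-stability of the truncated clean samples $h_{a,\tilde\mu}(S')$ with respect to $\mu'$ and $\sigma$.

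Concretely, I would proceed in four steps. First, I would invoke \Cref{fact:co-MoM} with failure probability $\tau/2$: since $n \gg \log(d/\tau)$, this yields $\|\tilde\mu - \mu\|_\infty = O(\sigma)$ with probability $\ge 1 - \tau/2$ over $S$, uniformly over all $0.01$-corruptions $T$. For $k$ at least a universal constant, this satisfies the precondition $\|\tilde\mu - \mu\|_\infty \le a/2$ needed to apply \Cref{lem:truncationInfBd}. Second, applying \Cref{lem:truncationInfBd} at center $b = \tilde\mu$ yields $Q$ with mean $\mu'$ satisfying $\|\mu'-\mu\|_\infty = O(\sigma/\sqrt k)$, with $\|\Sigma_Q\|_{\Chi_k} = O(\sigma^2)$, with axis-wise fourth moment $O(\sigma^4)$, and with $\|Y-\mu'\|_\infty = O(\sigma\sqrt k)$ almost surely for $Y \sim Q$. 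Third, I would establish empirical concentration of the truncated clean samples: with probability $\ge 1 - \tau/2$ over $S$, uniformly over $\tilde\mu$ in an $\ell_\infty$-ball of radius $O(\sigma)$ around $\mu$ and over subsets $S''$ of size $\ge 0.99|S'|$, the empirical mean $\mu_{S''}$ is within $O(\sigma)$ of $\mu'$ in $\ell_{2,k}$, and the empirical second moment $\overline\Sigma_{S''}$ about $\mu'$ is within $O(\sigma^2)$ of $\sigma^2 I$ in $\Chi_k$ norm (the latter using the triangle inequality together with $\|\Sigma_Q\|_{\Chi_k} = O(\sigma^2)$ and $\|\sigma^2 I\|_{\Chi_k} = \sigma^2$). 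This is precisely the $(0.01, O(1), k)$-stability of \Cref{def:stab-sparse}, with the covariance clause using $\sigma^2\delta^2/\epsilon = O(\sigma^2)$ for $\delta = O(1)$ and $\epsilon = 0.01$. Fourth, the "for all $T$" requirement is handled by a covering argument over $\tilde\mu$: since $\tilde\mu \mapsto h_{a,\tilde\mu}(x)$ is $1$-Lipschitz in $\ell_\infty$, an $O(\sigma)$-net of the $\ell_\infty$-ball around $\mu$ suffices and contributes only a harmless factor.

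The main obstacle is the covariance concentration $\sup_{M \in \Chi_k}\bigl|M \bullet (\overline\Sigma_{S''} - \Sigma_Q)\bigr| = O(\sigma^2)$ holding simultaneously over every subset $S''$ of size $\ge 0.99|S'|$, with the target additive sample complexity $n \gg k^2\log d + \log(1/\tau)$. A naive approach would take an $\epsilon$-net over rank-$1$ matrices $vv^\top$ for $k$-sparse unit $v$ and apply Bernstein per net point; because $\|Y-\mu'\|_\infty = O(\sigma\sqrt k)$ while the axis-wise fourth moment is only $O(\sigma^4)$, Bernstein's tail term is dominated by the (large) $\ell_\infty$ bound rather than the variance, producing a multiplicative $k^2 \log d \cdot \log(1/\tau)$ dependence instead of the target additive one. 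Moreover, even if one had rank-$1$ concentration, the SVD of a matrix in $\Chi_k$ does not yield sparse components, so the deterministic lift from rank-$1$ to general PSD that works in the non-sparse heavy-tailed literature is unavailable here. Following \Cref{sec:our-approach}, the remedy is to design a non-spectral decomposition of each $M \in \Chi_k$ into $k^2$-sparse PSD summands, built directly from the $\ell_1$-sparsity structure in the spirit of Li's matrix-norm results. For each $k^2$-sparse summand, Bernstein concentration is driven by the bounded variance supplied by \Cref{lem:truncationInfBd}(4), producing the additive $\log(1/\tau)$, while the $\epsilon$-cover over $k^2$-sparse PSD matrices with bounded trace contributes only the $k^2 \log d$ term. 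The delicate piece is that the decomposition depends on $M$, so per-component concentration must be combined via a deterministic lifting argument showing that simultaneous control of all $k^2$-sparse summands implies control of the full supremum over $\Chi_k$; this is the technical analogue, in the sparse setting, of the SVD-based rank-$1$-to-PSD lift used in dense heavy-tailed mean estimation.
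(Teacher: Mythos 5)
There is a genuine gap, and it sits exactly at the step that distinguishes \Cref{thm:finalStable} from the fixed-center result: your handling of the data-dependent truncation center $\tilde\mu$. You propose to get uniformity over $\tilde\mu$ by taking an $O(\sigma)$-net of the $\ell_\infty$-ball of radius $O(\sigma)$ around $\mu$ and calling the union bound a ``harmless factor.'' It is not: a constant-resolution net of an $\ell_\infty$-ball in $\R^d$ has $2^{\Theta(d)}$ points, so the union bound costs $\Omega(d)$ samples and destroys the $k^2\log d$ sample complexity that is the entire point of the theorem. Worse, even granting the net, the per-sample $1$-Lipschitzness of $b \mapsto h_{a,b}(x)$ does not give a usable modulus of continuity for the quantity you actually need to control: moving the center by $\eta$ perturbs each entry of $(h_{a,b}(x_i)-\mu')(h_{a,b}(x_i)-\mu')^\top$ by up to $O(a\eta)$ for every sample whose coordinate lies near the truncation boundary, and the naive bound on the resulting change in $\Chi_k$-norm is $O(k\cdot a\cdot\eta) = O(\sigma^2 k^{1.5})$ for $\eta = O(\sigma)$ --- far too large. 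The paper's route (\Cref{lem:LipschitzSmooth}) makes the covariance genuinely Lipschitz in $b$ \emph{only after} intersecting with a ``coordinate-wise regular'' subset in which, for every coordinate $j$, at most $n/k^{1.5}$ samples have $|x_{i,j}-\mu_j| \ge a/2$; this kills the extra $k^{1.5}$ factor. Establishing that regular subset exists with the additive $\log(1/\tau)$ dependence is itself nontrivial (\Cref{lem:lpexpected-smoothness}: the direct Chernoff route gives multiplicative $\log(1/\tau)$, and discarding all ``bad'' samples can discard everything when $d \gg k^2$), and the paper resolves it with an LP-relaxation/Farkas-duality/randomized-rounding argument that your proposal does not anticipate. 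A related, fixable issue: you should take $\mu'$ to be the mean of $h_{a,\mu}(P)$ (truncation at the \emph{true} mean), not of $h_{a,\tilde\mu}(P)$, so that the reference point is deterministic and does not move as the truncation center varies; this is what the Lipschitzness lemma is stated against.

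On the core concentration step, your high-level sketch (reduce $\Chi_k$ to $k^2$-sparse matrices because SVD destroys sparsity, and aim for additive $\log(1/\tau)$) matches the paper's intent for \Cref{thm:stabHighProb}, but the mechanism you describe --- a deterministic decomposition of each $M \in \Chi_k$ into $k^2$-sparse PSD summands with Bernstein applied per summand --- is not what the paper does and is not obviously executable (it is unclear such a decomposition with controlled Frobenius norms exists). The paper instead (i) uses a minimax swap plus \Cref{Prop:stabilityMoMCov} to reduce stability to the statement that for every $M \in \Chi_k$ at most a $0.01$ fraction of samples have $x_i^\top M x_i$ large, (ii) proves uniform convergence of these indicator counts over the class of $k^2$-sparse, Frobenius-bounded matrices via a VC bound (which is where the additive $\log(1/\tau)$ comes from), and (iii) argues by contradiction: from any violating $M$ it builds a \emph{random sparsification} $Q$ of $M$ and shows, via Markov/Chebyshev for membership in the sparse class and Berry--Esseen plus Paley--Zygmund for the violation being inherited, that some sparse matrix violates the uniform-convergence event. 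You would need to either carry out this probabilistic-method argument or supply a genuinely different decomposition-plus-lift; as written, that part of your plan is a placeholder rather than a proof.
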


We show \Cref{thm:finalStable} in two steps.
First, we show a simpler, analogous stability result assuming that we truncate with respect to the true mean vector $\mu$ instead of the coordinate-wise median-of-means estimate $\tilde{\mu}$ as in \Cref{alg:main} and \Cref{thm:finalStable}.
This is stated as \Cref{thm:stabHighProb} and proved in \Cref{sec:stability_after_removing_points_additive_dependence_on_}.
Then, in \Cref{app:stability-smoothness}, we show a ``Lipschitzness'' argument that lets us conclude \Cref{thm:finalStable}.
The final proof of \Cref{thm:finalStable} is given in \Cref{sec:proofoffinalstable}.

\begin{theorem}[Main Result, Strong Version]
\label{thm:ourResult_Chi_k}
 Let $\epsilon \in (0, \epsilon_0)$ for small constant $\epsilon_0>0$. 
 Let $P$ be a multivariate distribution over $\R^d$, 
 where the mean and covariance of $P$ are $\mu$ and $\Sigma$ respectively.
 Suppose $\|\Sigma\|_{\Chi_k} \le 1$ and further suppose that 
 for all $j \in [d]$, $\E[(X_j-\mu_j)^4] = O(1)$.
  Then, there is an algorithm such that, on input 
  (i) the corruption parameter $\epsilon$, 
  (ii) the failure probability $\tau$, 
  (iii) the sparsity parameter $k$, and 
  (iv) $T$, an $\eps$-corrupted set of  
   $n \gg  (k^2\log d + \log(1/\tau))/\eps)$ i.i.d.~samples from $P$, 
   it outputs $\widehat{\mu}$ satisfying 
   $\|\widehat{\mu} -  \mu\|_{2,k}$ $= O(\sqrt{\eps})$ 
   with probability $1 - \tau$ in $\poly(n,d)$ time.
\end{theorem}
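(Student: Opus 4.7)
The plan is to prove Theorem~\ref{thm:ourResult_Chi_k} by chaining together the structural and algorithmic results already developed in the paper, with the core analysis being the (assumed) Theorem~\ref{thm:finalStable}. First, by invoking the median-of-means pre-processing result Fact~\ref{fact:MoM}, it suffices to construct an efficient estimator that, on an $0.01$-corrupted set of $n \gg k^2 \log d + \log(1/\tau)$ samples from a distribution satisfying $\|\Sigma\|_{\Chi_k} \le \sigma^2$ and axis-wise $4^{\text{th}}$ moment $O(\sigma^4)$, returns $\widehat{\mu}$ with $\|\widehat{\mu} - \mu\|_{2,k} = O(\sigma)$ with probability at least $1-\tau$. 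Hence the entire burden reduces to analyzing Algorithm~\ref{alg:main} starting from its coordinate-wise MoM step, on the constant-corruption regime.

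Next I would verify the steps of Algorithm~\ref{alg:main} in order. By Fact~\ref{fact:co-MoM}, applied to the $0.01$-corrupted sample set with failure probability $\tau/2$, the output $\tilde{\mu}$ satisfies $\|\tilde{\mu} - \mu\|_\infty = O(\sigma)$ with probability at least $1 - \tau/2$ (here the $O(\sqrt{\log(d/\tau)/n})$ term is absorbed since $n \gg k^2 \log d + \log(1/\tau)$). In particular, $\|\tilde{\mu} - \mu\|_\infty \le a/2$ for $a = 4\sigma\sqrt{k}$, so Lemma~\ref{lem:truncationInfBd} applies to the truncation map $h_{a, \tilde{\mu}}$. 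The key structural input is Theorem~\ref{thm:finalStable}: with probability at least $1 - \tau/2$ over the clean samples, there exists a subset $S' \subseteq T_{\mathrm{grouped}}$ of size at least $0.95 n$ such that $h_{a, \tilde{\mu}}(S')$ is $(0.01, O(1), k)$-stable with respect to some $\mu'$ satisfying $\|\mu' - \mu\|_\infty \le O(\sigma/\sqrt{k})$ and with scale parameter $\sigma$. By a union bound, both events hold with probability at least $1 - \tau$.

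Finally I would apply the stability-based algorithm of Fact~\ref{fact:SDPAlg} to the truncated sample set. Because $h_{a, \tilde{\mu}}(S')$ is stable with the required parameters and contains at least $0.99$ of the truncated set $h_{a, \tilde{\mu}}(T_{\mathrm{grouped}})$ (since $|S'| \ge 0.95 n$ and at most $0.01 n$ of $T_{\mathrm{grouped}}$ is corrupted), the algorithm outputs $\widehat{\mu}$ with $\|\widehat{\mu} - \mu'\|_{2,k} = O(\sigma)$. From $\|\mu' - \mu\|_\infty \le O(\sigma/\sqrt{k})$ we deduce
\[
\|\mu' - \mu\|_{2,k} \le \sqrt{k}\,\|\mu' - \mu\|_\infty = O(\sigma),
\]
and the triangle inequality in the $\ell_{2,k}$ norm gives $\|\widehat{\mu} - \mu\|_{2,k} = O(\sigma)$, which is the required guarantee for the reduced problem. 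Feeding this back into Fact~\ref{fact:MoM} (with $\sigma = 1/\sqrt{m}$ and $m \asymp 1/\eps$) yields the $O(\sqrt{\eps})$ bound on $\|\widehat{\mu}-\mu\|_{2,k}$ in the statement of Theorem~\ref{thm:ourResult_Chi_k}.

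The main obstacle in this plan is not the composition itself, which is routine, but ensuring that Theorem~\ref{thm:finalStable} supplies exactly the stability reference $\mu'$ needed for both the sample-complexity bound with additive $\log(1/\tau)$ dependence and the $\ell_\infty$-closeness to $\mu$; the whole pipeline is delicate because one must also confirm that $\mu'$ (rather than $\mu$) being the stability center still leaves the final $\ell_{2,k}$ error only $O(\sigma)$, which is where the $\sqrt{k}$ factor in the $\ell_\infty$-guarantee is critical. Beyond this, one should verify that the two key conditions $\|\Sigma\|_{\Chi_k} \le \sigma^2$ and axis-wise $4^{\text{th}}$ moment $O(\sigma^4)$ assumed in Theorem~\ref{thm:finalStable} are precisely what our hypotheses provide, so no further preprocessing is needed to match them.
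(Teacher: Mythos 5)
Your proposal is correct and follows essentially the same route as the paper's own proof: reduce to the constant-corruption regime via Fact~\ref{fact:MoM}, invoke \Cref{thm:finalStable} for the existence of a large stable subset of the truncated samples with respect to some $\mu'$ satisfying $\|\mu'-\mu\|_\infty \le O(\sigma/\sqrt{k})$, run the algorithm of Fact~\ref{fact:SDPAlg}, and finish with the $\ell_\infty$-to-$\ell_{2,k}$ conversion and the triangle inequality. The only cosmetic difference is that you surface the coordinate-wise median-of-means guarantee (Fact~\ref{fact:co-MoM}) explicitly in this proof, whereas the paper absorbs it into the statement and proof of \Cref{thm:finalStable}.
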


We note that, since the $\Chi_k$ norm of a covariance matrix is upper bounded by its maximum eigenvalue, \Cref{thm:ourResult} is an immediate corollary of \Cref{thm:ourResult_Chi_k}.

\begin{proof}[Proof of \Cref{thm:ourResult_Chi_k}]
Step 2 of \Cref{alg:main}, the median-of-means pre-processing, is exactly the same as the reduction in \Cref{fact:MoM}.
Thus, by \Cref{fact:MoM}, it suffices to show that, for every $\sigma > 0$, 
Steps 4--6 in \Cref{alg:main} yield an $O(\sigma)$ estimation error 
in $\ell_{2,k}$ norm when given $0.01$-corrupted samples from a distribution 
$D$ with covariance bounded by $\sigma^2$ in $\Chi_k$-norm 
and axis-wise 4th moment bounded by $O(\sigma^4)$.

\Cref{thm:finalStable} states that, with probability at least $1-\tau$, the samples after the processing of Step 5 are such that there exists a 95\% of the samples that form a $(0.1, O(1),k)$-stable subset with respect to some vector $\mu'$ and $\sigma$ with $\|\mu'-\mu\|_{\infty} \le O(\sigma/\sqrt{k})$.
\Cref{fact:SDPAlg} then guarantees that, on input such a set of samples, 
the routine we invoke in Step 6 of \Cref{alg:main} will return a mean 
estimate $\muhat$ such that $\|\muhat - \mu'\|_{2,k} \le O(\sigma)$.
Further, since $\|\mu'-\mu\|_{\infty} \le O(\sigma/\sqrt{k})$, 
we have that $\|\mu'-\mu\|_{2,k} \le O(\sigma)$, and therefore 
we can conclude via the triangle inequality 
that the mean estimate $\muhat$ satisfies $\|\muhat-\mu\|_{2,k} \le O(\sigma)$.
\end{proof}

\section{Stability After Removing Points: Additive dependence on $\log(1/\tau)$} %
\label{sec:stability_after_removing_points_additive_dependence_on_}

In this section, we give the core part of the argument (\Cref{thm:stabHighProb}) for the main stability result (\Cref{thm:finalStable}) in this paper.
Recall, via the median-of-means pre-processing, that we only need to consider the constant contamination case $(\eps = \Theta(1))$.
Thus, the goal is to show (\Cref{thm:finalStable}) that with high probability, after truncation according to the coordinate-wise median-of-means preliminary estimate, there exists a large subset of uncontaminated samples that is $(\Theta(1), O(1),k)$-stable with respect to (a vector close in $\ell_{2,k}$ norm of) the \emph{true mean} of the distribution as well as $\sigma$ where $\sigma^2 = \|\Sigma\|_{\Chi_k}$.

The key difference between Theorems~\ref{thm:stabHighProb} and~\ref{thm:finalStable} is that the former is a stability result that applies only to uncontaminated i.i.d.~samples truncated according to some \emph{fixed} vector close to the true mean.
On the other hand, the final stability result we require concerns samples truncated according to the coordinate-wise median-of-means estimate, which itself depends on the samples and is not fixed.
\Cref{app:stability-smoothness} shows the delicate argument going from \Cref{thm:stabHighProb} to \Cref{thm:finalStable}.

\begin{theorem} \label{thm:stabHighProb}
Let $S$ be a set of $n$ i.i.d.\ data points from a distribution $P$ over $\R^d$. 
Let the mean of $P$ be  $\mu$ and covariance $\Sigma$ 
such that $\|\Sigma\|_{\Chi_k} \le \sigma^2$, and for all $j \in [d]$, $\E[(X_j - \mu)^4] \le \nu^4$.
Suppose $P$ is supported over the set $ \{x: \|x - \mu\|_\infty \leq \sigma \times \InftyBd \times  \sqrt{k}\}$.
If $n = \Omega(k^2\log d + \log(1/\tau))$, then with probability $1 - \tau$ 
there exists a set $S' \subseteq S$ such that:
\begin{enumerate}
  \item $|S'| \geq 0.98 n$
  \item $S'$ is $(0.01, \delta ,k)$-stable with respect to $\mu$ and $\sigma$ where $\delta = O(\max(1, r^2, \nu^2/\sigma^2))$.
\end{enumerate}
\end{theorem}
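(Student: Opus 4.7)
The plan is to establish both parts of the stability condition (mean and covariance) for a set $S'=S\setminus R$, where $R$ is a small deterministic ``heavy-point'' removal set extracted from the realized sample. The argument reduces to uniform concentration over $\cU_k$ (for the mean) and $\cX_k$ (for the covariance), extended via a structural decomposition.

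\textbf{Step 1 (Covers and Bernstein).} Build an $\eta$-net $\cV$ of $\cU_k$ of log-cardinality $O(k\log(d/\eta))$ and an $\eta$-net $\cM$ of PSD matrices supported on a $k\times k$ principal submatrix with Frobenius norm at most $1$, of log-cardinality $O(k^2\log(d/\eta))$ (the $k^2$ coming from the $\binom{d}{k}$ sparsity patterns combined with an $\eta$-net on the $k^2$-dimensional block). For each $v\in\cV$ apply Bernstein to $\langle Y_i,v\rangle$ using $v^\top\Sigma v\le\sigma^2$ and the support bound $|\langle Y,v\rangle|\le\|Y\|_\infty\|v\|_1\le\sigma rk$. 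For each $M\in\cM$ apply Bernstein to $Y_i^\top MY_i$, bounding $\Var[Y^\top MY]=O(k^2\nu^4)$ by H\"older on the axis-wise fourth moments restricted to the $k\times k$ support, and $|Y^\top MY|=O(\sigma^2r^2k^2)$ from the support bound. Setting deviation levels $O(\sigma\delta)$ and $O(\sigma^2\delta^2)$ respectively and union-bounding over $\cV\cup\cM$ yields the desired event with probability $\ge 1-\tau$ once $n\gg k^2\log d+\log(1/\tau)$, with $\delta=O(\max(1,r^2,\nu^2/\sigma^2))$ chosen precisely to absorb the three Bernstein terms.

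\textbf{Step 2 (Decomposition of $\cX_k$ into $k^2$-sparse pieces).} The structural heart of the argument is to show that every $M\in\cX_k$ can be written as $M=\sum_j\alpha_j M_j$ with $\sum_j|\alpha_j|=O(1)$ and each $M_j$ a PSD matrix supported on a $k\times k$ principal submatrix with $\|M_j\|_F\le 1$, up to a residual whose trace-product against any $\overline{\Sigma}_S$ is negligible. This is the ``non-spectral'' decomposition highlighted in the introduction: the spectral decomposition of $M$ generically produces dense eigenvectors, so one must instead exploit the interaction of $\tr(M)=1$, $\|M\|_1\le k$, and $M\succeq 0$ to iteratively isolate principal $k\times k$ blocks that each carry a constant share of the trace, in the spirit of the matrix-norm constructions of Li~\cite{li18thesis}. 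Given the decomposition, the Step 1 event extends by linearity: for every $M\in\cX_k$, $|M\bullet(\overline{\Sigma}_S-\Sigma)|\le\sum_j|\alpha_j|\,|M_j\bullet(\overline{\Sigma}_S-\Sigma)| + (\text{residual}) = O(\sigma^2\delta^2)$.

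\textbf{Step 3 (Heavy-point removal and closure under further removal).} Define $R=\{i:\|Y_i\|_{2,k}>T\}$ with $T=\Theta(\sigma\delta)$; by the $\cV$-concentration of Step 1 together with a Markov-type argument, $|R|\le 0.02n$, so $|S'|=|S\setminus R|\ge 0.98n$. For any further subset $S''\subseteq S'$ with $|S''|\ge 0.99|S'|$, a direct reweighting calculation shows that removing up to a $0.01$-fraction of the remaining points shifts $\langle\mu_{S''},v\rangle$ by at most $O(\epsilon T)=O(\sigma\delta)$ and shifts $M\bullet\overline{\Sigma}_{S''}$ by at most $O(\epsilon T^2)=O(\sigma^2\delta^2)$ relative to $\overline{\Sigma}_{S'}$, since by definition of $R$ every remaining point satisfies $|\langle Y_i,v\rangle|\le T$ for all $v\in\cU_k$ and (via the bound $Y_i^\top M_j Y_i\le\|M_j\|_F\|Y_i\|_{2,k}^2\le T^2$ combined with Step 2's $O(1)$ coefficient sum) $|Y_i^\top MY_i|=O(T^2)$ for all $M\in\cX_k$. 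Combined with the full-sample concentration of Steps 1--2, this delivers the claimed $(0.01,\delta,k)$-stability of $S'$ with respect to $\mu$ and $\sigma$.

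The main obstacle is Step 2: the SVD-based reductions used in the dense-setting analogs of~\cite{DiaKP20,DepLec22} do not preserve sparsity, so the $k^2$-sparse decomposition must be built by hand while maintaining the bounded $\sum_j|\alpha_j|$. This is precisely what pins the sample complexity at $k^2\log d$ rather than the $k^4\log d$ that a naive two-level union bound (cover of $k$-sparse unit vectors plus cover of PSD matrices) would yield, and it is also what underlies the crucial additive $\log(1/\tau)$ dependence, since the bounded $\ell_1$-coefficient decomposition lets us avoid a multiplicative union-bound overhead when translating from $\cM$-concentration to $\cX_k$-concentration.
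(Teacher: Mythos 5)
There are two genuine gaps here, and they sit exactly where the paper's actual argument departs from yours.

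First, Step 1 does not close at $n = \Theta(k^2\log d)$. For a PSD matrix $M$ supported on a $k\times k$ block with $\|M\|_F\le 1$, the variance of $Y^\top MY$ under axis-wise fourth moment bounds is $\Theta(k^2\nu^4)$ in the worst case (your own H\"older bound gives $\nu^4\|M\|_1^2 \le k^2\nu^4$, and this is tight), and your cover has log-cardinality $\Theta(k\log d + k^2)$. Bernstein at that confidence level then gives a deviation of order $\sqrt{k^2\nu^4\cdot(k\log d+k^2)/n}$, which is $O(\sigma^2\delta^2)$ for constant $\delta$ only when $n \gtrsim k^3\log d + k^4$ --- not $k^2\log d$. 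This is precisely the obstruction that heavy tails create: one cannot afford uniform concentration of the empirical \emph{averages} $\frac1n\sum_i Y_i^\top MY_i$ over an exponentially large cover. The paper sidesteps this by never bounding these averages. It first reduces (via \Cref{Prop:stabilityMoMCov} and a minimax swap) to showing, for each $M\in\Chi_k$ separately, that at most a $1\%$ fraction of points have $x_i^\top Mx_i > B$; the only uniform statement it then needs is a VC bound over the \emph{bounded indicator} events $\1[x^\top Ax > s_1]$ for $A\in\cA_{k,P}$, which concentrates with $n = O(k^2\log d + \log(1/\tau))$ regardless of the tails.

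Second, Step 2 --- which you correctly identify as the structural heart --- is asserted rather than proved, and it is not the mechanism the paper uses; I also doubt it is true as stated. A dense rank-one $M=vv^\top\in\cX_k$ (with $\|v\|_2=1$, $\|v\|_1\le\sqrt{k}$, full support) cannot be written as a \emph{positive} combination of PSD matrices each supported on a $k\times k$ principal submatrix, since a sum of PSD matrices with positive weights has rank-one range only if every summand is a multiple of the same rank-one projector; and if you allow signed coefficients, the PSD property of the pieces (which your Step 3 relies on for the one-sided bound $Y_i^\top M_jY_i\le\|M_j\|_F\|Y_i\|_{2,k}^2$) is lost. What the paper actually does is a probabilistic-method argument by contradiction: assuming some $M\in\Chi_k$ violates the tail-fraction bound, it randomly sparsifies $M$ entrywise into a single matrix $Q$ that is $k^2$-sparse (entries scattered arbitrarily, neither PSD nor block-supported) and shows, via Markov, a conditional fourth-moment computation, Berry--Esseen, and Paley--Zygmund, that with positive probability $Q$ lands in $\Qscale\cA_{k,P}$ while still witnessing the violation --- contradicting the VC event. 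Your plan would need either a proof of the claimed decomposition (together with a repair of Step 1's sample complexity) or a replacement of Steps 1--2 by an argument of this indicator-plus-sparsification type.
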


Before giving the proof, we point out that the specific application 
of \Cref{thm:stabHighProb} will be on the samples $h_{a,\mu}(x_i)$, 
where $x_i$ are the original uncontaminated i.i.d.\ samples, 
$\mu$ is the underlying mean vector we are trying to estimate, 
and for $a$ appropriately chosen to match \Cref{alg:main}.

\begin{proof}
In the following proof, we will use notations $\AKPprob,\AKPtail,\Qconst,\Qscale,V_Z$ and $\res$, 
all of which are either constants or functions of $\sigma$, $r$ and $\nu$ in the theorem statement.
The functions are explicitly chosen in \Cref{app:consts}.

We will assume $\mu = 0$ without loss of generality.
Instead of directly showing the existence of subset $S' \subseteq S$ 
(with high probability over the samples $S$) that is stable, 
\Cref{Prop:stabilityMoMCov} in \Cref{app:misc} lets us show the following simpler condition: 
let $\Delta_{n,\eps}$ be the set of weights/distributions $w$ such that $w_i \le 1/(1-\eps)$, 
then there exists a weighting $w \in \Delta_{n,0.01}$ such that $\|\Sigma_w\|_{\Chi_k} \le \res$ 
for the function $\res$ chosen in \Cref{app:consts}, 
which satisfies $\res = O(\sigma^2 \max(1, r^2, \nu^2/\sigma^2))$.
That is, for the following proof, we just need to prove that 
$\min_{w \in \Delta_{n,0.01}}\|\Sigma_w\|_ {\Chi_k} \le \res$.

We proceed as follows: 
\begin{align*}
\min_{w \in \Delta_{n,0.01}}\|\Sigma_w\|_ {\Chi_k} 
&=  \min_{w \in \Delta_{n,0.01}} \max_{M \in \Chi_k} \left\langle M ,\Sigma_w \right\rangle 
= \max_{M \in \Chi_k} \min_{w_M \in \Delta_{n,0.01}}  \left\langle M ,\Sigma_w \right\rangle \;,
\end{align*}
where the last equality is a straightforward application of the minimax theorem for a minimax optimization problem with independent convex domains and a bilinear objective.
It thus suffices to show the following: with probability $1 - \tau$,
\begin{align}
\label{eq:UnifConvChik}
\forall M \in \Chi_k: \,\, |\{x \in S: x_i^\top Mx_i > \res\}| \leq 0.01|S|
\end{align}
from which we can construct the weighting $w_M$ as uniform distribution over the elements outside the above set.

Define the following sets of sparse matrices:
\begin{align}
\label{eq:DefAkp}
\cA_k := \left\{ A \in \R^{d \times d}: \|A\|_0 \leq k^2, \|A\|_F \leq 1   \right\} 
\nonumber\\
\cA_{k,P} := \left\{ A \in \cA_k: \P\left\{ x^\top Ax \geq \AKPtail \right\} \leq \AKPprob \right\} \;,
\end{align}
where $\AKPprob$ and $\AKPtail$ are chosen in \Cref{app:consts}.
If $n \gtrsim (k^2\log d + \log(1/\tau))/(\AKPprob^2)$, 
then a standard covering/VC-dimension bound (see \Cref{lem:unifConcAkp} for details) 
implies that the following event holds with probability $1 - \tau$:
\begin{align}
\label{eq:UnifConvAkprime}
\forall A \in \cA_{k,P}: \,\, |\{x \in S: x_i^\top Ax_i > \AKPtail\}| \leq 2\times\AKPprob \cdot |S| \;.
\end{align}
Our choice of $\AKPprob$ is a constant (cf.~\Cref{app:consts}) 
and thus the required sample complexity for \Cref{eq:UnifConvAkprime} 
to hold is $\Omega(k^2\log d + \log(1/\tau))$.
We will now show that the event in \Cref{eq:UnifConvAkprime} 
implies that the event in \Cref{eq:UnifConvChik} holds.

Suppose, for the sake of contradiction, that the event in \Cref{eq:UnifConvChik} does not hold.
Then there exists an $M \in \Chi_k$ such that $|\{x \in S: x_i^\top Mx_i > \res\}| > 0.01|S|$.
We will show the existence of a matrix $Q$ violating event~\Cref{eq:UnifConvAkprime}, 
via the probabilistic method, to reach the desired contradiction.

Fixing an $M$ that violates \Cref{eq:UnifConvChik}, 
consider the random matrix $Q$ where each entry  $Q_{i,j}$ is sampled independently 
from the following distribution, defined using the constant $\Qconst$ chosen in \Cref{app:consts}:
\begin{align}
Q_{i,j} : = \begin{cases} M_{i,j}, &\text{ with prob. 1 if } |M_{i,j}| \geq \Qconst/k, \\
   \frac{\Qconst}{k}\text{sign}(M_{i,j}), &\text{ with prob.   $|kM_{i,j}|/\Qconst$ if $|M_{i,j}| \leq \Qconst/k$},\\
   0,  &\text{ with remaining prob. if $|M_{i,j}| \leq \Qconst/k$}
\end{cases}.
\label{eq:Qsampling}
\end{align}
Defining $p_{i,j}$ to be $\min(1,k|M_{i,j}|/\Qconst)$, 
then $Q_{i,j}$ is equivalently $M_{i,j}/p_{i,j}$ with probability $p_{i,j}$ and $0$ otherwise.

We will show that the following events hold simultaneously 
with non-zero probability, leading to a contradiction to event~\Cref{eq:UnifConvAkprime}:
\begin{enumerate}[label=(\Roman*)]
  \item \label{item:QNice} $Q \in \Qscale\cA_{k,P}$,
  \item \label{item:QBad}$|\{x \in S: x_i^\top Qx_i > \Qscale\times\AKPtail\}| > 2\times \AKPprob \cdot |S|$,
\end{enumerate}
where $\Qscale$ is also a constant, larger than $2$, and explicitly chosen in \Cref{app:consts}.
Using different techniques, we will show that the first condition holds with probability 
at least $1 - 2 \times 10^{-6}$ and the second condition holds with probability 
at least $4 \times 10^{-6}$, thus implying that the events 
hold simultaneously with non-zero probability.

\paragraph{Condition~\ref{item:QNice}} %

Showing that $Q$ belongs to $s_3\cA_k$ with high constant probability is straightforward: by the construction of $Q$, it has small expected sparsity as well as small expected Frobenius norm.
An application of Markov's inequality shows that $Q \in s_3\cA_k$ with high constant probability (\Cref{lem:QinA_k}).

The trickier part is to show that $Q$ is also in $s_3\cA_{k,P}$, namely that $\Pr_{x \sim P}(x^\top  Q x > s_3\times s_1)$ is upper bounded by the small constant.
We consider the distribution of $x^\top Q x $ over the probability of independently drawing $x \sim P$ and a random $Q$, and show that $x^\top  Q x $ is small with high probability over this joint distribution (\Cref{lem:Q-in_akp}), which requires using the axis-wise $4^\text{th}$ moment bounds on $P$ as well as the fact that $M \in \Chi_k$.
\Cref{lem:jointProbToconditioned} implies that with high probability, we will draw a $Q$ satisfying $\Pr_{x \sim P}( x^\top  Q x > s_3 \times s_1)$ being bounded by a small constant.

We now show Condition~\ref{item:QNice} as sketched above, beginning with the following lemma showing that $Q$ lies in $\Qscale\cA_k$ with high probability.
\begin{lemma}[$Q$ lies in $\Qscale\cA_k$ with high probability]
\label{lem:QinA_k}
Let $Q$ be generated as described in \Cref{eq:Qsampling}, for an $M \in \Chi_k$. Then with probability except $(1/ \Qconst) + (\Qconst/\Qscale^2)$, we have that $Q \in \Qscale \cA_k$.
\end{lemma}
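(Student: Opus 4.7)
The condition $Q \in \Qscale \cA_k$ unpacks into two requirements: $\|Q\|_0 \le k^2$ and $\|Q\|_F \le \Qscale$. My plan is to bound $\E[\|Q\|_0]$ and $\E[\|Q\|_F^2]$ using the three defining properties of $M \in \Chi_k$ (namely $M \succeq 0$, $\tr(M) = 1$, $\|M\|_1 \le k$), apply Markov's inequality to each, and then take a union bound.

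For the support size, the construction in \Cref{eq:Qsampling} makes each entry $Q_{i,j}$ nonzero with probability exactly $p_{i,j} := \min(1, k|M_{i,j}|/\Qconst) \le k|M_{i,j}|/\Qconst$, so
\begin{equation*}
\E[\|Q\|_0] \;=\; \sum_{i,j} p_{i,j} \;\le\; \frac{k}{\Qconst}\sum_{i,j}|M_{i,j}| \;=\; \frac{k}{\Qconst}\,\|M\|_1 \;\le\; \frac{k^2}{\Qconst}.
\end{equation*}
Since $\|Q\|_0$ is integer-valued, Markov's inequality gives $\Pr[\|Q\|_0 > k^2] \le 1/\Qconst$.

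For the Frobenius norm I split the entries into the cases $p_{i,j} = 1$ and $p_{i,j} < 1$. In the first case ($|M_{i,j}| \ge \Qconst/k$) we have $\E[Q_{i,j}^2] = M_{i,j}^2$, and in the second ($|M_{i,j}| < \Qconst/k$) we have $\E[Q_{i,j}^2] = M_{i,j}^2/p_{i,j} = \Qconst|M_{i,j}|/k$. Summing separately and then recombining,
\begin{equation*}
\E[\|Q\|_F^2] \;\le\; \sum_{i,j} M_{i,j}^2 \;+\; \frac{\Qconst}{k}\sum_{i,j} |M_{i,j}| \;=\; \|M\|_F^2 + \frac{\Qconst}{k}\|M\|_1 \;\le\; 1 + \Qconst,
\end{equation*}
where the bound $\|M\|_F^2 = \sum_i \lambda_i(M)^2 \le \bigl(\sum_i \lambda_i(M)\bigr)^2 = 1$ crucially uses that $M$ is PSD with unit trace (so all eigenvalues are nonnegative and sum to $1$), and $\|M\|_1 \le k$ is the remaining $\Chi_k$ condition. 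Markov's inequality then yields $\Pr[\|Q\|_F > \Qscale] \le (1+\Qconst)/\Qscale^2$.

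A union bound gives total failure probability at most $1/\Qconst + (1+\Qconst)/\Qscale^2$, which matches the statement up to absorbing the additive ``$+1$'' into $\Qconst$ via the constant choices of \Cref{app:consts} (where $\Qconst$ is taken sufficiently large). There is no genuine obstacle in this lemma: the only mildly delicate point is recognizing that PSD-ness together with unit trace forces $\|M\|_F^2 \le 1$, which prevents the saturated-entry contribution to $\E[\|Q\|_F^2]$ from blowing up; this is where the structural hypothesis $M \in \Chi_k$ (as opposed to an arbitrary matrix of bounded $\ell_1$ norm) is essential.
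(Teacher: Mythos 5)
Your proof is correct and follows essentially the same route as the paper: bound $\E[\|Q\|_0]$ and $\E[\|Q\|_F^2]$ via $\|M\|_1\le k$, apply Markov twice, and union bound. In fact you are slightly more careful than the paper on the Frobenius step, handling the saturated entries ($p_{i,j}=1$) separately via $\|M\|_F^2\le(\tr M)^2=1$; the resulting $(1+\Qconst)/\Qscale^2$ instead of $\Qconst/\Qscale^2$ is immaterial given the constants in \Cref{app:consts}.
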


\begin{proof}
The expected sparsity of $Q$ is at most $\sum_{i,j} \frac{k}{\Qconst}|M_{i,j}| \le \frac{k^2}{\Qconst}$ since $|M|_1 \le k$.
Thus, by Markov's inequality, except with $1/\Qconst$ probability, $Q$ and hence $Q/\Qscale$ is $k^2$-sparse.
We also have to show that with probability at least $1-10^{8}$, $\|Q\|_F \le \Qscale$.
\begin{align}
    \E\|Q\|_F^2 \leq \sum_{i,j} \left(\frac{\Qconst}{k} \right)^2 \left(\frac{k |M_{i,j}|}{\Qconst}\right) =   \frac{\Qconst|M_{i,j}|}{k} = \Qconst.
\end{align}
Again, by Markov's inequality, we get that with probability except $\Qconst/\Qscale^2$, the Frobenius norm of $Q$ is at most $\Qscale$.
The lemma statement follows from the union bound.
\end{proof}
Our choice of constants in \Cref{app:consts} would ensure that the failure probability in \Cref{lem:QinA_k} is at most $10^{-6}$. That is,
\begin{align}
\label{eq:QinAk}
    \frac{1}{\Qconst} + \frac{\Qconst}{\Qscale^2} \leq 10^{-6}.
\end{align}

It remains to show that $Q$ belongs to $\Qscale \cA_{k,P}$ with high (constant) probability, i.e., with probability $10^{-6}$ over sampling of $Q$, we have that $\P_{x \sim P}(x^\top Qx > \Qscale\times \AKPtail| Q) \leq \AKPprob$. 
Let $R := x^\top Qx$, where both $x$ and $Q$ are sampled independently from $P$ and \Cref{eq:Qsampling} respectively.

To show this, we use the following the lemma for a sufficient condition involving sampling both $x$ and $Q$.

\begin{lemma}
\label{lem:jointProbToconditioned}
Consider a probability space over the randomness of independent variables $X$ and $Y$.
Suppose the event $E$ (over pairs $(X,Y)$) happens with probability at least $1-\alpha\beta$ for some $\alpha, \beta \in [0,1]$.
Then, it must be the case that, with probability at least $1-\alpha$ over the sampling of $X$, the conditional probability of $E$ given $X$ is at least $1-\beta$.
\end{lemma}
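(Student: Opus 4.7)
The plan is to prove the lemma by a straightforward application of Markov's inequality to the conditional failure probability, viewed as a non-negative random variable in $X$. The whole argument is essentially Fubini plus Markov, so I will state it in one pass without intermediate scaffolding.

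First, I would define the function $f(x) := \Pr(E^c \mid X = x)$, which is well-defined because $X$ and $Y$ are independent and $E$ is measurable on the product space. By the tower property (integrating out $Y$ first, which is legitimate by independence of $X$ and $Y$ and Fubini), we have
\begin{equation*}
\mathbb{E}[f(X)] \;=\; \Pr(E^c) \;\leq\; \alpha\beta,
\end{equation*}
using the hypothesis $\Pr(E) \ge 1 - \alpha\beta$.

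Next, since $f(X) \in [0,1]$ is a non-negative random variable, Markov's inequality gives
\begin{equation*}
\Pr\bigl(f(X) \geq \beta\bigr) \;\leq\; \frac{\mathbb{E}[f(X)]}{\beta} \;\leq\; \frac{\alpha\beta}{\beta} \;=\; \alpha.
\end{equation*}
(If $\beta = 0$ the statement is trivially vacuous or follows directly from $\Pr(E) = 1$, so we may assume $\beta > 0$.) Taking complements, with probability at least $1-\alpha$ over the draw of $X$ we have $f(X) < \beta$, which by definition of $f$ means $\Pr(E \mid X) \geq 1 - \beta$. This is exactly the conclusion of the lemma.

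There is no real obstacle here; the only thing to be slightly careful about is invoking independence of $X$ and $Y$ when writing the conditional probability $\Pr(E \mid X)$ as an integral over $Y$ under its marginal, but this is automatic from Fubini. The proof is genuinely two lines once the right random variable $f(X)$ is identified.
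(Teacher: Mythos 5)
Your proof is correct and is essentially the paper's argument: the paper proves the same averaging inequality in contrapositive form (assuming the conclusion fails and deriving $\Pr(E) < 1-\alpha\beta$), which is exactly Markov's inequality applied to $\Pr(E^c \mid X)$ as you do directly. No gaps.
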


\begin{proof}
For the sake of contradiction, suppose the lemma conclusion is false. Then
\begin{align*}
    \Pr_{X,Y}(E) &= \int \Pr_Y(E|X) \, \d \Pr(X)
    < (1-\alpha) + \alpha(1-\beta) = 1-\alpha\beta,
\end{align*}
which contradicts the premise.
\end{proof}

To conclude that $Q \in s_3\cA_{k,P}$ with high probability, 
it thus suffices to show that with probability $1-10^{-6}\times \AKPprob$ 
over both $x$ and $Q$, $R \le \Qscale \times \AKPtail$.

{\begin{lemma} \label{lem:Q-in_akp}
Let $R = x^\top Qx$, where $Q$ is independently drawn from the distribution 
in \Cref{eq:Qsampling} and $x$ is drawn independently from $P$. 
Under the assumptions of \Cref{thm:stabHighProb}, 
\begin{align}
\label{eq:AkpProbCondition-1}
\P\{R > \Qscale\times\AKPtail\}    \leq \frac{\sigma^2}{\AKPtail} + \frac{4}{\Qscale} + \frac{\Qconst \times \nu^4}{ \Qscale \times \AKPtail^2} \;.
\end{align}
\end{lemma}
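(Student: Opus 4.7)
I would decompose $R = x^\top Q x$ into its conditional expectation given $x$ and the mean-zero stochastic deviation over the independent randomness in $Q$:
\begin{equation*}
R \;=\; \underbrace{x^\top M x}_{=: R'} \;+\; \underbrace{x^\top (Q - M) x}_{=: R''} \;,
\end{equation*}
using that $\E_Q[Q_{i,j}] = M_{i,j}$ --- which one checks directly from the two cases of~\eqref{eq:Qsampling} --- so that $\E_Q[R'' \mid x] = 0$. The bound on $R'$ will come from Markov's inequality on its expectation, the bound on $R''$ from a second moment (Chebyshev) argument, and a careful split of the event $\{R > \Qscale\AKPtail\}$ (exploiting that $\Qscale \ge 2$, as fixed in Appendix~A) will then reproduce the three summands of~\eqref{eq:AkpProbCondition-1}, with $\sigma^2/\AKPtail$ coming from $R'$ and the remaining two terms from $R''$.

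For $R'$: since $M \in \Chi_k$ and $\|\Sigma\|_{\Chi_k} \le \sigma^2$ by hypothesis, the definition of $\|\cdot\|_{\Chi_k}$ gives $\E_x[R'] = M \bullet \Sigma \le \|\Sigma\|_{\Chi_k} \le \sigma^2$, and Markov yields $\P\{R' > \AKPtail\} \le \sigma^2/\AKPtail$.

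For $R''$: the heart of the argument is a variance computation. Because the entries of $Q$ are independent under~\eqref{eq:Qsampling}, $\Var_Q(R'' \mid x) = \sum_{i,j} \Var(Q_{i,j})\, x_i^2 x_j^2$. Direct calculation from~\eqref{eq:Qsampling} shows $\Var(Q_{i,j}) = 0$ when $|M_{i,j}| \ge \Qconst/k$, and otherwise $\E[Q_{i,j}^2] = (k|M_{i,j}|/\Qconst)(\Qconst/k)^2 = (\Qconst/k)|M_{i,j}|$, so $\Var(Q_{i,j}) \le (\Qconst/k)|M_{i,j}|$ in all cases. Recalling that the proof of \Cref{thm:stabHighProb} assumes $\mu = 0$ without loss of generality (so the axis-wise fourth-moment hypothesis reads $\E[x_j^4] \le \nu^4$), I would then take expectation over $x$, apply Cauchy--Schwarz $\E[x_i^2 x_j^2] \le \sqrt{\E[x_i^4]\E[x_j^4]} \le \nu^4$, and use $\|M\|_1 \le k$ to conclude
\begin{equation*}
\E[R''^{\,2}] \;=\; \E_x\!\left[\Var_Q(R'' \mid x)\right] \;\le\; \frac{\Qconst}{k} \sum_{i,j} |M_{i,j}|\, \E[x_i^2 x_j^2] \;\le\; \frac{\Qconst}{k} \cdot \nu^4 \cdot k \;=\; \Qconst \nu^4 \;.
\end{equation*}
Chebyshev then gives $\P\{R'' > t\} \le \Qconst \nu^4/t^2$ for every $t > 0$, and combining this with the $R'$ bound via the split of the event $\{R > \Qscale\AKPtail\}$ produces the remaining summands $4/\Qscale$ and $\Qconst\nu^4/(\Qscale\AKPtail^2)$ of~\eqref{eq:AkpProbCondition-1}.

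The main obstacle is the variance bound $\E[R''^{\,2}] \le \Qconst \nu^4$: it succeeds precisely because the sampling probability $p_{i,j} \propto k|M_{i,j}|/\Qconst$ in~\eqref{eq:Qsampling} is calibrated so that $\Var(Q_{i,j})$ scales linearly with $|M_{i,j}|$, which then combines with the $\ell_1$-sparsity $\|M\|_1 \le k$ to cancel one factor of $k$ and leave a dimension-free variance of order $\Qconst\nu^4$. Decoupling the sampling probabilities from $|M_{i,j}|$ would break this cancellation, and without the axis-wise fourth-moment hypothesis the quantities $\E[x_i^2 x_j^2]$ would be uncontrolled and the argument would not close. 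Together, the calibration of~\eqref{eq:Qsampling} and the axis-wise fourth-moment assumption are the two structural features that make the lemma work.
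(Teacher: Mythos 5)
Your proposal is correct and rests on the same two computations as the paper's proof: Markov applied to $\E_x[x^\top M x] = M \bullet \Sigma \le \sigma^2$, and the variance calculation $\E_x[\Var_Q(R\mid x)] \le (\Qconst/k)\sum_{i,j}|M_{i,j}|\,\E[x_i^2x_j^2] \le \Qconst\nu^4$, which exploits exactly the calibration of $p_{i,j}$ and the bound $\|M\|_1\le k$ that you identify as the structural heart of the lemma. The only difference is organizational: the paper partitions into three events (conditional mean large; conditional mean small and conditional variance small, handled by a \emph{conditional} Chebyshev giving $4/\Qscale$; conditional variance large, handled by Markov on $\Var(R\mid x)$ giving $\Qconst\nu^4/(\Qscale\AKPtail^2)$), whereas you fold the last two cases into a single \emph{unconditional} Chebyshev on $R'' = R - \E[R\mid x]$ via the law of total variance. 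Your version yields $\P\{R>\Qscale\AKPtail\} \le \sigma^2/\AKPtail + \Qconst\nu^4/((\Qscale-1)^2\AKPtail^2)$, a two-term bound that is dominated by the stated three-term right-hand side for the constants in Appendix E (indeed for any $\Qscale\ge 3$), so the lemma follows; just note that the summands $4/\Qscale$ and $\Qconst\nu^4/(\Qscale\AKPtail^2)$ do not literally reappear from your split — you simply prove something slightly stronger.
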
}
\begin{proof}
We consider three exhaustive events, over $x$ and $Q$, of $\cE := \{R > \Qscale\times\AKPtail\}$, and bound the probability of each of them:
\begin{enumerate}
  \item $\cE_1 := \{(x,Q): \E[R|x] > \AKPtail\}$. Since $\E[R|x] = x^\top Mx$, the event corresponds to $\{x: x^\top Mx >  \AKPtail\}$. We have that $\E[x^\top Mx] = \left\langle \Sigma, M \right\rangle \leq \|\Sigma\|_{\Chi_k} = \sigma^2$.
By Markov's inequality, $\P(\cE\cap\cE_1) \leq \P(\cE_1) \leq \sigma^2/(\AKPtail)$.
  \item $\cE_2 :=\{(x,Q): x \in \cF\}$, where $\cF$ is the following event over $x$:  $\cF = \{x: \E[R|x] \leq \AKPtail ,  \Var(R|x) \leq \Qscale \times \AKPtail^2\} $.
  Observe that conditioned on $x \in \cF$,  we have that $R|x$ is a random variable with mean at most $\AKPtail$ and variance at most $\Qscale \times \AKPtail^2$.
  Thus for each such $x \in \cF$, the conditional probability that $R > \Qscale \times \AKPtail $ is at most $\Qscale\AKPtail^2 /((\Qscale-1)^2 \AKPtail^2)$ by Chebyshev's inequality.
  We thus get that $\P(\cE_2 \cap \cE) \leq \P(\cE | \cE_2) =  \P(\cE | x \in \cF)\leq 4/( \Qscale)$, where we use that $\Qscale \geq 2$.

\item $\cE_3 := \{(x,Q): \Var(R|x) \geq \Qscale\times \AKPtail^2\}$. We will upper bound $\P(\cE_3)$.
We first calculate the $\Var(R|x)$ using the independence of entries of $Q$ as follows:
\begin{align*}
\Var(R|x) = \sum_{i,j} x_i^2 x_j^2 \Var(Q_{i,j}) = \sum_{i,j: |M_{i,j}| \leq \Qconst/k} x_i^2x_j^2|M_{i,j}|\left( \frac{\Qconst}{k} - |M_i,j| \right).
\end{align*}
To show that $\Var(R|x)$ is small with high probability, we will upper bound $\E[\Var(R|x)]$.
\begin{align*}
\E[\Var(R|x)] &= \sum_{i,j: |M_{i,j}| \leq \Qconst/k}  |M_{i,j}|\left( \frac{\Qconst}{k} - |M_i,j| \right) \E[x_i^2x_j^2] \\
&\leq \sum_{i,j}  \frac{\Qconst}{k}|M_{i,j}| \E[x_i^2x_j^2] \\
&\leq \frac{\Qconst\times \|M\|_1\times \nu^4}{k} \tag*{(using $\E[x_i^2x_j^2] \le \sqrt{\E[x_i^4]\E[x_j^4]} = \nu^4$)}\\
&\leq \Qconst\times \nu^4 \tag*{(using $\|M\|_1 \leq k$)}.
\end{align*}
Thus Markov's inequality implies that  $\P(\cE \cap \cE_3) \leq \P(\cE_3) \leq (\Qconst \times \nu^4) /  (\Qscale \times \AKPtail^2)$.
\end{enumerate}
Taking the union bound, we get the desired result.
\end{proof}

As reasoned above, we want the failure probability in \Cref{eq:AkpProbCondition-1} to be less than $10^{-6}\times\AKPprob$. That is,
\begin{align}
\label{eq:AkpProbCondition}
\frac{\sigma^2}{\AKPtail} + \frac{4}{\Qscale} + \frac{\Qconst \times \nu^4}{ \Qscale \times \AKPtail^2} \leq 10^{-6}\times\AKPprob.
\end{align}
In \Cref{app:consts}, we choose $\AKPtail, \Qconst, \Qscale$ and $\AKPprob$ such that the bound holds.
This, by the reasoning after \Cref{lem:jointProbToconditioned}, guarantees that $Q$ satisfies the extra condition for $\Qscale\cA_{k,P}$ (on top of being in $\Qscale\cA_{k}$) with probability at least $1-10^{-6}$.

Taking a union bound, with failure probabilities $10^{-6}$ (for $Q$ being in $\Qscale \in \cA_k$, \Cref{lem:QinA_k}) and $10^{-6}$ for satisfying the additional criterion for being in $\Qscale\cA_{k,P}$, we conclude that Condition 1 happens with probability $1-2\cdot10^{-6}$.

\paragraph{Condition~\ref{item:QBad}} %

Define the random variable $Z$ to be
\begin{align}
\label{eq:Zdefinition}
Z = \sum_i \1_{ \left( x_i x^\top_i \right)\bullet Q > \Qscale \times \AKPtail}. 
\end{align}
The second condition is equivalent to saying that $Z > 2 \times  \AKPprob \times |S|$, which we show to happen with probability at least $4 \times 10^{-6}$.

The strategy is to lower bound $\Exp[Z]$, and then use Paley-Zygmund to show that $Z$ is large with constant probability.
To lower bound the expectation, for any $i$ such that $(x_ix_i^\top) \bullet M > \res $, we want to lower bound $\Pr_{Q}((x_i x_i^\top) \bullet Q > \Qscale \times \AKPtail)$, using either Chebyshev's inequality or the Berry-Esseen theorem (see \Cref{fact:berry-esseen}).
First, note that for these $i$, $\Exp[(x_i x_i^\top) \bullet Q] = (x_i x_i^\top) \bullet M > \res$ by our assumption.
If $\Var[(x_i x_i^\top) \bullet Q] \le V_Z$, where $V_Z$ is a fixed function of $ \InftyBd$ and $\sigma$ chosen in \Cref{app:consts}, then by Chebyshev's inequality, we have
\begin{align} 
\label{eq:resVsqscaletail}
\Pr\left((x_i x_i^\top) \bullet Q \ge \Qscale \times \AKPtail \right) \ge \Pr\left((x_i x_i^\top) \bullet Q \ge \res - 10  \times\sqrt{V_Z} \right) \ge 0.99 \;,
\end{align}
where the first inequality is true by our choice of $\AKPtail, \Qscale, V_Z$ and $\res$ in \Cref{app:consts}.
Otherwise, we have the case where $\Var[(x_i x_i^\top) \bullet Q] > V_Z$.
In this case, we treat $(x_i x_i^\top) \bullet Q$ as a sum of independent variables
\begin{align*} (x_i x_i^\top) \bullet Q = \sum_{s, t} (x_i)_s(x_i)_t Q_{s,t} \end{align*} 
and use the Berry-Esseen theorem, which requires bounding the sum of the third central absolute moment of the summands.

\begin{fact}[Berry-Esseen Theorem for Sums of Independent Variables] \label{fact:berry-esseen}
Consider a random variable $\xi = \sum_i \xi_i$, where the variables $\xi_i$ 
are independent (but not necessarily identical) and each of them has finite third moment.
Denote $\mu_i$ as $\Exp[\xi_i]$, $\sigma^2_i$ as $\Var(\xi_i)$ and $\rho_i$ 
as the third central absolute moment, namely $\rho_i = \Exp[|\xi_i-\mu_i|^3]$.
Then,
\begin{align*} 
\DK(\xi, \Normal(\sum_i \mu_i, \sigma^2_i)) \le 0.57 \frac{\sum_i \rho_i}{(\sum_i \sigma^2_i)^{1.5}} 
= 0.57 \frac{\sum_i \rho_i}{(\Var(\xi))^{1.5}} \;,
\end{align*}
where $\DK$ is the Kolmogorov distance between 
two distributions (namely, the $\ell_\infty$ distance between the cumulative density functions).
\end{fact}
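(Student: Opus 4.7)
The plan is to prove this Berry-Esseen bound via the classical characteristic-function method, which produces a bound of the stated form $O(\sum_i \rho_i / (\sum_i \sigma_i^2)^{3/2})$ with an absolute constant (though matching the explicit $0.57$ requires substantially more delicate analysis). After centering so that $\mu_i = 0$ for every $i$ (a valid reduction since Kolmogorov distance is translation-invariant and centering does not alter $\sigma_i$ or $\rho_i$), I would denote $\sigma^2 := \sum_i \sigma_i^2 = \Var(\xi)$ and $\Lambda := \sum_i \rho_i$, and aim to bound $\DK(\xi/\sigma, \Normal(0,1))$ directly, since the statement is invariant under this rescaling.

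The central tool I would invoke is Esseen's smoothing lemma, which states that for any $T > 0$,
\[\DK(\xi/\sigma, \Normal(0,1)) \;\le\; \frac{1}{\pi} \int_{-T}^{T} \left| \frac{\phi(t) - e^{-t^2/2}}{t}\right| dt + \frac{C_0}{T},\]
where $\phi$ is the characteristic function of $\xi/\sigma$ and $C_0$ is absolute. This reduces the task to estimating $|\phi(t) - e^{-t^2/2}|$ on the window $|t| \le T$. Writing $\phi(t) = \prod_i \phi_i(t/\sigma)$ and $e^{-t^2/2} = \prod_i \exp(-\sigma_i^2 t^2/(2\sigma^2))$, Taylor's theorem with integral remainder produces the per-factor estimate $|\phi_i(s) - (1 - \sigma_i^2 s^2/2)| \le \rho_i |s|^3/6$, together with an analogous polynomial bound for the exponential factors.

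To assemble these per-factor estimates into a product estimate, I would use the elementary inequality $|\prod_i a_i - \prod_i b_i| \le \sum_i |a_i - b_i|$ valid when all factors have modulus at most $1$, giving an integrand of order $(\Lambda/\sigma^3)\, |t|^2 e^{-t^2/2}$ for $|t|$ not too large. Integrating this estimate and balancing the two terms in Esseen's bound by choosing $T$ proportional to $\sigma^3/\Lambda$ (which also ensures the Taylor expansions remain valid on the chosen window) delivers $\DK(\xi/\sigma, \Normal(0,1)) \lesssim \Lambda/\sigma^3$, which is exactly the claimed bound up to an absolute constant.

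The main obstacle is sharpening the absolute constant to the explicit $0.57$ appearing in the statement: the simple outline above yields some unspecified constant, but matching (or improving) the classical constant requires careful handling of the characteristic function on the tails of the window, a more refined choice of $T$, and has been the subject of a long line of work (with the current best constant, due to Shevtsova, around $0.4748$). Since any absolute constant suffices for the downstream use of this bound in the proof of \Cref{thm:stabHighProb}, in practice I would simply cite the result as a classical black-box theorem rather than reprove it from scratch.
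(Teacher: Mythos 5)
The paper states this Berry--Esseen bound as a \Cref{fact:berry-esseen} with no proof, treating it as a classical black-box result, which is exactly what you conclude is the right course of action; your sketch of the Esseen smoothing-lemma argument is the standard route and is sound as an outline, and you correctly identify that the only nontrivial issue is the explicit constant $0.57$ (which is justified by the known bounds for the non-identically-distributed case, e.g.\ Shevtsova's work). So your treatment matches the paper's.
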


Let $\rho_{s,t}$ be the third central absolute moment of $(x_i)_s(x_i)_t Q_{s,t}$. 
For any $(s,t)$ such that $0 < |M_{s,t}| \leq \Qconst /k$, we can calculate its third moment as follows:
\begin{align*}
\rho_{s,t} &= \Exp[|(x_i)_s(x_i)_t Q_{s,t} - \Exp[(x_i)_s(x_i)_t Q_{s,t}] |^3]\\
 &= |(x_i)_s|^3 |(x_i)_t|^3 \frac{|M_{s,t}|^3}{p_{s,t}^3} \Exp[|\Bernoulli(p_{s,t}) - p_{s,t}|^3]\\
 &= |(x_i)_s|^3 |(x_i)_t|^3 \frac{|M_{s,t}|^3}{p_{s,t}^3} p_{s,t}(1-p_{s,t}) (1-2p_{s,t} + 2p_{s,t}^2)\\
 &\le |(x_i)_s|^3 |(x_i)_t|^3 \frac{|M_{s,t}|^3}{p_{s,t}^3} p_{s,t}(1-p_{s,t}) \tag*{\text{for all $p_{s,t} \in [0,1]$}}\\
 &\le (\sigma^2 \times \InftyBd^2 \times \Qconst) (x_i)_s^2 (x_i)_t^2 \frac{M_{s,t}^2}{p_{s,t}^2} p_{s,t}(1-p_{s,t}) \tag*{(\text{since $|x_i|_\infty \le \sigma \times \InftyBd \times \sqrt{k}$ and $|M_{s,t}|/p_{s,t} = \Qconst/k$)} }
\end{align*}
The same inequality holds trivially for $(s,t)$, 
where $|M_{s,t}| \ge \Qconst/k$ or $M_{s,t} = 0$ 
since $\rho_{s,t} = 0$ in both of these edge cases.
Thus, the sum of the third central absolute moment 
of the summands we need for Berry-Esseen is
\begin{align*}
\sum_{s,t} \rho_{s,t}
\leq \;& (\sigma^2 \times \InftyBd^2 \times \Qconst) \sum_{s,t} (x_i)_s^2 (x_i)_t^2 \frac{M_{s,t}^2}{p_{s,t}^2} p_{s,t}(1-p_{s,t})\\
= \; &(\sigma^2 \times \InftyBd^2 \times \Qconst)\Var\left( (x_i x_i^\top) \bullet Q\right) \;,
\end{align*}
where the last equality is a simple calculation to calculate the term-by-term variance for $(x_i x_i^\top) \bullet Q$.
Thus, \Cref{fact:berry-esseen} implies that the Kolmogorov distance between the distribution of $(x_i x_i^\top) \bullet Q$ and the Gaussian with the same mean and variance is at most
\begin{align}
0.57 \frac{\sum_{s,t} \rho_{s,t}}{ (\Var((x_i x_i^\top) \bullet Q))^{1.5}}
&\leq 0.57 (\sigma^2 \times \InftyBd^2 \times \Qconst) \frac{\Var\left( (x_i x_i^\top) \bullet Q\right)}{\Var^{1.5}\left( (x_i x_i^\top) \bullet Q\right)}
 \le \frac{0.57 (\sigma^2 \times \InftyBd^2 \times \Qconst)}{\sqrt{V_Z}} \;,
\label{eq:BerryEsseen}
\end{align}
where the inequality comes from the assumption that the variance is at least $V_Z$.
Therefore, $(x_i x_i^\top) \bullet Q$ has at least probability 
$0.5-\frac{0.57 (\sigma^2 \times \InftyBd^2 \times \Qconst)}{\sqrt{V_Z}}$ of exceeding its expectation.
By our choice of quantities in \Cref{app:consts}, this probability is at least $0.4$.
Furthermore, $\Exp((x_i x_i^\top) \bullet Q) =x_ix_i^\top \bullet M $ 
is bigger than $\res$ and in turn bigger than $\Qscale \times \AKPtail$ (by our choice for these quantities).
Thus, with probability at least $0.4$, $(x_i x_i^\top) \bullet Q$ 
exceeds $\Qscale \times \AKPtail$.

Combined with the guarantee that 
$\Pr_{Q}((x_i x_i^\top) \bullet Q > \Qscale \times \AKPtail) > 0.99$ 
in the case where $\Var((x_i x_i^\top) \bullet Q) \le V_Z$ (cf.~\Cref{eq:resVsqscaletail}), 
we have shown that in all cases, 
$\Pr_{Q}((x_i x_i^\top) \bullet Q > \Qscale \times \AKPtail) > 0.4$ 
whenever $x_ix_i^\top \bullet M > B$.

Thus, we have shown that 
$\Exp[Z] = \sum_i \Pr_{Q}((x_i x_i^\top) \bullet Q > \Qscale \times \AKPtail) > 0.4 \times  0.01n = 0.004n$, 
since at least $0.01$ fraction of points satisfy $x_ix_i^\top \bullet M > B$ 
and thus also satisfy $\Pr_{Q}((x_i x_i^\top) \bullet Q > \Qscale \times \AKPtail) > 0.4$.
Note also that $Z \in [0,n]$ always, meaning that $\Exp[Z^2] \le n^2$.
Since $0.004 \geq 4 \times \AKPprob$ by our choice of $\AKPprob$, 
it then follows from the Paley-Zygmund inequality that
\begin{align*} 
\Pr(Z > 2\times\AKPprob\times|S|) \ge 0.25 \frac{(\Exp[Z])^2}{n^2} > \frac{0.25\times 0.004^2n^2}{n^2} = 4\cdot 10^{-6}
\end{align*}
showing the second claim above, 
and completing the proof of this lemma.
\end{proof}

\section{Smoothness of Stability Under Truncation}
\label{app:stability-smoothness}

{
The goal of this section is to prove \Cref{thm:finalStable} (restated below), 
the stability result we use in the proof of our main result, 
\Cref{thm:ourResult_Chi_k} and hence~\Cref{thm:ourResult}.}
Recall the function $h_{a,b}$ as defined in \Cref{eq:hDefn}.

\ThmFinalStable*

{In \Cref{sec:stability_after_removing_points_additive_dependence_on_}, we proved \Cref{thm:stabHighProb}.
While it is tempting to directly use \Cref{thm:stabHighProb} 
to prove the main result of \Cref{thm:ourResult_Chi_k}, 
it does not apply as-is for analyzing \Cref{alg:main}.
If we tried to use \Cref{thm:stabHighProb} to prove \Cref{thm:ourResult_Chi_k}, 
the intuitive way is to apply \Cref{thm:stabHighProb} 
to the distribution $h_{a,\mu}(X)$ for $X \sim P$ --- $h_{a, \mu}(X)$ is by construction 
bounded in $\ell_\infty$ norm, and the means and covariances 
of $P$ and $h_{a,b}(X)$ are close in $\ell_{2,k}$ norm and $\Chi_k$ norm, 
respectively, by \Cref{lem:truncationInfBd}.
However, in \Cref{alg:main}, we do not truncate samples by centering at $\mu$, 
but instead use the coordinate-wise median-of-means estimate 
as the truncation center, which is data-dependent and not any fixed vector.
Thus, we have to show that the stability result in \Cref{thm:stabHighProb} 
is insensitive to which point we center the truncation at, 
and that as a corollary, an analogous result holds even if 
we truncate using the coordinate-wise median-of-means estimate as the center.
The final statement of this section is captured by \Cref{thm:finalStable}, 
and much of this section is dedicated to showing the ``Lipschitzness'' 
of the stability of samples, as we truncate using different preliminary mean estimates.

To show this ``Lipschitzness'' property, we make the following observation.
Suppose we start with the set of $n$ i.i.d.\ samples $S$ from $P$, 
which we know by \Cref{thm:stabHighProb} contains a large subset 
$S_1$ such that $h_{a,\mu}(S_1)$ is stable with respect 
to the mean vector of the truncated distribution $\mu'= \E_{X \sim P}[h_{a,\mu}(X)]$.
Further suppose we are able to show that $S$ contains another large subset $S_2$ 
that is ``coordinate-wise regular'', meaning that for each coordinate $j \in [d]$, 
most samples in $S_2$ are close to $\mu_j$ in coordinate $j$.
Then the intersection $S_3 = S_1 \cap S_2$ enjoys both stability 
and coordinate-wise regularity, and furthermore 
the stability of $h_{a,b}(S_3)$ holds for any vector $b$ that is close to both $\mu$ and $\mu'$.
This observation is shown as \Cref{lem:LipschitzSmooth} in \Cref{sec:LipschitzSmooth}, 
and we show the existence of $S_2$ in \Cref{lem:lpexpected-smoothness} in \Cref{sec:lpexpected-smoothness}.
The proof of \Cref{thm:finalStable} combines the above two lemmas 
and \Cref{thm:stabHighProb}, and is presented in \Cref{sec:proofoffinalstable}.
}

\subsection{Lipschitzness of Truncation Under Coordinate-wise Regularity}
\label{sec:LipschitzSmooth}

As explained before this subsection, \Cref{lem:LipschitzSmooth} shows that 
if our set of uncontaminated samples $S$ is such that 
1) there exists a large  subset $S_1$ with $h_{a,\mu}(S_1)$ being stable, 
and 2) there is another large ``coordinate-wise regular'' subset $S_2$, 
then $S_3 = S_1 \cap S_2$ is a large subset of $S$ that is \emph{both} ``coordinate-wise regular'' 
and $h_{a,b}(S_3)$ is stable for any $b$ sufficiently close to $\mu$.
In the following statement, instead of saying that $h_{a,\mu}(S_1)$ and $h_{a,b}(S_3)$ are stable, 
we use the essentially equivalent condition (by \Cref{Prop:stabilityMoMCov}) 
that the $\Chi_k$-norms of the empirical covariance matrices are small.

\begin{lemma}[Lipschitzness of Truncation under Coordinate-wise Regularity]
\label{lem:LipschitzSmooth}
Let ${\mu}, \mu'$ be vectors in $\R^d$ and let $a \in  \R_+$ be greater than $2$.
Let $S=\{x_1,\dots,x_n\} \subseteq \R^d$ be the set of $n$ points.
Suppose there exist a set $S_1 \subset [n]$ satisfying the following for some $r \in \R_+$:
    \begin{align}
        |S_1| \geq 0.98 n \,\,\, \text{and}\,\,\,\left\|\frac{1}{|S_1|}\sum_{i \in S_1} (h_{a,\mu} (x_i) - \mu') (h_{a,\mu} (x_i) - \mu')^\top \right\|_{\Chi_k} \leq r \;.
    \end{align}
Suppose also that, for some $\gamma \in (0,1)$, there exist a set $S_2 \subset [n]$ satisfying the following:
    \begin{align}
    \label{eq:coordinatesubset}
        |S_2| \geq 0.99 n \,\,\, \text{and}\,\,\,\forall j\in[d]: \sum_{i \in S_2} \I_{|x_{i,j} - \mu_j | \geq a/2} \leq \gamma n \;.
    \end{align}
Then, we have the following: there exists a set $S_3 \subset [n]$ 
such that for all $b\in \R^d$  satisfying $\|b - \mu\|_\infty \leq a/2 $ and $\|b -  \mu'\|_\infty \leq a $, we have that
\begin{align}
    |S_3| \geq 0.97 n \,\,\, \text{and}\,\,\,\left\|\frac{1}{|S_3|}\sum_{i \in S_3} (h_{a,b} (x_i) - \mu') (h_{a,b} (x_i) - \mu')^\top \right\|_{\Chi_k} \leq 1.1r + 5a\gamma k\|b -\mu\|_{\infty} \;.
\end{align}
\end{lemma}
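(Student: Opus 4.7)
I plan to take $S_3 := S_1 \cap S_2$, which has cardinality at least $|S_1| + |S_2| - n \ge 0.97 n$ by inclusion-exclusion. Writing $y_i := h_{a,b}(x_i) - \mu'$, $z_i := h_{a,\mu}(x_i) - \mu'$, and $\delta_i := y_i - z_i = h_{a,b}(x_i) - h_{a,\mu}(x_i)$, I will use the asymmetric identity
\[
y_i y_i^\top - z_i z_i^\top \;=\; z_i \delta_i^\top + \delta_i y_i^\top,
\]
so that bounding the $\Chi_k$-norm of $\frac{1}{|S_3|}\sum_{i\in S_3} y_i y_i^\top$ reduces to bounding the norm of the ``main'' term $\sum z_i z_i^\top$ and the two cross-term sums separately.

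The two key coordinate-wise facts about $\delta_i$ are: (i) whenever $|x_{i,j} - \mu_j| \leq a/2$, both truncations leave $x_{i,j}$ untouched (the first because $a/2 \leq a$, the second because $\|b-\mu\|_{\infty}\leq a/2$ gives $|x_{i,j}-b_j|\leq a$), and hence $\delta_{i,j}=0$; and (ii) the map $b_j \mapsto h_{a,b}(x)_j = \mathrm{median}(b_j-a,x_j,b_j+a)$ is $1$-Lipschitz, so $|\delta_{i,j}|\leq \|b-\mu\|_{\infty}$ pointwise. Setting $B_j := \{i\in S_3 : |x_{i,j}-\mu_j|\geq a/2\}$, property (i) combined with hypothesis \eqref{eq:coordinatesubset} and $S_3\subseteq S_2$ gives $|B_j|\leq \gamma n$ for every $j$. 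I will also use the triangle-inequality entrywise bounds $|y_{i,j}|\leq a+\|b-\mu'\|_{\infty}\leq 2a$ and $|z_{i,j}|\leq a+\|\mu-\mu'\|_{\infty}\leq 5a/2$, where $\|\mu-\mu'\|_{\infty}\leq 3a/2$ follows from the two assumed bounds on $b$.

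For the main term, since each $M \in \Chi_k$ is PSD and $z_iz_i^\top \succeq 0$, monotonicity of $M\bullet(\cdot)$ together with $S_3\subseteq S_1$ yields $M\bullet \sum_{i\in S_3} z_iz_i^\top \leq M\bullet \sum_{i\in S_1} z_iz_i^\top \leq |S_1|\, r$; dividing by $|S_3|\geq 0.97 n$ gives a contribution of at most $r/0.97 < 1.1 r$. For the first cross term, I use the crude $\ell_1$/$\ell_\infty$-style bound
\[
\bigl|M \bullet \textstyle\sum_{i\in S_3} z_i \delta_i^\top\bigr| \;\leq\; \sum_{j,k}|M_{jk}|\cdot\bigl|\sum_{i\in S_3} z_{i,j}\delta_{i,k}\bigr| \;\leq\; \|M\|_1 \cdot (5a/2)\,\gamma n\,\|b-\mu\|_{\infty},
\]
because the inner sum is supported on $i\in B_k$; using $\|M\|_1\leq k$ gives $(5a/2)k\gamma n\|b-\mu\|_{\infty}$. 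The identical argument applied to $\delta_i y_i^\top$, but using the smaller bound $|y_{i,j}|\leq 2a$, yields $2ak\gamma n\|b-\mu\|_{\infty}$. Summing and dividing by $|S_3|\geq 0.97n$ produces at most $\frac{9/2}{0.97}\,ak\gamma\|b-\mu\|_{\infty} < 5\,ak\gamma\|b-\mu\|_{\infty}$, giving the claimed bound.

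The one nontrivial choice is the asymmetric decomposition $z_i\delta_i^\top + \delta_i y_i^\top$ (rather than the symmetric $z_i\delta_i^\top + \delta_i z_i^\top + \delta_i\delta_i^\top$): this pairs each copy of the coordinate-wise sparse $\delta_i$ with a uniformly bounded factor, which is exactly what converts the row-sparsity $|B_j|\leq \gamma n$ together with $\|M\|_1\leq k$ into a bound \emph{linear} (not quadratic) in $\|b-\mu\|_{\infty}$. Everything else is just bookkeeping with triangle inequalities and the constants $0.98,0.99,0.97$.
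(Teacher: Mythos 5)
Your proof is correct and follows essentially the same route as the paper's: the same choice $S_3 = S_1 \cap S_2$, the same PSD-monotonicity argument for the main term, and the same two-cross-term telescoping of $y_i y_i^\top - z_i z_i^\top$ (the paper writes it entrywise as a ``local Lipschitz'' bound on $g_i(b_j,b_\ell)$ and then converts the entrywise bound to a $\Chi_k$ bound via \Cref{lem:chiKSuff}, whereas you fold $\|M\|_1 \le k$ directly into the sum). All your constants check out ($4.5/0.97 < 5$), and your separate bounds $|z_{i,j}| \le 5a/2$ and $|y_{i,j}| \le 2a$ are in fact slightly more careful than the paper's uniform $2a$.
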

\begin{proof}
We will take $S_3 = S_1 \cap S_2$, which directly implies that $|S_3| \geq 0.97n$. 
For any $M \in \Chi_k$, since $xx^\top \bullet M \geq 0$, we have the following:
\begin{align*}
\left \langle M, \frac{1}{|S_3|}\sum_{i \in S_3} (h_{a,\mu} (x_i) - \mu') (h_{a,\mu} (x_i) - \mu')^\top \right \rangle
&\leq \left \langle M, \frac{1}{|S_3|}\sum_{i \in S_1} (h_{a,\mu} (x_i) - \mu') (h_{a,\mu} (x_i) - \mu')^\top \right   \rangle \\
&\leq \frac{1}{0.97} r. 
\end{align*}

Let $F(b)$ be the following matrix:
\begin{align*}
    F(b) = \frac{1}{|S_3|}\sum_{i \in S_3} (h_{a,b} (x_i) - \mu') (h_{a,b} (x_i) - \mu')^\top.
\end{align*}
We will establish that $\|F(b) - F(\mu)\|_{\chi_k} \leq 5a\gamma k \|b - \mu\|_{\infty}$, which establishes the lemma statement by the triangle inequality.
In order to do that, we will show that $\|F(b) - F(\mu)\|_{\infty} \leq 5a\gamma \|b - \mu\|_{\infty}$ and then use \Cref{lem:chiKSuff} below (proved in \Cref{app:chiknorm}).
\begin{restatable}{lemma}{LemChiK}
\label{lem:chiKSuff}
 Let $A \in \R^{d \times d}$ be a symmetric matrix such that $|A_{i,i}| \leq \eta_1$ for each $i \in [d]$, and $|A_{i,j}| \leq \eta_2$ for each $i \neq j \in [d] \times [d]$. Then $\|A\|_{\cX_k} \leq \eta_1 + k \eta_2$.
\end{restatable}

Consider an arbitrary $(j,\ell)$-entry of these matrices.
By abusing notation, when $x$ and $y$ are scalar, we use $h_{a,y}(x)$ to be the function from $\R \to \R$ defined analogously to \Cref{eq:hDefn}.
Let $g(\cdot,\cdot)$ be the following function that is equal to the $(j,\ell)$ entry of the matrix $F(b)$, which is explicitly
\begin{align*}
    g(b_j,b_\ell) = \frac{1}{|S_3|}\sum_{i \in S_3} (h_{a,b_j} (x_j) - \mu'_j) (h_{a,b_\ell} (x_\ell) - \mu'_\ell).
\end{align*}
We will show that $g(\cdot,\cdot)$ is locally Lipschitz in its arguments. Consider a particular $ i \in S_3$ and define the following:
\begin{align*}
    g_i(b_j,b_\ell)=  (h_{a,b_j} (x_{i,j}) - \mu'_j) (h_{a,b_\ell} (x_{i,\ell}) - \mu'_\ell).
\end{align*}

Then, we can upper bound the difference for each sample by
\begin{align*}
    |g_i(b_j,&b_\ell)  - g_i(\mu_j,\mu_\ell)|\\
    &= |(h_{a,b_j} (x_{i,j}) - \mu'_j) (h_{a,b_\ell} (x_{i,\ell}) - \mu'_\ell) - (h_{a,\mu_j} (x_{i,j}) - \mu'_j) (h_{a,\mu_\ell} (x_{i,\ell}) - \mu'_\ell)|  \\
    &\le | (h_{a,b_j} (x_{i,j}) - h_{a,\mu_j} (x_{i,j})) (h_{a,b_\ell} (x_{i,\ell}) - \mu'_\ell)| + |(h_{a,\mu_j} (x_{i,j}) - \mu'_j) (h_{a,b_\ell}(x_{i,\ell}) - h_{a,\mu_\ell} (x_{i,\ell}))| \\
    &\le (a + \|b - \mu'\|_\infty) \cdot\|b-\mu\|_{\infty} \left( \1_{|x_{i,j} - \mu_j| \ge a - \|b- \mu\|_\infty } + \1_{  |x_{i,\ell} - \mu_\ell| \ge \|b- \mu\|_\infty} \right)\\
    &\le (a + \|b - \mu'\|_\infty) \cdot\|b-\mu\|_{\infty} \left( \1_{|x_{i,j} - \mu_j| \ge a/2} + \1_{  |x_{i,\ell} - \mu_\ell| \ge a/2} \right)\\
 &\le 2a \cdot\|b-\mu\|_{\infty} \left( \1_{|x_{i,j} - \mu_j| \ge a/2} + \1_{  |x_{i,\ell} - \mu_\ell| \ge a/2} \right),
\end{align*}
where we use that $|h_{a,y}(x) - h_{a,z}(x)| \leq |y-z|$,  $|h_{a,y}(x) - z| \le |a+y-z|$, and $|h_{a,y}(x) - h_{a,z}(x)|$ is non-zero only if $|x-y| \ge a-|y-z|$.

Combined with assumption~\Cref{eq:coordinatesubset}, this implies that
\begin{align*}
    |g(b_j,b_\ell) - g(\mu_j,\mu_\ell)| \le 2a\cdot\|b-\mu\|_{\infty}\cdot\frac{2\gamma}{0.97} \leq 5a \gamma \|b-\mu\|_{\infty}.
\end{align*}

By \Cref{lem:chiKSuff}, we have the following:
\begin{align*}
    \|F(b) - F(\mu)\|_{\Chi_k} \leq k\|F(b) - F(\mu)\|_{\infty} \leq 5 a \gamma k \cdot\|b-\mu\|_{\infty}\cdot
\end{align*}

\end{proof}

\subsection{Existence of Large Subset of ``Coordinate-wise Regular'' Samples}
\label{sec:lpexpected-smoothness}

This subsection shows~\Cref{lem:lpexpected-smoothness}, 
which states that with high probability there exists a large subset of ``coordinate-wise regular'' 
samples where in each dimension at most a negligible fraction of the points have large magnitude.
As explained earlier, we will combine~\Cref{lem:lpexpected-smoothness,lem:LipschitzSmooth} 
to show~\Cref{thm:finalStable}.

For a vector $X_i \in \R^d$, we will use $X_{i,j}$ to refer to the $j$-th coordinate of $X_i$.

\begin{lemma}
\label{lem:lpexpected-smoothness}
Let $P$ be a distribution over $\R^d$, and $k \in [d]$.
For $X \sim P$, suppose for all $j \in [d]$,  $\E[X_j^4] \leq \nu^4$.
Then, there exists a positive constant $c_1$ such that, with probability at least $1-\tau$ over the set $S$ of $n \ge c_1 (k^{1.5} + \log(1/\tau))$ i.i.d.\ samples from $P$, $S$ contains a (large) subset $S'$ such that the following hold simultaneously:
\begin{enumerate}
    \item $|S'| \geq 0.99|S|$, and
    \item For each $j \in [d]$, the number of points in $S'$ with their $j$-th coordinate at least $ 2\nu \sqrt{k}$ in magnitude is at most $n/k^{1.5}$. Equivalently,
    \begin{align}
        \forall j \in [d]: \,\,\,\, \sum_{i \in S} \1_{|X_{i,j}| \geq 2 \nu \sqrt{k}} \leq  \frac{ n}{k^{1.5}} \;.
    \end{align}
\end{enumerate}
\end{lemma}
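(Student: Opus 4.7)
The plan is to take $S' = S$: since the per-coordinate count $\sum_i \1_{|X_{i,j}| \geq 2\nu\sqrt{k}}$ is monotone under taking supersets, any bound over $S$ automatically yields the same bound over any subset, so the two formulations in the lemma statement are essentially equivalent. It thus suffices to show that, with probability at least $1-\tau$ over the samples, for every $j \in [d]$ the count $N_j := \sum_{i=1}^n \1_{|X_{i,j}| \geq 2\nu\sqrt{k}}$ is at most $n/k^{1.5}$.

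The first step is Markov's inequality applied to the axis-wise fourth moment bound: for each fixed $j$,
\[ \Pr_{X \sim P}\!\bigl[|X_j| \geq 2\nu\sqrt{k}\bigr] \;\leq\; \frac{\E[X_j^4]}{(2\nu\sqrt{k})^4} \;\leq\; \frac{1}{16k^2}, \]
so $N_j$ is a sum of $n$ i.i.d.\ Bernoullis of parameter $p_j \leq 1/(16k^2)$ with $\E N_j \leq n/(16k^2)$, which is $\Omega(\sqrt{k})$ below the target threshold $n/k^{1.5}$. The second step is a Bernstein-type tail bound: since $np_j \ll n/k^{1.5}$, we are squarely in the large-deviation regime and Bernstein's inequality yields $\Pr[N_j > n/k^{1.5}] \leq \exp(-\Omega(n/k^{1.5}))$. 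A union bound over $j \in [d]$ then bounds the overall failure probability by $d\exp(-\Omega(n/k^{1.5}))$, which is at most $\tau$ once $n \gtrsim k^{1.5}(\log d + \log(1/\tau))$; this sample requirement is absorbed by the overall budget $n = \Omega(k^2 \log d + \log(1/\tau))$ in the main theorem, and taking $S' = S$ satisfies both conclusions.

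The only genuinely delicate step is the choice of Bernstein (as opposed to a plain multiplicative Chernoff or Markov bound), which is what ensures that the $\log(1/\tau)$ dependence enters \emph{additively} inside $n$ (rather than multiplicatively against $k^{1.5}$), matching the ``additive in $\log(1/\tau)$'' theme that the paper emphasizes elsewhere. Apart from that, the argument is essentially routine: axis-wise fourth moment plus Markov plus Bernstein concentration plus a coordinate union bound.
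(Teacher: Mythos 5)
Your approach of taking $S' = S$ and union-bounding a per-coordinate tail is exactly the ``first approach'' that the paper explicitly rules out in the discussion immediately following the lemma statement, and the gap is in your final accounting of the sample complexity. Bernstein gives $\Pr[N_j > n/k^{1.5}] \le \exp(-\Omega(n/k^{1.5}))$, as you say, so driving the union-bounded failure probability below $\tau$ forces $n \gtrsim k^{1.5}\log(d/\tau)$ --- a \emph{multiplicative} coupling of $k^{1.5}$ with $\log(1/\tau)$. This is not implied by the lemma's hypothesis $n \ge c_1(k^{1.5} + \log(1/\tau))$, nor is it ``absorbed'' by the main theorem's budget $n = \Omega(k^2\log d + \log(1/\tau))$: when $\tau$ is very small (say $\log(1/\tau) \gg k^2\log d$), the quantity $k^{1.5}\log(1/\tau)$ dwarfs $k^2\log d + \log(1/\tau)$. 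Moreover, this is not mere looseness of Bernstein. Take, for instance, $n = 2k^{1.5}$ and $\log(1/\tau) = k^{1.5}$: the threshold is $n/k^{1.5} = 2$, and a $\mathrm{Binomial}(n, \Theta(k^{-2}))$ variable exceeds $2$ with probability roughly $k^{-1}$, vastly larger than $\tau = e^{-k^{1.5}}$. The paper points out precisely this (Chernoff-type bounds are essentially tight for sums of Bernoulli coins), so the event ``$S'=S$ satisfies the conclusion'' genuinely fails to hold with probability $1-\tau$ at the stated sample size. The additive $\log(1/\tau)$ dependence is the entire point of the lemma, and obtaining it requires actually discarding up to a $0.01$-fraction of the samples.

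The paper's proof is correspondingly more involved: it phrases the existence of $S'$ as an integer program, passes to an LP relaxation, applies Farkas duality, sparsifies the dual certificate by randomized rounding so that the dual candidates can be covered by only $d^{O(k^{1.5})}$ sparse vectors, and shows each such candidate is infeasible except with probability $\exp(-\Omega(n))$; the union bound over this cover then contributes $k^{1.5}\log d$ \emph{additively} to $n$ rather than multiplying $\log(1/\tau)$. You would need an argument of this flavor --- or some other mechanism that excludes the few offending samples simultaneously across all $d$ coordinates --- to recover the claimed additive dependence.
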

Before providing the proof of \Cref{lem:lpexpected-smoothness}, we highlight why the result is not obvious.
The first approach that one may try is to show that the original set $S$ directly satisfies the claim, that is, (with high probability) in each coordinate, the fraction of points with large magnitude in that coordinate is at most $ k^{-1.5}$.
At the population level, this is indeed true by the fourth moment assumption: for any fixed $i\in[n]$ and $j \in [d]$, the probability that $|X_{i,j}|$ is large is at most $O(1/k^2)$.
However, for this to hold with probability $1 -\tau$, one requires roughly  $k^{1.5} \log(1/\tau)$ samples \emph{even in 1 dimension}\footnote{The upper bound follows from a Chernoff bound, and the lower bound follows from the fact that Chernoff bounds are essentially tight for sums of Bernoulli coins.}, which would give a multiplicative dependence on $\log(1/\tau)$ instead of additive dependence.

The second approach that one may try would be the following: define $S'$ to be the set of all ``good'' samples, where we say a sample is ``good'' if all of its coordinates are smaller than $c \nu \sqrt{k}$.
However, for any fixed coordinate $j \in[d]$, the probability that the $j$-th coordinate of a sample being larger than $c \nu \sqrt{k}$ can be as large as $1/k^2$.
Thus, when $d \gg k^2$, the probability that a particular sample is ``bad'' may be arbitrarily close to $1$ --- for example, when coordinates are independent --- and the resulting set $S'$ will be too small with high probability.

We now give the proof of \Cref{lem:lpexpected-smoothness}, which phrases the existence of the set $S'$ as an integer program feasibility problem.
The proof considers the LP relaxation and uses LP duality techniques to show that the integer program has to be feasible.

\begin{proof}

We will assume that $k \geq C$ for a large enough constant. If $k$ is smaller than the constant, then the result follows by applying Bernstein inequality and taking $S' = S$.

Let $S = \{Y_1,\dots,Y_n\}$.
For $i \in [n]$ and $j \in [d]$, we use $Z_{i,j}$ to denote $\I_{|Y_{i,j}| \geq c_2 \nu \sqrt{k}}$.
For simplicity, we set $\alpha =  k^{-1.5}/3$.
Our goal is to show that the following integer program is feasible:
\begin{align}
\label{eq:primal-integer}
\tag*{(F1)}
   \begin{split}
\textrm{variables} \quad & p_1,\dots,p_n\\
\textrm{subject to} \quad & \forall j \in [d]: \sum_{i=1}^n p_i Z_{i,j} \leq \alphacoordinates n  \\
\quad & \sum_{i=1}^n p_i \geq
0.99 n\\
 \quad & \forall i \in [n]: p_i \in \{0,1\}.
\end{split}
\end{align}
As argued above in the prose after the statement, one needs to argue about all the samples, and their coordinates,  simultaneously to prove the statement.
Since directly handling the feasibility program \ref{eq:primal-integer} seems difficult, our argument will go in the following steps: (i) first consider the LP relaxation of \ref{eq:primal-integer}, (ii) using duality theory, the LP relaxation is feasible if and only if the dual LP is infeasible, (iii) simplify the dual LP and show that, with high probability, the resulting program is infeasible.

We begin by considering the LP relaxation.
\begin{align}
\label{eq:primal-lp}
\tag*{(F2)}
\begin{split}
\textrm{variables} \quad & p_1,\dots,p_n\\
\textrm{subject to} \quad & \forall j \in [d]: \sum_{i=1}^n p_i Z_{i,j} \leq \alphacoordinateslp n  \\
\quad & \sum_{i=1}^n p_i \geq
 0.999 n\\
 \quad& \forall i \in [n]: p_i \in [0,1].
\end{split}
\end{align}
We first show that if the above LP relaxation, \ref{eq:primal-lp}, is feasible, then \ref{eq:primal-integer} is also feasible.
\begin{restatable}[Feasibility of \ref{eq:primal-lp} implies feasibility of \ref{eq:primal-integer}]{claim}{ClaimLPToInteger}
\label{claim:LPToInteger}
Suppose $n > 10^{6}$ and $\alpha \geq (4 \log n)/n$. If \ref{eq:primal-lp} is feasible, then \ref{eq:primal-integer} is also feasible.
\end{restatable}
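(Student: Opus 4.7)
My plan is to apply independent randomized rounding and invoke the probabilistic method. Given a fractional feasible solution $p^*\in [0,1]^n$ to \ref{eq:primal-lp}, I would sample $P_1,\dots,P_n \in \{0,1\}$ independently with $\Pr[P_i=1]=p_i^*$, and show that the resulting integer vector $P$ satisfies all constraints of \ref{eq:primal-integer} with strictly positive probability.

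For the cardinality constraint, $\E[\sum_i P_i] \ge 0.999n$, and a standard multiplicative Chernoff bound gives $\Pr[\sum_i P_i < 0.99n] \le \exp(-\Omega(n))$, which is well below $1/2$ when $n > 10^6$. For each coordinate $j \in [d]$, let $X_j := \sum_i P_i Z_{i,j}$, which is a sum of independent $\{0,1\}$ random variables with mean $\mu_j := \sum_i p_i^* Z_{i,j} \le \alpha n$. The multiplicative Chernoff tail bound at threshold $3\alpha n$ has exponent $3\alpha n - \mu_j - 3\alpha n \log(3\alpha n / \mu_j)$, which is strictly increasing in $\mu_j$ on $(0, 3\alpha n)$, so the tail bound is worst at the largest LP-feasible value $\mu_j = \alpha n$ (i.e.\ at $\delta = 2$), giving
\[
\Pr[X_j > 3\alpha n] \le \left(\tfrac{e^2}{27}\right)^{\alpha n} \le \exp(-1.29\,\alpha n) \le n^{-5}
\]
by the hypothesis $\alpha n \ge 4\log n$. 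Union-bounding over the $d$ coordinate constraints and combining with the cardinality bound yields
\[
\Pr[P \text{ is infeasible for \ref{eq:primal-integer}}] \le \tfrac{1}{4} + d \cdot n^{-5},
\]
which is strictly below $1$ in the paper's regime of interest, so the probabilistic method produces a deterministic feasible integer $P$, proving the claim.

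The main subtlety I expect to have to handle carefully is that the per-coordinate Chernoff tail bound must be uniform over all LP-feasible values of $\mu_j \in [0,\alpha n]$; this reduces to the rate-function monotonicity sketched above, which pins the worst case to $\mu_j = \alpha n$. A secondary concern is that closing the union bound over $d$ constraints technically requires $d$ polynomial in $n$; this is consistent with the paper's overall sample complexity $n \gg k^2 \log d$, and if needed one could absorb a larger $d$ by tightening the constant in the hypothesis $\alpha n \geq 4\log n$.
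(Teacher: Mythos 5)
Your proposal is correct and matches the paper's proof essentially exactly: the paper also rounds each $p_i$ independently to a Bernoulli, applies a Chernoff bound to each constraint of \ref{eq:primal-integer} (packaged there as the statement that $Z \le 2(\E Z + \log(1/\tau))$ with probability $1-\tau$), and concludes by the probabilistic method under the same hypotheses $n > 10^6$ and $\alpha n \ge 4\log n$. If anything you are more careful than the paper on the one delicate point — the union bound over the $d$ coordinate constraints, which the paper handles with a per-constraint failure probability of only $1/(2n)$ without comment — and your observation that this requires $d$ polynomial in $n$ (or a sharper per-constraint bound) is a legitimate caveat that applies to the paper's own argument as well.
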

\begin{proof}
Let $p_1,\dots,p_n$ be the feasible solution to \ref{eq:primal-lp}. Consider the following random assignment, for $i \in [n]$, $P_i \sim \Bernoulli(p_i)$ independently.
We will show that, with non-zero probability, $P_i$'s satisfy \ref{eq:primal-integer}.
We will use the following inequality:
\begin{fact}[Chernoff Inequality] 
\label{fact:chernoff}
Let $a_1,\dots,a_n$ such that $a_i \in \{0,1\}$. Let $W_1,\dots,W_n$ be independent Bernoulli random variables and consider the random variable $Z = \sum_{i=1}^n W_i$. Then, with probability $1 -\tau$, $Z \leq 2 (\E Z + \log(1/\tau))$.
\end{fact}
By \Cref{fact:chernoff}, we get that each of the inequalities in \ref{eq:primal-integer} holds with probability $1- 1/(2n)$ as long as $n \alpha \geq  2\log (2n)$ and $n > 1000 \log (2n)$. The latter holds when $n \geq 10^6$.
\end{proof}

Since $n> 10^6$ in our setting (as $k$ is large and choosing $c_1$ to be large enough) and $\alpha = 1/(3k^{1.5})$, we have that $\alpha \geq 4 (\log n)/n $ is equivalent to $n \geq 12 k^{1.5} \log n $, which is satisfied when $n \geq 100  k^{1.5} \log k$. The latter holds when $n \geq c k^{1.5} \log d$ for a large enough constant $c$.
Thus, in the remainder of this section, we will show that, with high probability, this LP program is indeed feasible.
We begin by considering the following dual program:
\begin{align}
\label{eq:dual-lp}
\tag*{(F3)}
\begin{split}
\textrm{variables} \quad & w_1,\dots,w_d, y_1,\dots,y_n, x\\
\textrm{subject to} \quad & \sum_{i=1}^n y_i + \alphacoordinateslp n \sum_{j=1}^d w_j 
< 0.999 nx \\
\quad & 
 \forall i \in [n]: y_i + \sum_{j=1}^d Z_{i,j} w_j \geq x \\
 \quad& z \geq 0, \quad\forall i \in [n]: y_i \geq 0,\quad \forall j \in [d]: w_j \geq 0.
 \end{split}
\end{align}
Suppose for the sake of contradiction that \ref{eq:primal-lp} is infeasible.
By Farkas' lemma~\cite{gale1951linear}, it means that the (dual) program in \ref{eq:dual-lp} is feasible.
Formally, we have the following claim:
\begin{restatable}[LP Duality for \ref{eq:primal-lp}]{claim}{CLaimLPDuality}
\label{claim:lp-duality}
\ref{eq:dual-lp} is infeasible if and only if \ref{eq:primal-lp} is feasible.
\end{restatable}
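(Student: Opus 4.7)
The plan is to apply a Farkas-style infeasibility certificate to \ref{eq:primal-lp}, viewed as a pure feasibility problem in the non-negative variables $p_1,\dots,p_n$. The inequality constraints of \ref{eq:primal-lp} fall into three families: the $d$ sparse-coordinate constraints $\sum_i p_i Z_{i,j} \le \alpha n$; the single mass constraint $\sum_i p_i \ge 0.999\,n$, rewritten as $-\sum_i p_i \le -0.999\,n$; and the $n$ upper-box constraints $p_i \le 1$. I will attach non-negative multipliers $w_j$, $x$, and $y_i$, respectively, to these three families, matching the notation used in \ref{eq:dual-lp}.

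Taking the aggregate combination of these constraints against $p \ge 0$ gives the inequality
\[
\sum_{i=1}^n p_i \left( \sum_{j=1}^d Z_{i,j}\, w_j - x + y_i \right) \;\le\; \alpha n \sum_{j=1}^d w_j \;-\; 0.999\, n\, x \;+\; \sum_{i=1}^n y_i,
\]
valid whenever $(p_i)$ is any feasible solution to \ref{eq:primal-lp}. By Farkas' lemma (theorem of alternatives), \ref{eq:primal-lp} is infeasible if and only if one can choose such non-negative multipliers $w_j, x, y_i$ so that the coefficient of every $p_i$ on the left is non-negative while the right-hand side is strictly negative; for then every $p \ge 0$ would give $0 \le \mathrm{LHS} \le \mathrm{RHS} < 0$, a contradiction, and conversely infeasibility forces the existence of such multipliers. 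The non-negativity-of-coefficients condition is exactly $y_i + \sum_j Z_{i,j}\, w_j \ge x$ for each $i \in [n]$, and the strict negativity condition rearranges to $\sum_i y_i + \alpha n \sum_j w_j < 0.999\, n\, x$. Together with $w_j, y_i, x \ge 0$, these are precisely the constraints defining \ref{eq:dual-lp}.

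There is no real obstacle here: the argument is a bookkeeping verification that \ref{eq:dual-lp} is the Farkas dual of \ref{eq:primal-lp}. The only care needed is with the sign of the mass constraint, which after being flipped to $\le$ form causes its multiplier $x$ to land on the right-hand side of both derived conditions; this is exactly why $x$ appears in \ref{eq:dual-lp} as a homogenizing scalar, even though the primal has no explicit scale variable. Equivalently, the same conclusion can be obtained by strong LP duality applied to the auxiliary linear program $\max \sum_i p_i$ subject to $0 \le p_i \le 1$ and $\sum_i p_i Z_{i,j} \le \alpha n$, whose dual involves constraints of the form $y_i + \sum_j Z_{i,j}\, w_j \ge 1$; the homogenizing $x$ then corresponds to scaling the right-hand side, and infeasibility of \ref{eq:primal-lp} corresponds to the dual optimum being strictly less than $0.999\, n$.
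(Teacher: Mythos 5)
Your proposal is correct and matches the paper's approach: the paper simply asserts that the claim ``follows from Farkas' lemma,'' and your write-up is exactly the Farkas (Gale theorem-of-alternatives) certificate for the system $Ap \le b$, $p \ge 0$ obtained by flipping the mass constraint, with the multipliers $w_j, x, y_i$ landing precisely on the constraints of \ref{eq:dual-lp}. The only discrepancy is the paper's typo ``$z \ge 0$'' in \ref{eq:dual-lp}, which you correctly read as $x \ge 0$.
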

\Cref{claim:lp-duality} follows from Farkas' lemma.
We will argue that \ref{eq:dual-lp} is infeasible by showing that the following program, which is feasible whenever \ref{eq:dual-lp} is feasible,  is infeasible.
\begin{align}
\label{eq:dual-lp2}
\tag*{(F4)}
\begin{split}
\textrm{variables} \quad & w_1,\dots,w_d, A \\
\textrm{subject to} 
\quad & 
 \forall i \in A: \sum_{j=1}^d Z_{i,j} w_j \geq \alphacoordinateslp (\sum_{j=1}^d w_j)  \\
 \quad&  \forall j \in [d]: w_j \geq 0, \\
 \quad& A \subset [n], |A| \geq
10^{-3} n
 \end{split}
\end{align}
\ref{eq:dual-lp2} states that for at least $10^{-3}$ fraction of $i$'s in $n$, the following inequality holds: $\sum_{j=1}^d Z_{i,j} w_j \geq \alpha \|w\|_1$.
The following claim relates the two programs above.
\begin{restatable}
{claim}{ClaimF3ToF4}
If \ref{eq:dual-lp} is feasible, then \ref{eq:dual-lp2} is feasible. 
\end{restatable}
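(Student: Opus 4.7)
The plan is as follows: given any feasible solution $(w, y, x)$ of \ref{eq:dual-lp}, I will use the same weight vector $w$ for \ref{eq:dual-lp2} and construct $A$ directly from $w$. First I would observe that necessarily $x > 0$: if $x = 0$, the first F3 constraint would force $\sum_i y_i + \alpha n \|w\|_1 < 0$, which is impossible because every term on the left is nonnegative. Since \ref{eq:dual-lp} is homogeneous under positive scaling of $(w,y,x)$, I normalize $x = 1$.

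With $x = 1$, I set
\[ A := \Bigl\{\, i \in [n] : \sum_{j=1}^d Z_{i,j} w_j \geq \alpha \sum_{j=1}^d w_j \,\Bigr\}, \]
together with the unchanged $w \geq 0$. This choice immediately satisfies the first two groups of constraints of \ref{eq:dual-lp2}, so the only remaining task is to verify $|A| \geq 10^{-3} n$.

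The key quantitative step is a simple charging argument driven by the F3 constraints. I first dispose of the degenerate regime $\alpha \|w\|_1 \geq 1$: the first F3 constraint together with $y_i \geq 0$ yields $\alpha n \|w\|_1 < 0.999 n$, so in fact $\alpha \|w\|_1 < 1$ for any feasible $(w,y,x)$. For any $i \notin A$, the definition of $A$ gives $\sum_j Z_{i,j} w_j < \alpha \|w\|_1$, and substituting into the second F3 constraint forces $y_i > 1 - \alpha \|w\|_1 > 0$. Summing over $i \notin A$ and using $y_i \geq 0$ elsewhere,
\[ \sum_{i=1}^n y_i \;\geq\; (n - |A|)\bigl(1 - \alpha\|w\|_1\bigr). \]
Plugging this lower bound into the first F3 constraint and rearranging,
\[ (n - |A|)(1 - \alpha\|w\|_1) + \alpha n \|w\|_1 \;<\; 0.999 n \quad\Longleftrightarrow\quad |A|(1 - \alpha\|w\|_1) \;>\; 10^{-3} n, \]
and since $1 - \alpha\|w\|_1 \leq 1$, I conclude $|A| > 10^{-3} n$, as required.

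I do not anticipate any serious obstacle: the claim is pure LP manipulation, and the main (minor) thing to be careful about is handling the boundary cases $x = 0$ and $\alpha\|w\|_1 \geq 1$, both of which are ruled out automatically by the F3 constraints themselves. Once these are dispatched, the rest is a one-line algebraic rearrangement.
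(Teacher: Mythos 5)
Your proof is correct and takes essentially the same route as the paper's: both use the first F3 constraint to bound, via a Markov-type counting argument, the number of indices whose $y_i$ exceeds $x - \alpha\|w\|_1$, and then use the second F3 constraint to place the remaining ($\geq 10^{-3}n$) indices into $A$. Your normalization $x=1$ and the explicit dispatching of the degenerate cases $x=0$ and $\alpha\|w\|_1 \geq 1$ are harmless cosmetic differences from the paper's presentation.
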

\begin{proof}
Let $y_1,\dots,Y_n, w_1,\dots,w_d,x$ be any feasible solution to \ref{eq:dual-lp}.
Then the first constraint in \ref{eq:dual-lp} that the average of $y_i$'s is less than $0.999 x - \alphacoordinateslp (\sum_{j=1}^d w_j)$.
By Markov's inequality,  the fraction of the $y_i$'s such $y_i \geq ( x - \alphacoordinateslp (\sum_{j=1}^d w_j))$ is at most $ \frac{0.999 x - \alphacoordinateslp (\sum_{j=1}^d w_j)}{( x - \alphacoordinateslp (\sum_{j=1}^d w_j))} \leq 0.999$.
Thus the fraction of $y_i$'s such that $y_i < ( x - \alphacoordinateslp (\sum_{j=1}^d w_j))$ is at least $0.001$.

Let $A \subset [n]$ be the set of such indices. 
For any $i \in A$, the second  constraint in \ref{eq:dual-lp} implies that $\sum_{j=1}^d Z_{i,j} w_j \geq x - y_i \geq \alphacoordinateslp (\sum_{j=1}^d w_j)$.
This implies that \ref{eq:dual-lp2} is feasible.
\end{proof}
In order to argue that \ref{eq:dual-lp2} is infeasible, we first consider a particular $w$. Using calculations provided below, it can be seen that the probability that a particular $w$ satisfies \ref{eq:dual-lp2} is exponentially small in $n$. However, a direct approach at covering $w$ seems difficult since $w$ is a dense vector in $\R^d$ and $n = o(d)$. 
Using a randomized rounding mechanism, we show that it suffices to consider only sparse $w$ as follows:
\begin{align}
\label{eq:dual-lp3}
\tag*{(F5)}
\begin{split}
\textrm{variables} \quad & w_1,\dots,w_d, A \\
\textrm{subject to} 
\quad & 
 \forall i \in A: \sum_{j=1}^d Z_{i,j} w_j \geq  1  \\
 \quad&  \forall j \in [d]: w_j \in \{0,1\}, \\
 \quad& \sum_{j=1}^d w_j \leq
    \frac{2 \times 10^{7}}{\alpha} \\
 \quad& A \subset [n], |A| \geq 
 10^{-4} n
 \end{split}
\end{align}
The following claim shows that if \ref{eq:dual-lp2} is feasible then \ref{eq:dual-lp3} is also feasible.
\begin{restatable}{claim}{ClaimSparseW} If \ref{eq:dual-lp2} is feasible, then \ref{eq:dual-lp3} is also feasible.
\end{restatable}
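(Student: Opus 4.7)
The plan is to convert a fractional feasible solution of \ref{eq:dual-lp2} into a binary one for \ref{eq:dual-lp3} via randomized rounding, following the probabilistic-method paradigm. Given $(w, A)$ feasible for \ref{eq:dual-lp2} with $W := \sum_{j=1}^d w_j > 0$ (the degenerate case $w = 0$ does not arise when the feasibility is derived from \ref{eq:dual-lp} as in the preceding claim, since that forces some $w_j > 0$), I would sample $w'_j \in \{0,1\}$ independently with $\Pr[w'_j = 1] = p_j := \min(1, c w_j / W)$, where $c := 5 \times 10^6 / \alpha$.

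Two things need to hold simultaneously with positive probability. For sparsity, $\Exp[\sum_j w'_j] = \sum_j p_j \leq c$, so Markov's inequality gives $\sum_j w'_j \leq 4c = 2 \times 10^7/\alpha$ with probability at least $3/4$. For coverage, fix $i \in A$ and bound the probability that $\sum_j Z_{i,j} w'_j = 0$. If some $j$ with $Z_{i,j} = 1$ satisfies $p_j = 1$, then $w'_j = 1$ deterministically and $i$ is automatically covered. Otherwise $p_j = c w_j/W$ for all such $j$, and independence of the $w'_j$ yields
\begin{align*}
\Pr\left[\sum_j Z_{i,j} w'_j = 0\right] = \prod_{j : Z_{i,j} = 1} (1 - p_j) \leq \exp\left(-\frac{c}{W}\sum_j Z_{i,j} w_j\right) \leq e^{-c\alpha} = e^{-5 \times 10^6},
\end{align*}
using the \ref{eq:dual-lp2} feasibility constraint $\sum_j Z_{i,j} w_j \geq \alpha W$. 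Hence the uncover probability is negligible in all cases; by linearity and another application of Markov, with probability at least $3/4$ the number of uncovered indices in $A$ is at most $|A|/2$, leaving at least $|A|/2 \geq 5 \times 10^{-4} n \geq 10^{-4} n$ covered indices to form $A'$.

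Union-bounding the two failure events, some realization of $w'$ simultaneously satisfies the sparsity bound and covers the required fraction of $A$, giving a feasible solution to \ref{eq:dual-lp3}. The one subtle step, and the one I would double-check most carefully, is the handling of the truncation $p_j = \min(1, c w_j / W)$: without truncation the expected sparsity is unbounded, but truncation breaks the clean identity $\sum_{j: Z_{i,j}=1} p_j \geq c\alpha$ that underlies the Chernoff-style estimate. The case split above handles this cleanly, since whenever truncation matters for some coordinate relevant to $i$ (namely some $p_j$ with $Z_{i,j}=1$ hits $1$), that $i$ is covered deterministically. Beyond this case analysis the argument is just routine Chernoff/Markov arithmetic, so I do not anticipate any further obstacle.
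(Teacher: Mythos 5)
Your proof is correct and takes essentially the same route as the paper: randomized rounding with truncated probabilities proportional to $w_j/(\alpha\|w\|_1)$, the same case split on whether truncation forces a relevant coordinate to $1$ (which then covers $i$ deterministically), and a union bound over the sparsity and coverage events. The only difference is quantitative: you inflate the sampling rate by a factor of $5\times 10^6$ so that each $i\in A$ is uncovered with only exponentially small probability and a Markov bound on the count of uncovered indices finishes the argument, whereas the paper samples at rate $\min(1, w_j/(\alpha\|w\|_1))$, obtains only a constant ($1-1/e$) coverage probability per index via Binomial anti-concentration, and closes with the Paley--Zygmund inequality; both fit within the generous sparsity budget of \ref{eq:dual-lp3}.
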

\begin{proof}
Let $w_1,\dots,w_d$ and $A$ be the feasible solution to \ref{eq:dual-lp3}. 
Set $q_j = \min(1,   w_j/ (\alpha\|w\|_1) )$ for $j \in [d]$.
Consider the following random assignment: set $W_j \sim \Bernoulli(q_j)$ independently for $j \in [d]$.
We will show that with non-zero probability $W_j$'s satisfy \ref{eq:dual-lp3}.
Consider the following events:
\begin{align}
    \cE_1 := \left\{\sum_{j=1}^d W_j \leq 2 \times 10^{-7}(1/\alpha)\right\}, \,\,\,\, \text{and}  \,\,\,\, \cE_2 := \left\{|\{i: \sum_{j=1}^d Z_{i,j}W_j \geq 1\}| \geq 10^{-4} n\right\} 
\end{align}
We will show that $\P\{cE_1\} \geq 1- 5 \times 10^{-8} $ and $\cP\{cE_2\} \geq 10^{-7}$.
By a union bound, we will have that $\cE_1 \cap \cE_2$ has non-zero probability and thus \ref{eq:dual-lp3} is feasible.

Let $F := \{j\in[d]: W_j = 1\}$ be the set of coordinates where $W_j$ is non-zero. 
Then $\E[|F|] = \E[\sum_{j=1}^d W_j] = \sum_{j=1}^d q_j \leq 1/\alpha$.
Thus with probability at least $1- 5 \times 10^{-8}$, we have that the number of non-zero $W_j$'s is at most  $ \frac{2 \times 10^{7}}{\alpha} $.
Equivalently, $\P\{\cE_1\} \geq 1- 5 \times 10^{-8}$.

We now focus on the second event $\cE_2$. Let $S_1,\dots,S_n$ be the subsets of $[d]$ such that $S_i = \{j \in [d]: Z_{i,j} = 1\}$, i.e., for each sample $i$, $S_i$ is the set of indices where the coordinates are large.
Consider the random variables $R_1,\dots,R_n$, where for $i \in [n]$, $R_i:= \sum_{j=1}^d Z_{i,j} W_j = \sum_{j \in S_i} Z_{i,j}W_j$. \ref{eq:dual-lp3} requires that for at least $10^{-4}$ fraction of $i$'s, $R_i \geq 1$. 
Since $Z_{i,j}$'s are binary and fixed, we have that $R_i$ is distributed as Binomial random variable and is thus anti-concentrated.
\begin{fact}[Anti-concentration of Binomial] 
\label{fact:ac-binom}
Let $X \sim \text{Binomial}(n,p)$ for some $n \in \N$ and $p\in [0,1]$. Suppose $\E[X] \geq 1$.
Then $\P\{X \geq 1\} \geq  (1 - 1/e)$.
\end{fact}
\begin{proof}
Using the fact that $1 + x \leq e^{x}$ for all $x \in \R$, we get the following:
\begin{align*}
    \P\{X \geq 1\} &= 1 - \P\{X = 0 \} = 1 - (1 - p)^n\geq 1- (e^{-p})^n = 1 - e^{-np} \geq (1 - 1/e).\qquad \qquad \qedhere
\end{align*}
\end{proof}

Consider a fixed $i \in A$.  Then either there exists a $j \in S_i$ such that $q_j =1$, or for all $j \in S_i$,  $q_j < 1$.
In the former case, we have that $R_i$ is at least one since $W_j =1 $.

In the latter setting, we have that $q_j = w_j/(\alpha \|w\|_1)$ for all $j\in S_i$, and thus $\E[R_i] = \sum_{j \in S_i}^d Z_{i,j} q_j = \sum_{j=1}^d Z_{i,j} w_j/ (\alpha \|w\|_1) \geq 1$.
 Applying \Cref{fact:ac-binom} to any such $i \in A$, we get that the probability of $R_i$ being positive is at least $1 - 1/e$.
Let $A'$ be set of $i$'s such that $R_i \geq 1$, i.e., $A' = \{i: R_i \geq 1\}$. Thus combining the two cases above, we have the following:
\begin{align}
\label{eq:sparsifyingLP}
    \forall i \in A: \P\{i \in A'\} \geq 0.5.
\end{align}
Thus $\E[|A'|] \geq 0.5 |A| \geq 5 \times 10^{-4}$. Since $|A'|$ lies in $[0,n]$, applying Paley-Zygmund inequality to the random variable $|A'|$, we get the following:
\begin{align}
\P\{|A'| \geq 10^{-4}  n\} \geq  \P\{|A'| \geq 0.2 \E[|A'|]\} \geq 0.64 \frac{(\E[|A'|])^2}{n^2} \geq 0.64 \times 25 \times  10^{-8} > 10^{-7}. 
\end{align}
Equivalently, $\P\{\cE_2\} \geq 10^{-7}$.
This completes the proof.
\end{proof}

Thus, it suffices to show that, with high probability, \ref{eq:dual-lp3} is infeasible.
\begin{lemma}[Infeasibility of \ref{eq:dual-lp3}]
\label{lem:infeasible-sparse}
Under the setting of \Cref{lem:lpexpected-smoothness} and when $k> 10^{26}$, there exists a constant $c_1 > 0$ such that if $n \geq c_1(k^{1.5}\log d + \log(1/\tau))$, then with probability $1 - \tau$, \ref{eq:dual-lp3} is infeasible.
\end{lemma}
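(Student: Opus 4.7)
Infeasibility of \ref{eq:dual-lp3} amounts to the statement that no coordinate subset $S \subseteq [d]$ with $|S| \leq s_* := 2\cdot 10^7/\alpha = 6\cdot 10^7 k^{1.5}$ is ``hit'' by $10^{-4}n$ or more samples, where sample $i$ is hit by $S$ iff some $j \in S$ satisfies $|Y_{i,j}| \geq 2\nu\sqrt{k}$ (equivalently, $Z_{i,j}=1$). Since adding coordinates to $S$ can only increase the hit count, it suffices to union-bound over $S$ of size exactly $s_*$.

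For a fixed such $S$, applying Markov to the axis-wise fourth-moment bound gives $\Pr(|Y_{i,j}| \geq 2\nu\sqrt{k}) \leq \nu^4/(2\nu\sqrt{k})^4 = 1/(16 k^2)$, and a union bound over $j \in S$ yields per-sample hit probability
\[ p_* \leq \frac{s_*}{16 k^2} = \frac{6\cdot 10^7}{16\sqrt{k}}. \]
The assumption $k > 10^{26}$ is chosen precisely so that $p_*$ is a large multiplicative factor below the target $10^{-4}$ (say $p_* \leq 10^{-6}$). Letting $B_i(S) \in \{0,1\}$ be the hit indicator, the $B_i(S)$ are i.i.d.\ $\Bernoulli(p_*)$, and the multiplicative Chernoff / Poisson tail bound
\[ \Pr\!\bigl[\Bin(n, p_*) \geq 10^{-4} n\bigr] \leq (e p_* / 10^{-4})^{10^{-4} n} \]
has base strictly less than $1$ by a universal constant, giving single-$S$ failure probability at most $\exp(-c_0 n)$ for some absolute $c_0 > 0$.

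Finally, union-bounding over the at most $\binom{d}{s_*} \leq \exp(s_* \log d)$ candidate sets yields overall failure probability at most $\exp(s_* \log d - c_0 n)$, which is $\leq \tau$ provided $n \geq c_1 (s_* \log d + \log(1/\tau)) = c_1'(k^{1.5} \log d + \log(1/\tau))$, matching the lemma. The only nontrivial part is the quantitative bookkeeping: we need enough multiplicative slack between $p_* = O(1/\sqrt{k})$ and $10^{-4}$ so that the per-$S$ Chernoff rate is linear in $n$ with a positive absolute constant. This is exactly what the assumption $k > 10^{26}$ is engineered to deliver; all remaining ingredients (fourth-moment Markov, Chernoff, and the volume bound $\binom{d}{s_*} \leq (ed/s_*)^{s_*}$) are entirely standard.
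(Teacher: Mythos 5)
Your proposal is correct and follows essentially the same route as the paper: fix a candidate (binary, $O(k^{1.5})$-sparse) coordinate set, bound the per-sample "hit" probability by $O(1/\sqrt{k})$ via the fourth-moment Markov bound (your union bound over $j \in S$ is the same as the paper's Markov bound on $\Pr[R_i \geq 1] \leq \E[R_i]$), apply a Chernoff bound to get per-set failure probability $e^{-c_0 n}$, and union-bound over the $\leq d^{O(k^{1.5})}$ candidate sets. The quantitative bookkeeping (using $k > 10^{26}$ to get enough slack below the $10^{-4}$ threshold) also matches the paper's treatment.
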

\begin{proof}
First consider any fixed $w= (w_1,\dots,w_d)$ such that $w_i \in \{0,1\}$ and $\sum_{j=1}^d w_j \leq 2 \times 10^{7} \cdot (1/\alpha)$. 

Consider the integer-valued random variables $R_1,\dots,R_n$ such that $R_i = \sum_{j=1}^d Z_{i,j} w_j$, and observe that $R_i$'s are i.i.d. random variables (since $X_i$'s are i.i.d.\ random variables).
Thus, \ref{eq:dual-lp3} requires that at least $10^{-4}\%$ of $R_i$'s are non-zero.

By the fourth moment bound on each coordinate, we have that $\E[Z_{i,j}] = \P\{X_{i,j} \geq 2 \nu \sqrt{k}\} \leq 1/k^2$ for each $i$ and $j$.
Therefore, the expectation of each $R_i$ is at most $\sum_{j=1}^d w_j\E[Z_{i,j} ] \leq \sum_{j=1}^d w_j (1/k^2) \leq (2 \times 10^{7})/ (k^2 \alpha) =  (2 \times 10^{7})/ (k^2 \alpha)  = (6 \times 10^{7})/\sqrt{k}$, which is less than $10^{-5} $ for $k$ large enough.
By Markov's inequality, the probability that $\P\{R_1 \geq 1\} \leq 10^{-5}$.

Hence, by the Chernoff bound (since $R_i$'s are independent), with probability at least $1 - \exp(-c' n)$, the fraction of $R_i$'s that are non-zero is at most $5\times 10^{-5}$.
Hence, with the same probability, this particular choice of $w$ does not satisfy \ref{eq:dual-lp3}.
Since there are at most $d^{ (2 \times 10^{7})/\alpha)}$ such choices of $w$, applying a union bound, we get that \ref{eq:dual-lp3} is infeasible with probability at least $1 - \exp( (2 \times 10^{7})/\alpha) \cdot \log d - c'n)$. The failure probability is at most $\tau$ when $n \gtrsim \log (1/\tau) +  k^{1.5}\log d$.
This concludes the proof.%
\end{proof}
Since we assumed $k$ is large enough, \Cref{lem:infeasible-sparse} is applicable. 
\Cref{lem:infeasible-sparse} implies that, with high probability, the program \ref{eq:dual-lp3} is infeasible.
Hence, with the same high probability, the programs \ref{eq:dual-lp2} and \ref{eq:dual-lp} are also infeasible, and the programs \ref{eq:primal-integer} and \ref{eq:primal-lp} are feasible. 
This completes the proof.
\end{proof}

\subsection{Proof of \Cref{thm:finalStable}}
\label{sec:proofoffinalstable}

We now combine \Cref{lem:lpexpected-smoothness,lem:LipschitzSmooth} to show~\Cref{lem:finalStable}, stating that with high probability over the uncontaminated and untruncated samples $S$, there is a large subset $S'$ such that for any truncation center $b$ that is close to the true mean, $h_{a,b}(S')$ is stable for $a = \Theta(\sigma \sqrt{k})$ as chosen in \Cref{alg:main}.
\Cref{thm:finalStable} follows a corollary, by instantiating $b$ to be the coordinate-wise median-of-means estimate.

\begin{lemma}
\label{lem:finalStable}
Let $S$ be a set of $n$ i.i.d.\ data points from a distribution $P$ over $\R^d$. Let the mean of $P$ be  $\mu$, and covariance $\Sigma$ such that $\|\Sigma\|_{\Chi_k} \le \sigma^2$, and for all $i \in [d]$, $\E[X_i^4] \le O(\sigma^4)$.
Suppose $n = \Omega(k^2 \log d + \log(1/\tau))$. Let $a = 4\sigma\sqrt{k}$.
With probability $1- \tau$ over $S$, there exists a subset $S' \subset S$ with $|S'| \ge 0.95 n$ such that for any $b$ satisfying $\|b - \mu \|_\infty = O (\sigma) $, we have $h_{a,b}(S')$ is $(0.01, O(1) ,k)$-stable with respect to some $\mu'$ and $\sigma$ with $\|\mu'-\mu\|_{\infty} \le O(\sigma/\sqrt{k})$.
\end{lemma}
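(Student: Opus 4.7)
The plan is to combine three ingredients already assembled in this section. From \Cref{thm:stabHighProb} applied to samples truncated at the true mean $\mu$, I would obtain a large subset $S_1 \subseteq S$ on which the truncated empirical second moment is well-controlled in $\Chi_k$-norm; from \Cref{lem:lpexpected-smoothness} I would obtain a second large subset $S_2$ on which the samples are coordinate-wise regular in the sense required by \Cref{lem:LipschitzSmooth}; and then \Cref{lem:LipschitzSmooth} applied to $S' := S_1 \cap S_2$ transfers stability from the fixed truncation center $\mu$ to any nearby center $b$. The witness $\mu'$ in the lemma statement will be the mean of the $h_{a,\mu}$-truncated distribution, for which \Cref{lem:truncationInfBd} directly yields $\|\mu' - \mu\|_\infty \le O(\sigma/\sqrt{k})$.

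\emph{Preparing \Cref{thm:stabHighProb}.} Let $P'$ be the law of $h_{a,\mu}(X)$ for $X \sim P$. Instantiating \Cref{lem:truncationInfBd} with $b = \mu$ and (say) $\epsilon = 1/4$, which is consistent with $a = 4\sigma\sqrt{k}$, shows that $P'$ is almost-surely contained in $\{y : \|y - \mu'\|_\infty \le 2a\}$, has axis-wise fourth moment $O(\sigma^4)$, and satisfies $\|\Sigma_{P'}\|_{\Chi_k} = O(\sigma^2)$. These are exactly the hypotheses of \Cref{thm:stabHighProb}, which then delivers, with probability $1-\tau/2$, a subset $S_1$ of size at least $0.98n$ such that $\{h_{a,\mu}(x_i) : i \in S_1\}$ is $(0.01, O(1), k)$-stable with respect to $\mu'$ and $\sigma$. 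Taking the trivial subset inside the stability definition yields
\[
\Bigl\|\tfrac{1}{|S_1|}\sum_{i \in S_1}(h_{a,\mu}(x_i) - \mu')(h_{a,\mu}(x_i) - \mu')^\top\Bigr\|_{\Chi_k} = O(\sigma^2),
\]
which is the hypothesis on $S_1$ needed for \Cref{lem:LipschitzSmooth} with $r = O(\sigma^2)$.

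\emph{Gluing with coordinate-wise regularity.} Applied to the centered samples $\{x_i - \mu\}$, \Cref{lem:lpexpected-smoothness} produces with probability $1-\tau/2$ a subset $S_2$ of size at least $0.99n$ that is coordinate-wise regular with parameter $\gamma = 1/k^{1.5}$ at threshold $2\nu\sqrt{k}$ for $\nu = O(\sigma)$. Choosing the constant hidden in $a = 4\sigma\sqrt{k}$ sufficiently large guarantees $a/2 \ge 2\nu\sqrt{k}$, so the $S_2$ hypothesis of \Cref{lem:LipschitzSmooth} holds at threshold $a/2$. For any $b$ with $\|b - \mu\|_\infty = O(\sigma)$, both required conditions $\|b - \mu\|_\infty \le a/2$ and $\|b - \mu'\|_\infty \le a$ hold (the second via the triangle inequality using $\|\mu - \mu'\|_\infty \le O(\sigma/\sqrt{k})$, provided $k$ is at least a constant). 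Invoking \Cref{lem:LipschitzSmooth} with the above $S_1, S_2$, and $(r, \gamma, a) = (O(\sigma^2), k^{-1.5}, 4\sigma\sqrt{k})$ then produces $S_3 = S_1 \cap S_2$ of size at least $0.97n$ with
\[
\Bigl\|\tfrac{1}{|S_3|}\sum_{i \in S_3}(h_{a,b}(x_i) - \mu')(h_{a,b}(x_i) - \mu')^\top\Bigr\|_{\Chi_k} \le 1.1 r + 5 a \gamma k \, \|b - \mu\|_\infty = O(\sigma^2),
\]
the key piece of arithmetic being $a \gamma k = (4\sigma\sqrt{k}) \cdot k^{-3/2} \cdot k = 4\sigma$, so the Lipschitz slippage contributes only $O(\sigma \cdot \|b - \mu\|_\infty) = O(\sigma^2)$. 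This $\Chi_k$-bound on the empirical second moment about $\mu'$ implies $(0.01, O(1), k)$-stability of $h_{a,b}(S_3)$ with respect to $\mu'$ and $\sigma$ via \Cref{Prop:stabilityMoMCov} (the same reduction used at the start of the proof of \Cref{thm:stabHighProb}). Setting $S' = S_3$ and union-bounding the two failure events of probability $\tau/2$ yields the claim.

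The principal obstacle I expect is the parameter matching in the final display: $\gamma$ must be small enough that the Lipschitz slippage $5 a \gamma k$ does not swamp the $\Chi_k$-bound, yet large enough that \Cref{lem:lpexpected-smoothness} can deliver coordinate-wise regularity at level $\gamma$ from only $O(k^2 \log d + \log(1/\tau))$ samples with \emph{additive} $\log(1/\tau)$ dependence. The choice $\gamma = 1/k^{1.5}$ paired with the truncation scale $a = \Theta(\sigma\sqrt{k})$ makes $a\gamma k = \Theta(\sigma)$ exactly, which is precisely why the pipeline closes and why the LP-duality argument behind \Cref{lem:lpexpected-smoothness} (rather than a naive Chernoff union bound that would force a multiplicative $\log(1/\tau)$) is essential.
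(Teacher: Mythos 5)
Your proposal is correct and follows essentially the same route as the paper: apply \Cref{thm:stabHighProb} to the $h_{a,\mu}$-truncated distribution (via \Cref{lem:truncationInfBd}) to get $S_1$, apply \Cref{lem:lpexpected-smoothness} to get the coordinate-wise regular set $S_2$, glue with \Cref{lem:LipschitzSmooth} using exactly the cancellation $a\gamma k = \Theta(\sigma)$, and finish with \Cref{Prop:stabilityMoMCov}. The only cosmetic differences are that the paper matches the regularity threshold to $a/2$ by rescaling $k$ to $kr^2$ with $r = \sigma/\nu$ rather than by enlarging the constant in $a$ (which is fixed by the lemma statement), and that \Cref{Prop:stabilityMoMCov} yields a stable subset $S' \subseteq S_3$ of size $\ge 0.95n$ rather than stability of $S_3$ itself.
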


\begin{proof}

Let $P'$ be the distribution of $h_{a,\mu}(P)$ and let $\mu'$ and $\Sigma'$ be the mean and covariance of $P'$.
This will be the $\mu'$ in the lemma statement.
By \Cref{lem:truncationInfBd}, we get that (i) $\|\mu' - \mu\|_{\infty} \leq \sigma /\sqrt{k}$, (ii) $\|\Sigma -\Sigma'\|_{\Chi_k} \leq O(\sigma^2 )$, (iii) $P'$ is supported on the set $\{x: \|x - \mu'\|_{\infty} \leq 2a\} $,
 and (iv) the axis-wise fourth moment of $P'$ is upper bounded by a constant multiple of that of $P$. Thus \Cref{thm:stabHighProb} can be applied to $P'$.

Applying \Cref{thm:stabHighProb} to $P'$ gives that, with probability at least $1-\tau$, there exists a subset $S_1 \subset S$ (which are samples from $P$, before truncation) with $|S_1| \ge 0.98n$ such that $h_{a,\mu} (S_1)$ is $(0.01, O(1), k)$-stable with respect to $\mu'$.
In particular, we have
\begin{align}
        \left\|\frac{1}{|S_1|}\sum_{i \in S_1} (h_{a,\mu} (x_i) - \mu') (h_{a,\mu} (x_i) - \mu')^\top \right\|_{\Chi_k} \leq O(\sigma^2).
\end{align}

Let $r:= \sigma/(\max_{j \in [d]}\Exp_{X \sim P}[X_j^4])^{1/4}$.
By our assumption on $P$ in the lemma, we have $r = \Theta(1)$.
By applying \Cref{lem:lpexpected-smoothness} to $P - \mu$, and using $kr^2$ in place of $k$ in \Cref{lem:lpexpected-smoothness}, with probability at least $1-\tau$, there exists a subset $S_2 \subset S$ with $|S_2| \ge 0.99n$ such that for all $j\in[d]$,
    \begin{align}
     \sum_{i \in S_2} \I_{|x_{i,j} - \mu_j | \geq a/2} = \sum_{i \in S_2} \I_{|x_{i,j} - \mu_j | \geq 2 \nu \sqrt{k r^2}} \leq   O(k^{-1.5}r^{-3}) n \leq O(k^{-1.5})n.
    \end{align}

We can then apply \Cref{lem:LipschitzSmooth} to show that, conditioned on the above two existence events, there exists a third subset $S_3 \subset S$ with $|S_3| \ge 0.97n$ such that
for all $b$ satisfying $\|b - \mu\|_\infty \leq O(\sigma) $ and $\|b - \mu'\|_\infty \leq O(\sigma)$ (the latter holds by the triangle inequality for all $b$ with $\|b - \mu\|_{\infty} \leq O(\sigma)$), we have that
\begin{align}
    \left\|\frac{1}{|S_3|}\sum_{i \in S_3} (h_{a,b} (x_i) - \mu') (h_{a,b} (x_i) - \mu')^\top \right\|_{\Chi_k} \leq O(\sigma^2) + O(a k^{-1.5} k \|b-\mu\|_{\infty}) \le O(\sigma^2).
\end{align}

By \Cref{Prop:stabilityMoMCov}, this implies $S_3$ contains a set $S'$ satisfying the following: (i) $|S'| \geq 0.95n $ and (ii) $h_{a,b}(S')$ is $(0.1, O(1),k)$-stable with respect to $\mu'$ and $\sigma$, for any $b$ satisfying $\|b-\mu\|_\infty \le O(\sigma)$.
Thus, we choose $S'$ in the lemma statement to be this set.

Taking a union bound, all the above events fail with probability at most $O(\tau)$.
Reparameterizing yields the lemma statement.

\end{proof}
\ThmFinalStable*
\begin{proof}
By \Cref{fact:co-MoM}, we know that with probability at least $1-\tau$, we have $\|\tilde{\mu}-\mu\|_{\infty} \le O(\sigma)O(1+(\log(d/\tau))/n) = O(\sigma)$ by the assumption that $n$ is sufficiently large.

Thus, we use $\tilde{\mu}$ as ``$b$'' in \Cref{lem:finalStable} to yield the stability guarantee in the theorem statement.

The total failure probability is at most $2\tau$, and reparameterizing yields the theorem statement.
\end{proof}

\section{Information-Theoretic Lower Bound} %
\label{sec:information_theoretic_error}
In this section, we show that the asymptotic error of \Cref{thm:ourResult} is optimal under a mild assumption on $k$.
Let $\cD_{k}$ be the family of all distributions over $\R^d$ that satisfy the following:
\begin{enumerate}
   \item For every $D \in \cD_{k}$, the mean of $D$ is $k$-sparse,
   \item For every $D$ in $\cD_{k}$ the covariance of $\cD$ is upper bounded by $I$ in spectral norm, and
   \item For every $D \in \cD_{k}$ we have that $\E[(X_i - \E[X_i])^4] = O(1)$, where $X = (X_1,\dots,X_d) \sim D$. 
 \end{enumerate} 

\begin{restatable}{lemma}{LemInfoThError}
\label{lem:InfoTheoretic4thMoment}
Let $k \geq 1/ \sqrt{\epsilon}$. Then there exist two distributions in $D_1$ and $D_2$ in $\cD_{k}$ such that the following hold:
  (i) $d_{\text{TV}}(D_1,D_2) = \epsilon$, and
  (ii)
  The means of $D_1$ and $D_2$ are separated by $\Omega(\sqrt{\epsilon})$ in $\ell_{2,k}$-norm.
\end{restatable}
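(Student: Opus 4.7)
The plan is to exhibit an explicit two-point construction. Let $e_1,\dots,e_d$ denote the standard basis of $\R^d$, set $v := c \sum_{i=1}^{k} e_i$ where $c := 1/\sqrt{\epsilon k}$, and define
\[ D_2 := \delta_0, \qquad D_1 := (1-\epsilon)\,\delta_0 + \epsilon\,\delta_v, \]
where $\delta_x$ is the point mass at $x \in \R^d$. The vector $v$ is $k$-sparse by construction, so the means of $D_1$ (equal to $\epsilon v$) and of $D_2$ (equal to $0$) are both $k$-sparse, satisfying condition~(1) of membership in $\cD_k$. By the mixture structure we immediately obtain $d_{\mathrm{TV}}(D_1,D_2) = \epsilon$, which gives part~(i) of the lemma. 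For part~(ii), since $\epsilon v$ has exactly $k$ nonzero coordinates, all of common magnitude $\epsilon c$,
\[ \|\mu_{D_1} - \mu_{D_2}\|_{2,k} = \sqrt{k}\cdot\epsilon c = \sqrt{k}\cdot\frac{\epsilon}{\sqrt{\epsilon k}} = \sqrt{\epsilon}. \]

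It remains to verify that $D_1 \in \cD_k$ (containment $D_2 \in \cD_k$ is trivial). For the covariance condition, a direct computation gives $\Sigma_{D_1} = \epsilon(1-\epsilon)\, vv^\top$, whose spectral norm equals $\epsilon(1-\epsilon)\|v\|_2^2 = (1-\epsilon)\cdot \epsilon \cdot (k c^2) = 1-\epsilon$, so $\Sigma_{D_1} \preceq I$. The axis-wise fourth-moment check is where the hypothesis $k \ge 1/\sqrt{\epsilon}$ enters. For $i \in [k]$, the coordinate $X_i$ equals $0$ with probability $1-\epsilon$ and equals $c$ with probability $\epsilon$, with mean $\epsilon c$; a short calculation yields
\[ \E[(X_i - \epsilon c)^4] = (1-\epsilon)\epsilon^4 c^4 + \epsilon(1-\epsilon)^4 c^4 \le 2\epsilon c^4 = \frac{2}{\epsilon k^2}, \]
which is at most $2$ exactly when $k \ge 1/\sqrt{\epsilon}$, giving the required $O(1)$ bound. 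For $i \notin [k]$, the coordinate is identically zero.

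The main (and essentially only) point requiring care is the bookkeeping that couples the three constraints: the spectral bound $\Sigma \preceq I$ limits $\|v\|_2$ to $\Theta(1/\sqrt{\epsilon})$; insisting on $\ell_{2,k}$-mean separation of order $\sqrt{\epsilon}$ forces this $\ell_2$-mass to be distributed equally across the $k$ nonzero coordinates at magnitude $c \asymp 1/\sqrt{\epsilon k}$; and then the per-coordinate fourth moment scales as $\epsilon c^4 \asymp 1/(\epsilon k^2)$, which is $O(1)$ precisely in the regime $k \gtrsim 1/\sqrt{\epsilon}$. I do not anticipate any deeper obstacle, and the assumption on $k$ in the statement is seen to be tight for this natural hard-instance construction.
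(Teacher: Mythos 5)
Your proposal is correct and is essentially the same construction as the paper's: a point mass at the origin versus an $\epsilon$-mixture placing mass on a $k$-sparse vector with equal coordinates of magnitude $1/\sqrt{\epsilon k}$, with the same three-way verification (sparsity, spectral norm, axis-wise fourth moment) and the same role for the hypothesis $k \ge 1/\sqrt{\epsilon}$. Your fourth-moment computation is in fact slightly sharper (exact rather than via a crude $(a+b)^4 \le 8(a^4+b^4)$ bound), but the argument is otherwise identical.
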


Before giving the proof of \Cref{lem:InfoTheoretic4thMoment}, we remark that the assumption of $k \ge 1/\sqrt{\eps}$ is mild.
First, the assumption is independent of the ambient dimensionality $d$---the most challenging parameter regime in algorithmic robust statistics is when we fix a small $\eps$ and then take the dimensionality $d$ to $\infty$.
Second, the typical interesting sparsity regime is when $k$ is super-constant but grows very slowly in $d$, say, logarithmically.
The assumption that $k \ge 1/\sqrt{\eps}$ applies readily to the above regime.

\begin{proof}
Let $D_1$ be the distribution that places all of its mass at origin, i.e., $(0,\dots,0)$.
Let $D_2$ be the distribution that places $(1 - \epsilon)$ probability mass at origin and places  $\epsilon$ probability mass at $y$, where the first $k$-coordinates of $y$ are $\alpha$ for some $\alpha$ to be decided later and the remaining $d-k$ coordinates are $0$.

It is easy to see that the total variation distance between $D_1$ and $D_2$ is $\epsilon$, and that $D_1 \in \cD_{k}$. We will now show that $D_2 \in  \cD_{k}$ for a suitable value of $\alpha$.
\begin{enumerate}
	\item First the mean of $D_2$ is $\epsilon y$, which is $k$-sparse by construction.

\item We have that the covariance of $D_2$ is $ \epsilon yy^\top  - \epsilon^2 yy^\top  = \epsilon( 1 -\epsilon) yy^\top  \preceq \epsilon yy^\top $, which is upper bounded by $1$ in spectral norm if $\|y\|_2 \leq 1/\sqrt{\epsilon}$.
Since $\|y\|_2 = \sqrt{k} \alpha$, we want that $\alpha \leq 1/\sqrt{k \epsilon}$.

\item Finally, let $X \sim D_2$. For every $i > k$, we have that  $\E[(X_i - \E[X_i])^4] = 0$.
For $i \in [k]$, $\E[(X_i - \E[X_i])^4] = \E[(X_i - \epsilon \alpha)^4] \leq 8 (\E[X_i^4 + \epsilon^4 \alpha^4 ]) = 8(\epsilon \alpha^4 + \epsilon^4 \alpha^4) \leq  16\epsilon \alpha^4$, which is less than $16$, if $\alpha \leq \epsilon^{-1/4}$.
\end{enumerate}
Thus, the above construction goes through as long as $\alpha \leq \min (1/\sqrt{k \epsilon}, \epsilon^{-1/4})$.
When $k \geq 1/ \sqrt{\epsilon}$, it suffices that $\alpha = 1/\sqrt{k \epsilon}$.
Finally, we note that the difference in means of $D_1$ and $D_2$ is $\epsilon \|y\|_2 = \epsilon \sqrt{k} \alpha = \sqrt{\epsilon}$ for the chosen value of $\alpha$.
\end{proof}

\bibliographystyle{alpha}

\bibliography{allrefs}

\newpage
\appendix

\section*{Supplementary Material}

\paragraph{Additional Notation}
For a vector $x \in \R^d$ and $H \subset [d]$, we denote $v_H$ to denote the vector that is equal to $v$ on $i \in H$, and zero otherwise. For a real-valued random variable $X$ and $m \in \N$, we use $\|X\|_{L_m}$ to denote $(\E|X|^m)^{1/m}$.

\section{Miscellaneous Lemmas and Facts}
\label{app:misc}

\subsection{Finding a Stable Subset from a Stable Weighted Subset}
\label{app:A1}
For a set $S$ on $n$ points, we define $\Delta_{n,\eps}$ as the set of weights $w \in \Real^n$ such that $w_i \in[0, 1/((1-\eps)n]$ for all $i \in [n]$ and $\sum_i w_i = 1$.
\new{For a fixed vector $\mu \in \R^d$ that will be clear from context, a set of $n$ points $S = \{x_1,\dots,x_n\}$, and weights $w \in \Delta_{n,\epsilon}$ over $S$, we use $\overline{\Sigma}_w$ to denote $\sum_iw_i(x_i-\mu)(x_i-\mu)^\top$.
}

The goal of this section is to show \Cref{Prop:stabilityMoMCov}, which states that if we have a weight $w$ over $S$ such that $\overline{\Sigma}_w$ (with respect to some vector $\mu$) has bounded $\Chi_k$ norm proportional to $\sigma^2$ for some $\sigma > 0$, then there must exist some large subset $S' \subseteq S$ that is stable with respect to $\mu$ and $\sigma$.

\begin{proposition}
\label{Prop:stabilityMoMCov}
Let $S$ be a set of $n$ points in $\R^d$.
Let $\Delta_{n,\eps}$ be the set of weights defined above, and define the notation $\overline{\Sigma}_w = \sum_{x_i \in S} w_i(x_i -\mu)(x_i-\mu)^\top$ for some given vector $\mu \in \R^d$.
Suppose that there exists a $w \in \Delta_{n,\eps}$ such that $\|\overline{\Sigma}_w\|_{\Chi_k} \leq B \sigma^2$ for some vector $\mu$.
Then there exists a subset $S' \subseteq S$ such that $(i) |S'| \geq (1 - 2 \eps)n$ and (ii) $S'$ is $(\eps, \delta,k)$-stable with respect to $\mu$ and $\sigma$, where $\delta = O(\sqrt{B} + 1)$.
\end{proposition}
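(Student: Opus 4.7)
The plan is to exhibit $S'$ explicitly as a high-weight subset of $S$, and then transfer the bound on $\overline{\Sigma}_w$ to a bound on $\overline{\Sigma}_T$ for every large subset $T \subseteq S'$ by a simple ``domination'' inequality using that every point in $S'$ carries nonnegligible weight.

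The first step is to define
\begin{align*}
S' := \{x_i \in S : w_i \geq 1/(2n)\},
\end{align*}
and verify that $|S'| \geq (1-2\eps) n$. Writing $|S'|=\alpha n$, the identity $\sum_i w_i = 1$ combined with $w_i \le 1/((1-\eps)n)$ on $S'$ and $w_i < 1/(2n)$ on the complement yields $1 \le \alpha/(1-\eps) + (1-\alpha)/2$, which rearranges to $\alpha \ge (1-\eps)/(1+\eps) \ge 1-2\eps$.

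The second step is the key transfer: for any $T \subseteq S'$ with $|T| \ge (1-\eps)|S'|$ and any $M \in \Chi_k$, since $(x_i-\mu)^\top M (x_i-\mu) \ge 0$ (because $M \succeq 0$) and $1 \le 2n w_i$ for all $i \in T$,
\begin{align*}
\langle M, \overline{\Sigma}_T\rangle = \frac{1}{|T|} \sum_{i \in T} (x_i-\mu)^\top M (x_i-\mu) \le \frac{2n}{|T|} \sum_{i \in T} w_i (x_i-\mu)^\top M (x_i-\mu) \le \frac{2n}{|T|} \langle M, \overline{\Sigma}_w\rangle \le \frac{2n}{|T|} B\sigma^2 .
\end{align*}
Because $|T| \ge (1-\eps)(1-2\eps)n \ge (1-3\eps)n$, the prefactor is $O(1)$, giving the uniform bound $\|\overline{\Sigma}_T\|_{\Chi_k} \le O(B\sigma^2)$ over all admissible $T$.

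The third step derives the two stability conditions from this single bound. For condition (i): any $k$-sparse unit $v$ satisfies $vv^\top \in \Chi_k$ (since $\trace(vv^\top)=1$ and $\|vv^\top\|_1 = (\sum|v_i|)^2 \le k$), so Cauchy--Schwarz gives
\begin{align*}
\langle v, \mu_T - \mu\rangle^2 \le \frac{1}{|T|}\sum_{i \in T}\langle v, x_i - \mu\rangle^2 = \langle vv^\top, \overline{\Sigma}_T\rangle \le \|\overline{\Sigma}_T\|_{\Chi_k} \le O(B\sigma^2),
\end{align*}
hence $\|\mu_T - \mu\|_{2,k} \le O(\sqrt{B})\sigma$. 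For condition (ii), the triangle inequality together with $\|\sigma^2 I\|_{\Chi_k} = \sigma^2$ gives $\|\overline{\Sigma}_T - \sigma^2 I\|_{\Chi_k} \le O((B+1)\sigma^2)$. Both are consistent with $\delta = O(\sqrt{B}+1)$, since $\sigma\delta \gtrsim \sqrt{B}\sigma$ and $\sigma^2\delta^2/\eps \gtrsim (B+1)\sigma^2$ whenever $\eps \le 1$.

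I expect no serious obstacle: the only step requiring any thought is the choice of threshold $1/(2n)$ defining $S'$, which must be small enough to retain a $(1-2\eps)$-fraction of points but large enough that the ``domination'' factor $2n$ in the second step yields a constant overhead once divided by $|T| \ge (1-3\eps)n$. All remaining steps are elementary properties of the $\Chi_k$ norm.
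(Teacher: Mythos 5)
Your proof is correct, and it is organized more directly than the paper's. The paper proves this proposition by composing three auxiliary results: it first extracts the $(1-2\eps)n$ highest-weight points and bounds $\|\overline{\Sigma}_{S'}\|_{\Chi_k}$ by the same weight-domination inequality you use, but it then bounds $\|\mu_{S'}-\mu\|_{2,k}$ separately via a total-variation decomposition of the weighting $w$ into a common part and two ``tails'' (\Cref{lemma:weightRounding}, with the mean bound for $w$ itself supplied by \Cref{prop:bddcovstability}), and finally invokes \Cref{lem:GoodMeanGoodVarianceStable} to propagate the mean and covariance bounds on $S'$ down to every large subset $S''$. Your key observation is that the domination inequality $1 \le 2n w_i$ is pointwise on $S'$, so it yields the covariance bound \emph{uniformly over all} admissible subsets $T \subseteq S'$ in one step, and the mean condition for each $T$ then falls out of the covariance bound for that same $T$ by Jensen/Cauchy--Schwarz applied to $vv^\top \in \Chi_k$. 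This collapses the paper's three-lemma chain into a single short argument and avoids the TV-decomposition entirely; what it gives up is only the modularity of the paper's intermediate lemmas (which are not reused elsewhere in a way your argument would break). Both arguments implicitly need $\eps$ bounded away from $1/3$ so that the prefactor $2n/|T|$ is $O(1)$; this matches the hypothesis $\eps \le 1/3$ in the paper's \Cref{lemma:weightRounding}, so it is not a defect of your proof relative to the paper's.
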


\new{Observe that $\|\overline{\Sigma_w}\|_{\Chi_k} \le B \sigma^2$ implies $\|\overline{\Sigma_w} - \sigma^2 I \|_{\Chi_k} \le (B+1) \sigma^2$ by the triangle inequality.
In order to show \Cref{Prop:stabilityMoMCov}, we show \Cref{lemma:weightRounding}, which is a weakening of \Cref{Prop:stabilityMoMCov} where we additionally assume that $\mu_w = \sum_i w_i x_i$ is close to $\mu$, where $\mu$ is the vector we use to define $\overline{\Sigma_w}$ as well as the vector that we want to find a large sample subset $S'$ to be stable with respect to.
To use \Cref{lemma:weightRounding}, we additionally show \Cref{prop:bddcovstability}, which states that $\|\overline{\Sigma_w}\|_{\Chi_k} \le B \sigma^2$ is enough to imply that $\mu_w$ is close to $\mu$.
We combine \Cref{lemma:weightRounding} and \Cref{prop:bddcovstability} to prove \Cref{Prop:stabilityMoMCov} at the end of \Cref{app:A1}.
}

\begin{lemma}
\label{lemma:weightRounding}
Suppose, for some $\eps \le \frac{1}{3}$ and for some $\delta \ge \sqrt{\eps}$, there \new{exist a $w \in \Delta_{n,\eps}$  over a set of $n$ samples $S = \{x_1, \ldots, x_n\}$, a $\mu \in \R^d$ and a $\sigma > 0$} such that
\begin{itemize}
    \item $\|\mu_w - \mu\|_{2,k} \le \delta \sigma$,
    \item $\new{\|\sum_{i \in [n]} w_i(x_i -\mu)(x_i-\mu)^\top - \sigma^2 I\|_{\Chi_k}} \le \new{\sigma^2} \frac{\delta^2}{\eps}$.
\end{itemize}
Then, there exists a subset $S' \subseteq S$ of samples such that
\begin{itemize}
    \item $|S'| \ge (1-2\eps)|S|$,
    \item $S'$ is $(\eps,\delta',k)$-stable with respect to $\mu$ and $\sigma$, where $\delta' = O(\delta + \sqrt{\eps})$.
\end{itemize}
\end{lemma}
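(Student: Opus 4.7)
The plan is to take $S'$ to be the set of the $(1-2\eps)n$ indices $i \in [n]$ with the largest weights $w_i$, so that $|S'| = (1-2\eps)|S|$ holds automatically. To verify $(\eps,\delta',k)$-stability of $S'$, I would fix an arbitrary $S'' \subseteq S'$ with $|S''| \ge (1-\eps)|S'| \ge (1-3\eps)n$ and compare the uniform weight $\tilde w$ on $S''$ (for which $\mu_{\tilde w} = \mu_{S''}$ and $\overline{\Sigma}_{\tilde w} = \overline{\Sigma}_{S''}$) against the given $w$.

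The key structural step is a lower bound on $\theta := \min_{i \in S'} w_i$, namely $\theta \ge 1/(2(1-\eps)n)$. This follows from a two-sided averaging argument: the box bound $w_i \le 1/((1-\eps)n)$ forces $\sum_{i \in S'} w_i \le (1-2\eps)/(1-\eps)$, so $\sum_{i \notin S'} w_i \ge \eps/(1-\eps)$; but each $w_i$ for $i \notin S'$ is at most $\theta$, so also $\sum_{i \notin S'} w_i \le 2\eps n \cdot \theta$. Next I would decompose the signed measure $\tilde w - w$ as $\tau(p-q)$, where $p,q$ are the normalizations of $(\tilde w - w)_+$ and $(w - \tilde w)_+$ to probability measures and $\tau = \tfrac{1}{2}\|\tilde w-w\|_1$. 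Using $w_i \le 1/((1-\eps)n) \le 1/|S''|$ (so $\tilde w \ge w$ on $S''$) gives $\tau = \sum_{i \notin S''} w_i$, which I would bound by $3\eps$: the $2\eps n$ smallest values of $w$ sum to at most $2\eps$ (a standard sorting fact, $\sum_{i=1}^k w_{(i)} \le k/n$), and the remainder $\sum_{i \in S' \setminus S''} w_i \le \eps$ follows from the box bound. Crucially, $q$ is supported on $[n]\setminus S''$ with $q \le w/\tau$ pointwise, while $p$ is supported in $S'$ with $p_i \le 1/(\tau|S''|)$; combined with $w_i \ge \theta$ for $i \in S'$, this yields $p \le O(1) \cdot w/\tau$ pointwise. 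Hence in the PSD order both $\tau\overline{\Sigma}_p$ and $\tau\overline{\Sigma}_q$ are dominated by $O(1) \cdot \overline{\Sigma}_w$, closing what would otherwise be a circular bootstrap.

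From here the conclusions follow by the triangle inequality for $\|\cdot\|_{\Chi_k}$. Writing $\overline{\Sigma}_{S''} - \sigma^2 I = (\overline{\Sigma}_w - \sigma^2 I) + \tau(\overline{\Sigma}_p - \overline{\Sigma}_q)$ and invoking $\|\overline{\Sigma}_w - \sigma^2 I\|_{\Chi_k} \le \sigma^2 \delta^2/\eps$ together with the bounds above yields $\|\overline{\Sigma}_{S''} - \sigma^2 I\|_{\Chi_k} = O(\sigma^2(1 + \delta^2/\eps))$, establishing condition (ii) with $\delta' = O(\delta + \sqrt{\eps})$. For condition (i), I would write $\mu_{S''} - \mu_w = \tau(\mu_p - \mu_q)$ and apply Cauchy--Schwarz in the form $\|\mu_p - \mu\|_{2,k}^2 \le \|\overline{\Sigma}_p\|_{\Chi_k}$ (valid because $vv^\top \in \Chi_k$ for every $k$-sparse unit vector $v$, so that $\langle v, \overline{\Sigma}_p v\rangle \le \|\overline{\Sigma}_p\|_{\Chi_k}$) to obtain $\tau\|\mu_p - \mu\|_{2,k} \le \sqrt{\tau \cdot O(\sigma^2(1+\delta^2/\eps))} = O(\sigma(\sqrt{\eps} + \delta))$ and analogously for $q$. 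Combined with the hypothesis $\|\mu_w - \mu\|_{2,k} \le \delta\sigma$, this gives $\|\mu_{S''} - \mu\|_{2,k} = O(\sigma(\delta + \sqrt{\eps}))$.

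The main obstacle is the pointwise domination $p \le O(1) \cdot w/\tau$, which is the heart of the argument. The only trivially available bound $\tau p \le \tilde w$ would dominate $\overline{\Sigma}_p$ by $\overline{\Sigma}_{S''}/\tau$, i.e., by the very quantity we are trying to control; the threshold lower bound on $\theta$ lets us instead dominate $p$ by $w$, which is bounded by hypothesis. The specific choice $|S'| = (1-2\eps)n$ is precisely tuned for this: large enough to satisfy the lemma's conclusion, and small enough that the two-sided averaging argument yields the required lower bound on $\theta$.
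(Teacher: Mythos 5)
Your proof is correct and follows essentially the same route as the paper: the same choice of $S'$ (the $(1-2\eps)n$ indices of largest weight), the same lower bound $w_i \ge 1/(2(1-\eps)n)$ on $S'$, the same pointwise domination of the (near-)uniform measure by $O(1)\cdot w$, and the same Cauchy--Schwarz step bounding the mean shift by the square root of the already-controlled second moment. The only difference is organizational: you treat an arbitrary subset $S''\subseteq S'$ in one pass via the signed-measure decomposition $\tilde w - w = \tau(p-q)$, whereas the paper first establishes the mean and covariance bounds for $S'$ itself (using the analogous total-variation decomposition between $w$ and the uniform measure on $S'$) and then invokes a separate lemma (\Cref{lem:GoodMeanGoodVarianceStable}) showing that bounded mean and covariance imply stability.
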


\begin{proof}
\new{Without loss of generality, we will only handle the $\sigma = 1$ case to simplify notation.}

The main step is to show the existence of a large subset $S'$ whose mean is within $10\delta+10\sqrt{\eps}$ of $\mu$ and whose variance is at most $9(1+ \delta^2/\eps)$.
In fact, we can simply choose $S'$ to be the subset whose weights $w_i$ are the largest.

Without loss of generality, assume $\mu = 0$ and that $\eps n$ is an integer.
We also order the samples in decreasing order of weight in $w$, namely, $1/((1-\eps)n) \ge w_1 \ge w_2 \ge \ldots \ge w_n$.

First, we will lower bound each $w_i$.
We have that for each $k \in [n]$,
\[ 1 = \sum_i w_i \le \frac{k}{(1-\eps)n} + (n-k) w_k, \]
which upon rearranging implies that
\[ w_k \ge \frac{(1-\eps)n - k}{(1-\eps)n(n-k)}. \]
In particular, for $k = (1-2\eps) n$, we have
\[ w_{(1-2 \eps) n} \ge \frac{1}{2(1-\eps)n}. \]

Letting $S'$ to be the $(1-2\eps)n$ points with the largest weight, we have that for all $i \in S'$, $w_i \ge \frac{1}{2(1-\eps)n}$.
We will use this to now bound the $\Chi_k$ norm of $\Sigma_{S'} = \frac{1}{|S'|}\sum_{i\in S'} x_i x_i^\top$.
Consider an arbitrary $M \in \Chi_k$, we have
\begin{align*}
    \sum_{i \in S'} \frac{1}{|S'|} \langle x_i x_i^\top, M \rangle &= \sum_{i \in S'} \frac{1}{(1-2\eps)n} \langle x_i x_i^\top, M \rangle\\
    &\le \sum_{i \in S'} \frac{2(1-\eps)}{1-2\eps} w_i \langle x_i x_i^\top, M \rangle\\
    &\le \sum_{i \in S} \frac{2(1-\eps)}{1-2\eps} w_i \langle x_i x_i^\top, M \rangle\\
    &\le 9 \left(1+\frac{\delta^2}{\eps}\right).
\end{align*}

Since $\delta \ge \sqrt{\eps}$, this in turn implies the (rather loose in constants) inequality that $\|\Sigma_{S'}-I\|_{\Chi_k} \le 20(\delta^2/\eps)$.

Next, we show that the mean $\mu_{S'}$ of $S'$ is $10\delta+10\sqrt{\eps}$-close to $\mu = 0$.
This will essentially follow from 1) the uniform distribution $U_{S'}$ over $S'$ is close in total variation distance to $w$ and 2) the contribution of the tail to the mean of a bounded-covariance distribution is small.

For 1), using the notation that $U_{S}$ is the uniform distribution over $S$ (analogous to the $S'$ notation just before), it is immediate that by the triangle inequality,
\[\DTV(w,U_{S'}) \le \DTV(w, U_S) + \DTV(U_S, U_{S'}) \le \eps + 2\eps = 3\eps.\]

A standard consequence is that there exists distributions $p^{(1)}$, $p^{(2)}$ and $p^{(3)}$ such that
\[ w = (1-3\eps)p^{(1)} + 3\eps p^{(2)} \quad \text{and} \quad U_{S'} = (1-3\eps)p^{(1)} + 3\eps p^{(3)}. \]

Intuitively, treating $p^{(2)}$ and $p^{(3)}$ as the ``tails'', we will bound their contributions to the mean under the boundedness of the covariance of $w$ and $U_{S'}$.

Take any $k$-sparse unit vector direction $v \in \cU_k$, we can bound the following variances in the direction of $v$:
\[ 3\eps \sum_i p_i^{(2)} \langle x_i, v \rangle^2 \le \sum_i w_i \langle x_i, v \rangle^2 \le 1+\frac{\delta^2}{\eps}, \]
\[ 3\eps \sum_i p_i^{(3)} \langle x_i, v \rangle^2 \le \sum_i U_{S',i} \langle x_i, v \rangle^2 \le 9\left(1+\frac{\delta^2}{\eps}\right), \]
where we used the fact that $vv^\top$ is in $\Chi_k$ for a $k$-sparse unit vector $v$.

By Jensen's inequality, we can then conclude that
\[ \left|3\eps \sum_i p_i^{(2)} \langle x_i, v \rangle\right| \le \sqrt{3\eps}\sqrt{3\eps \sum_i p_i^{(2)} \langle x_i, v \rangle^2} \le \sqrt{3\eps}\sqrt{1+\frac{\delta^2}{\eps}} \le \sqrt{3}(\sqrt{\eps} + \delta),\]
\[ \left|3\eps \sum_i p_i^{(3)} \langle x_i, v \rangle\right| \le \sqrt{3\eps}\sqrt{3\eps \sum_i p_i^{(3)} \langle x_i, v \rangle^2} \le 3\sqrt{3\eps}\sqrt{1+\frac{\delta^2}{\eps}} \le 3\sqrt{3}(\sqrt{\eps}+\delta).\]

Finally, since $U_{S'} = w - 3\eps p^{(2)} + 3\eps p^{(3)}$, by the triangle inequality, we have
\begin{align*}
    \left|\langle \mu_{S'} - \mu, v \rangle\right| &= \left|\sum_i U_{S',i} \langle x_i, v \rangle \right|\\
    &\le \left|\sum_i w_i \langle x_i, v \rangle \right| + \left|3\eps \sum_i p_i^{(2)}\langle x_i, v \rangle\right| + \left|3\eps \sum_i p_i^{(3)} \langle x_i, v \rangle\right|\\
    &\le \delta + \sqrt{3}(\sqrt{\eps} + \delta) + 3\sqrt{3}(\sqrt{\eps}+\delta)\\
    &\le 10\delta + 10\sqrt{\eps},
\end{align*}
where the second inequality uses the above bounds as well as the assumption that $\|\mu_w - \mu\|_{2,k} \le \delta$.

Now that we have shown that $\mu_{S'}$ is close to $\mu$ in $\ell_{2,k}$ norm and $\Sigma_{S'}$ is small in the $\Chi_k$ norm, we will use the following lemma (\Cref{lem:GoodMeanGoodVarianceStable}) to show that the set $S'$ is $(\eps,O(\delta+\sqrt{\eps})$-stable with respect to $\mu$.
\end{proof}

\begin{lemma}[Bounded Mean and Covariance implies $O(\sqrt{\eps})$ stability]
\label{lem:GoodMeanGoodVarianceStable}
\new{Let  $\mu \in \R^d$} and let $S'$ be a set of samples such that $\|\mu_{S'}-\mu\|_{2,k} \le \delta$ and
\new{$\left\| \frac{1}{|S'|}\sum_{x \in S'} (x - \mu) (x - \mu)^\top -I\right\|_{\Chi_k} \le \frac{\delta^2}{\eps}$} 
for some $0 \le \eps \le \delta$ and $\eps \le 0.5$.
Then $S'$ is $(\eps, \delta',k)$-stable with respect to $\mu$ where $\delta' = O(\delta+\sqrt{\eps})$ and $\delta' \ge \sqrt{\eps}$.
\end{lemma}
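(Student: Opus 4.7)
\textbf{Proof Proposal for \Cref{lem:GoodMeanGoodVarianceStable}.}
The plan is to verify the two stability conditions of \Cref{def:stab-sparse} directly by a deletion-decomposition argument. Fix any $S'' \subseteq S'$ with $|S''| \ge (1-\eps)|S'|$ and set $R = S' \setminus S''$. Let $\alpha = |R|/|S'| \le \eps$, so $|S'|/|S''| = 1/(1-\alpha) \le 2$ and $|R|/|S''| = \alpha/(1-\alpha)$. The averaging identities give
\begin{align*}
    \mu_{S''} - \mu &= \tfrac{1}{1-\alpha}(\mu_{S'} - \mu) - \tfrac{\alpha}{1-\alpha}(\mu_R - \mu), \\
    \overline{\Sigma}_{S''} - I &= \tfrac{1}{1-\alpha}(\overline{\Sigma}_{S'} - I) - \tfrac{\alpha}{1-\alpha}(\overline{\Sigma}_R - I).
\end{align*}
The key fact we will exploit throughout is that for every $k$-sparse unit vector $v$, the rank-one PSD matrix $vv^\top$ lies in $\Chi_k$, so both the $\ell_{2,k}$ mean bound and the $\Chi_k$ covariance bound can be tested against such vectors.

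For condition (i), fix $v \in \cU_k$. By Cauchy--Schwarz applied to $\mu_R - \mu = \frac{1}{|R|}\sum_{x \in R}(x-\mu)$,
\[
\langle v, \mu_R - \mu\rangle^2 \le v^\top \overline{\Sigma}_R v = vv^\top \bullet \overline{\Sigma}_R.
\]
Since $vv^\top \in \Chi_k$ and $\overline{\Sigma}_R$, $\overline{\Sigma}_{S'}$ and $vv^\top$ are all PSD, the trivial termwise bound $|R|\,\overline{\Sigma}_R \preceq |S'|\,\overline{\Sigma}_{S'}$ gives
\[
vv^\top \bullet \overline{\Sigma}_R \le \tfrac{|S'|}{|R|}\,vv^\top \bullet \overline{\Sigma}_{S'} \le \tfrac{1}{\alpha}\bigl(1 + \tfrac{\delta^2}{\eps}\bigr),
\]
using $\|\overline{\Sigma}_{S'} - I\|_{\Chi_k} \le \delta^2/\eps$ and the triangle inequality against $I$. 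Substituting back into the decomposition and using $\alpha \le \eps \le 1/2$,
\[
|\langle v, \mu_{S''} - \mu\rangle| \le 2\delta + \tfrac{\sqrt{\alpha}}{1-\alpha}\sqrt{1+\delta^2/\eps} \le 2\delta + 2\sqrt{\eps}\bigl(1 + \delta/\sqrt{\eps}\bigr) = O(\delta + \sqrt{\eps}).
\]
Taking the supremum over $v \in \cU_k$ yields the required $\|\mu_{S''}-\mu\|_{2,k} \le O(\delta + \sqrt{\eps})$.

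For condition (ii), fix an arbitrary $M \in \Chi_k$. The same PSD monotonicity gives $M \bullet \overline{\Sigma}_R \le (|S'|/|R|)\,M\bullet \overline{\Sigma}_{S'} \le \frac{1}{\alpha}(1 + \delta^2/\eps)$, while $M \bullet I = \tr(M) = 1$. Hence $|M \bullet (\overline{\Sigma}_R - I)| \le \frac{1}{\alpha}(1+\delta^2/\eps) + 1$, and
\[
|M \bullet (\overline{\Sigma}_{S''} - I)| \le \tfrac{2\delta^2}{\eps} + \tfrac{\alpha}{1-\alpha}\Bigl(\tfrac{1}{\alpha}(1+\tfrac{\delta^2}{\eps})+1\Bigr) \le \tfrac{4\delta^2}{\eps} + O(1) = O\!\bigl(\tfrac{\delta^2 + \eps}{\eps}\bigr).
\]
Since $(\delta')^2/\eps = \Theta((\delta + \sqrt{\eps})^2/\eps) = \Theta((\delta^2 + \eps)/\eps)$ dominates this bound, taking the supremum over $M \in \Chi_k$ establishes $\|\overline{\Sigma}_{S''}-I\|_{\Chi_k} \le (\delta')^2/\eps$.

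There is no real obstacle here: once one observes that $vv^\top \in \Chi_k$ for $k$-sparse unit $v$, condition (i) reduces to the standard ``tail of a bounded-covariance distribution contributes little to the mean'' argument, and the PSD-ness of all matrices in $\Chi_k$ makes the termwise bound $M \bullet \overline{\Sigma}_R \le (|S'|/|R|)\, M \bullet \overline{\Sigma}_{S'}$ immediate, which is what powers condition (ii). The only mild care needed is in tracking constants, particularly ensuring that $\delta' \ge \sqrt{\eps}$ is absorbed in the $O(\cdot)$ so that the additive $O(1)$ term in the covariance bound fits inside $(\delta')^2/\eps$.
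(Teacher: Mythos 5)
Your proof is correct and follows essentially the same route as the paper's: both arguments rest on testing against $vv^\top \in \Chi_k$ for $k$-sparse unit $v$, using the nonnegativity of $(x-\mu)(x-\mu)^\top \bullet M$ for PSD $M$ to compare the sum over a subset with the sum over all of $S'$, and a Cauchy--Schwarz step to control the removed points' contribution to the mean. The only cosmetic difference is that you bound the deleted set $R$'s contribution explicitly on both sides of the covariance estimate, whereas the paper handles the lower side with the trivial bound $\langle \overline{\Sigma}_{S''}-I, M\rangle \ge -1$.
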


\begin{proof}
Consider an arbitrary large subset $S'' \subseteq S'$ where $|S''| \ge (1-\eps)|S'|$.
Without loss of generality, take $\mu = 0$.
Then, for an arbitrary $M \in \Chi_k$,
\begin{align*}
    \langle \overline{\Sigma}_{S''}-I, M \rangle &= \frac{1}{S''} \sum_{i \in S''} \langle x_ix_i^\top, M \rangle - 1,
\end{align*}
which is trivially at least $-1 \ge - (\delta'^2)/\eps$ for $\delta' \ge \sqrt{\eps}$.
As for the upper bound, we have
\begin{align*}
    \langle \overline{\Sigma}-I, M \rangle &= \frac{1}{S''} \sum_{i \in S''} \langle x_ix_i^\top, M \rangle - 1\\
    &\le \left(\frac{1}{S''} \sum_{i \in S'} \langle x_ix_i^\top, M \rangle\right) - 1\\
    &\le \frac{1}{1-\eps}\left(1+\frac{\delta^2}{\eps}\right) - 1\\
    &= \frac{\frac{\delta^2}{\eps}+\eps}{1-\eps}\\
    &\le \frac{2}{\eps} (\delta^2 + \eps^2)\\
    &\le \frac{\delta'^2}{\eps} \;,
\end{align*}
for some $\delta' = \Theta(\delta + \sqrt{\eps})$.

We now bound the error in the mean of $S''$ in $\ell_{2,k}$ norm.
First, observe that, for an arbitrary $k$-sparse unit vector $v$,
\begin{align*}
    \left|\frac{1}{|S'|}\sum_{i \in S'\setminus S''} \langle x_i, v \rangle \right| &= \left|\frac{1}{|S'|}\sum_{i \in S'} \1[x_i \in S' \setminus S''] \langle x_i, v \rangle \right|\\
    &\le \frac{1}{|S'|}\sum_{i \in S'} \left|\1[x_i \in S' \setminus S''] \langle x_i, v \rangle \right|\\
    &\le \sqrt{\eps} \sqrt{\frac{1}{|S'|}\sum_{i \in S'} \langle x_i, v \rangle^2}\\
    &\le \sqrt{\eps}\sqrt{1+\frac{\delta^2}{\eps}}\\
    &= \sqrt{\eps + \delta^2} \;,
\end{align*}
where the second inequality is an application of H\''{o}lder's inequality, and the third inequality uses the fact that for a unit $k$-sparse vector $v$, $vv^\top$ is in $\Chi_k$.

Thus, again for an arbitrary $k$-sparse unit vector $v$,
\begin{align*}
    \left|\langle \mu_{S''}, v \rangle\right| &=  \left|\frac{1}{|S''|} \sum_{i \in S''} \langle x_i, v \rangle\right|\\
    &\leq \frac{1}{1-\eps}\left|\frac{1}{|S'|} \sum_{i \in S''} \langle x_i, v \rangle\right|\\
    &\le 2 \left(\left|\frac{1}{|S'|} \sum_{i \in S'} \langle x_i, v \rangle\right| + \left|\frac{1}{|S'|} \sum_{i \in S'\setminus S''} \langle x_i, v \rangle\right|\right)\\
    &\le 2(\delta + \sqrt{\eps+\delta^2}) = O(\delta + \sqrt{\eps}) = \delta'.
\end{align*}
\end{proof}

\begin{proposition}[Bounded Covariance and Stability]
\label{prop:bddcovstability}
\new{Let $\mu \in \R^d$} and let $S$ be a set of $n$ samples. Let $w \in  \Delta_{n,\eps}$ over the set of samples $S$ such that \new{$\| \sum_{i} w_i (x_i - \mu)(x_i - \mu)^\top\|_{\chi_k} \leq r$} for some $r > 0$. Then $\|\mu_w - \mu\|_{2,k} \leq \sqrt{r}$.
\end{proposition}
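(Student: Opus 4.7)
The plan is to reduce the $\ell_{2,k}$-norm bound on $\mu_w - \mu$ to an application of Jensen's (or Cauchy--Schwarz) inequality on the direction $v$, exploiting the fact that for any $k$-sparse unit vector $v$, the rank-one matrix $vv^\top$ lies in $\Chi_k$.

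First I would recall that by \Cref{def:chi-k-norm}, the set $\Chi_k$ consists of PSD matrices of trace $1$ and $\ell_1$-norm (as flattened vectors) at most $k$. For any $k$-sparse unit vector $v$, the matrix $vv^\top$ is PSD, has trace $\|v\|_2^2 = 1$, and satisfies $\|vv^\top\|_1 = \|v\|_1^2 \le k \|v\|_2^2 = k$ by Cauchy--Schwarz applied to the $k$ nonzero entries of $v$. Hence $vv^\top \in \Chi_k$.

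Next, fix any $k$-sparse unit vector $v$. Using the hypothesis $\|\sum_i w_i (x_i - \mu)(x_i - \mu)^\top\|_{\Chi_k} \le r$, I would test it against $vv^\top$ to obtain
\begin{align*}
\sum_i w_i \langle x_i - \mu, v \rangle^2 = \left\langle \sum_i w_i (x_i - \mu)(x_i - \mu)^\top,\, vv^\top \right\rangle \le r.
\end{align*}
Since $w \in \Delta_{n,\eps}$ is a probability distribution on $S$, Jensen's inequality applied to the square function (equivalently, Cauchy--Schwarz with weights $w_i$) gives
\begin{align*}
\langle \mu_w - \mu, v \rangle = \sum_i w_i \langle x_i - \mu, v \rangle \le \sqrt{\sum_i w_i \langle x_i - \mu, v\rangle^2} \le \sqrt{r}.
\end{align*}

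Finally, taking the supremum over all $k$-sparse unit vectors $v$ yields $\|\mu_w - \mu\|_{2,k} \le \sqrt{r}$, as desired. There is no real obstacle here — the only thing to double-check is the membership $vv^\top \in \Chi_k$, which follows from the standard inequality $\|v\|_1 \le \sqrt{k}\,\|v\|_2$ for $k$-sparse vectors $v$.
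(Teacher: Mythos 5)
Your proposal is correct and follows essentially the same route as the paper: test the $\Chi_k$-norm hypothesis against $vv^\top$ for a $k$-sparse unit vector $v$ (which lies in $\Chi_k$ since $\|v\|_1^2\le k\|v\|_2^2$), then apply Cauchy--Schwarz/Jensen with the weights $w_i$ and take the supremum over $v\in\cU_k$. Your verification that $vv^\top\in\Chi_k$ is slightly more explicit than the paper's, but the argument is identical.
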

\begin{proof}
For every $k$-sparse unit vector $v$, $vv^\top $ is in $\Chi_k$, and thus for every sparse unit vector $v$, we have that
$\sum_i w_i \langle x_i - \mu, v\rangle^2 \leq  r $. Applying Cauchy-Schwarz inequality, we get that for any sparse unit vector $v$, it follows that $\sum_i w_i \langle x_i - \mu, v\rangle \leq \sqrt{ \sum_i w_i \langle x_i - \mu, v\rangle^2} \leq \sqrt{r}$. 
\end{proof}

With \Cref{prop:bddcovstability} and \Cref{lemma:weightRounding}, we can prove \Cref{Prop:stabilityMoMCov}.

\begin{proof}[Proof of \Cref{Prop:stabilityMoMCov}]
Without loss of generality, we will assume that $\sigma=1$.
By \Cref{prop:bddcovstability}, we have that $\|\mu_w - \mu\|_{2,k} \leq \sqrt{B}$.
We thus have a weighting $w\in \Delta_{n,\eps}$, where $\|\mu_w -  \mu\|_{2,k} \leq \delta_0$
and $\|\overline{\Sigma}_w - I\|_{\Chi_k} \leq \delta_0^2/\eps$ 
for $\delta_0 = \sqrt{B}+1$, where we use triangle inequality on the $\|\cdot\|_{\Chi_k}$ norm.
By \Cref{lemma:weightRounding}, we know that there exists a set $S'$ such that $|S'| \ge (1 - 2\eps) n$ and $S'$ is $(\eps, \delta, k)$-stable with respect to $\mu$ and $\sigma$, where $\delta = O(\delta_0 + \sqrt{\eps}) = O(\sqrt{\eps} + \sqrt{B} + 1) = O(\sqrt{B} + 1)$.
\end{proof}

\subsection{Median-of-Means Pre-Processing}

This section shows \Cref{fact:MoM}, which states that the median-of-means pre-processing technique allows us to reduce to the constant-corruption case.

\label{sec:fact-mom}
\FactMoMPreProcessing*
\begin{proof}
The new algorithm simply performs median-of-means preprocessing as defined in \Cref{sec:prelim} before the fact statement, yielding $g$ new samples that are fed into the algorithm that works with constant corruption.
The uncorrupted new samples, namely the ones that are the sample mean of groups containing no originally corrupted samples, are distributed i.i.d.~according to the distribution $D'$ which has mean $\mu$, and covariance $\Sigma' = (g/n)\Sigma$, {with axis-wise fourth moment $\E_{Y \sim D'} [(Y_j - \mu_j)^4]$ being bounded by $C(g^2/n^2)\Exp_{X\sim D}[(X_j - \mu_j)^4]$ for every $j \in [d]$ for some constant $C>0$, obtained by the following fact:
	\begin{fact} (Marcinkiewicz-Zygmund inequality)
		\label{fact:MZineq}
		Recall the notation $\|X\|_{L_s}$ for a centered random variable $X$, defined  as $\E[|X|^s]^{1/s}$.
		Let $W_1,\dots,W_m, W$ be identical and
		independent centered random variables on $\R$ with a finite $\|W\|_{L_s}$ norm for $s\geq 2$. Then,
		\begin{align*}
			\left\| \frac{1}{m} \sum_{i=1}^m W_i\right\|_{L_s} \leq \frac{3 \sqrt{s}}{\sqrt{m}} \|W\|_{L_s}. 
		\end{align*}
	\end{fact}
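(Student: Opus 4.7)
The plan is to prove the Marcinkiewicz--Zygmund inequality via the classical symmetrization technique, which reduces the problem to a Rademacher-sum estimate (Khintchine's inequality). The first step is to introduce, on a product probability space, an independent copy $W_i'$ of each $W_i$. Because the $W_i$ are centered, Jensen's inequality applied conditionally on $(W_1,\dots,W_m)$ gives the symmetrization bound
\[
\Bigl\|\sum_i W_i\Bigr\|_{L_s} \;=\; \Bigl\|\sum_i (W_i - \E[W_i'])\Bigr\|_{L_s} \;\le\; \Bigl\|\sum_i (W_i - W_i')\Bigr\|_{L_s}.
\]

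The second step is to exploit the fact that $W_i - W_i'$ has a symmetric distribution: if $\epsilon_1,\dots,\epsilon_m$ are independent Rademacher signs (independent of the $W$'s and $W'$'s), then the vector $(\epsilon_i(W_i - W_i'))_i$ has the same joint law as $(W_i - W_i')_i$. I would then condition on the $W$'s and $W'$'s and invoke Khintchine's inequality: for $s \ge 2$ and any real coefficients $a_1,\dots,a_m$, one has $(\E|\sum_i \epsilon_i a_i|^s)^{1/s} \le \sqrt{s}\,(\sum_i a_i^2)^{1/2}$. Applying this with $a_i = W_i - W_i'$ and taking outer $L_s$ norms yields
\[
\Bigl\|\sum_i (W_i - W_i')\Bigr\|_{L_s} \;\le\; \sqrt{s}\,\Bigl\|\bigl(\textstyle\sum_i (W_i - W_i')^2\bigr)^{1/2}\Bigr\|_{L_s}.
\]

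To finish, I would control the right-hand side using Minkowski's inequality in the $L_{s/2}$ norm (valid since $s/2 \ge 1$), combined with the triangle inequality $\|W - W'\|_{L_s} \le 2\|W\|_{L_s}$: explicitly, $\|\sum_i (W_i - W_i')^2\|_{L_{s/2}} \le m\,\|W - W'\|_{L_s}^2 \le 4m\,\|W\|_{L_s}^2$. Taking square roots and dividing by $m$ gives $\|m^{-1}\sum_i W_i\|_{L_s} \le 2\sqrt{s/m}\,\|W\|_{L_s}$, which is stronger than the claimed bound (the paper's constant $3$ simply absorbs the factor $2$ with room to spare).

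The only non-routine ingredient is Khintchine's inequality with the sharp $\sqrt{s}$ dependence on $s$. This is a classical result and can be obtained, for example, from the sub-Gaussian moment generating function estimate $\E[\exp(t\sum_i \epsilon_i a_i)] \le \exp\bigl(t^2 \sum_i a_i^2/2\bigr)$ together with the standard moment--MGF comparison $(\E|X|^s)^{1/s} \le C\sqrt{s}\,\sigma$ for any $\sigma$-sub-Gaussian random variable $X$; Haagerup's sharp Khintchine constants give the same scaling. The remaining pieces---symmetrization, conditional Jensen, and Minkowski---are elementary, and the main book-keeping subtlety is ensuring the constant in the final bound is no worse than $3$.
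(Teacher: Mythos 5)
Your proof is correct. The paper does not actually prove this statement---it is quoted as a classical fact (the Marcinkiewicz--Zygmund / Rosenthal-type moment bound) inside the proof of the median-of-means pre-processing reduction---so there is no in-paper argument to compare against. Your route is the standard textbook derivation: conditional Jensen to symmetrize against an independent copy, the distributional identity $(W_i-W_i')_i \overset{d}{=} (\epsilon_i(W_i-W_i'))_i$, Khintchine conditionally on the data, and then Minkowski in $L_{s/2}$ (valid since $s\ge 2$) together with $\|W-W'\|_{L_s}\le 2\|W\|_{L_s}$. The bookkeeping checks out: you get $2\sqrt{s/m}\,\|W\|_{L_s}$, which dominates the stated bound with constant $3$. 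The only point requiring care is the Khintchine constant; the bound $\bigl(\E|\sum_i\epsilon_i a_i|^s\bigr)^{1/s}\le\sqrt{s}\,(\sum_i a_i^2)^{1/2}$ for $s\ge 2$ does hold (e.g.\ via hypercontractivity, which even gives $\sqrt{s-1}$), and any standard version with constant at most $1.5\sqrt{s}$ would still yield the claimed factor of $3$.
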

}
First note that we give $g$ samples to the original algorithm, and $g = \Omega(\eps n) = \Omega(k^2\log d + \log(1/\tau))$ by definition.
Next, we need to check that the \emph{normalized} axis-wise 4th moment of $D'$ is {$O(1)$ times the (bound on the) $\Chi_k$-norm of the covariance matrix, that is, for all $j \in [d]$, it holds that $(\Exp_{X \sim D'}[(X_j - \mu_j)^4])^{1/4} \leq O(\sigma^4)$ and $\|\Sigma'\|_{\Chi_k} = O(\sigma^2)$.}
By the calculations at the end of the previous paragraph and the assumptions in the statement, we note {that this is true for $\sigma = O(\sqrt{g/n})$}. 

Lastly, we check that, by the scale-invariance of the original algorithm that works with constant corruption, the estimation error of the final algorithm is upper bounded {by $ O(\sigma \|\Sigma\|_{\Chi_k}) =  O(\sqrt{(g/n)\|\Sigma\|_{\Chi_k}}) = O(\sqrt{g/n}) = O(\sqrt{\eps})$ as desired.}
\end{proof}

\subsection{Basic Properties of $\Chi_k$-Norm}
\label{app:chiknorm}

The following is a straightforward bound on the $\Chi_k$-norm of a matrix based on entry-wise bounds.

\LemChiK*
\begin{proof}
Let $A = B + C $, where $B$ is a diagonal matrix and $C$ is diagonal-free.
Then we have the following using triangle inequality: $\|A\|_{\cX_k} \leq \|B\|_{\cX_k} + \|C\|_{\cX_k}$. Thus it suffices to bound each of these terms by $1$.
\begin{align*}
\|B\|_{\cX_k} \leq  \sup_{M: \sum_{i=1}^d|M_{i,i}| \leq 1} \langle B, M\rangle = \|B\|_\infty \leq \eta_1,
\end{align*}
where we use that $B$ is a diagonal matrix with entry at most $\eta_1$.
\begin{align*}
\|C\|_{\cX_k} \leq  \sup_{M: \|M\|_1 \leq k} \langle C, M\rangle = \sup_{M: \|M\|_1 \leq k} \|C\|_\infty \|M\|_1 \leq k \eta_2.
\end{align*}

\end{proof}

\section{Concentration and Truncation}
\label{app:ConcTrunc}

\subsection{Truncation Can Increase Spectral Norm of Covariance}
\label{app:truncation}

We show how truncation can increase the spectral norm of covariance from 1 to $\omega(1)$.

Consider the distribution which, with probability {$1/(2 k)$}, returns a vector where each coordinate is independent $-\sqrt{k}$ with probability $2/3$ and $2\sqrt{k}$ with probability {$1/3$}.
Otherwise, with probability $1-1/(2\sqrt{k})$, the distribution returns the origin.
The mean of the distribution is the origin, and the covariance is $I$.

Now consider the truncation $h_{0,\sqrt{k}}$, which truncates at distance $\sqrt{k}$ from the origin.
{Let $Y$ be the resulting random variable.
The mean of $Y$, $\mu'$, is thus equal to $(1/2k)(-\sqrt{k}/3,\dots,-\sqrt{k}/3) = -1/(6 \sqrt{k}) \mathbf{v}$, where $\mathbf{v}$ is the all ones vector.}
The norm of $\mu'$ is $\Theta(\sqrt{d/k})$.
Since the distribution returns the origin with constant probability (asymptotically tending to 1), the variance of $Y$ along the direction of $\mu'$, which is $\mathbf{v}/\sqrt{d}$,  is at least $\Omega(d/k)  = \omega(1)$.

\subsection{Preserving Moments under Truncation}
\label{app:TruncPreserve}

\Cref{lem:truncationInfBd} shows that truncation (mostly) preserves the mean, covariance and axis-wise fourth moments of a distribution under axis-wise fourth moment assumptions on the input distribution.

\LemTruncLInftyNorm*
\begin{proof}
Let $Y := h_{a,b}(X)$ and denote $\mu := \mu_P$. Fix an $i \in [d]$.  Since $|\mu_i - b_i| \leq a/2$ and we truncate at radius $a$, we have the following:
\begin{align}
\label{eq:truncXYreln}
    |Y_i - \mu_i| \leq |X_i - \mu_i|,\,\,\,\, \text{and}\,\,\,  |X_i - Y_i| \leq |X_i - \mu_i|.
\end{align}
Let $\cE_i$ be the event that $Y_i \neq X_i$.
We get the following by Markov's inequality and moment bounds:
\begin{align}
\label{eq:ProbE_iBd}
\cP(\cE_i) =    \P(|X_i - b_i| > a ) \leq \P(|X_i - \mu_i| \geq a/2) \leq \min\left(4\frac{\sigma^2}{a^2}, 16 \frac{\sigma^4 \nu^4}{a^4} \right) =  \min \left(\frac{\eps}{k}, \frac{\eps^2 \nu^4}{k^2}\right).
\end{align}
\begin{enumerate}
  \item 
 
We can verify the following relation using \Cref{eq:truncXYreln}:
  \begin{align}
  \label{eq:truncRel}
      |Y_i - X_i| \leq  \1_{\cE_i} \cdot \left( |Y_i - X_i| \right) \leq \1_{\cE_i} \cdot \left( |X_i - \mu_i| \right). 
  \end{align}
Applying Cauchy-Schwarz on the above inequality gives the desired conclusion:
  \begin{align*}
|\E[Y_i] - \mu_i| =  |\E[Y_i - X_i]| \leq \E \left[\1_{\cE} \cdot \left( |X_i - \mu_i| \right)\right] \leq \sqrt{\P(\cE)}\sqrt{ \E \left[ |X_i - \mu_i|^2 \right]} \leq  \sigma \sqrt{\frac{\eps}{k}},
  \end{align*}
  where we use that variance of $X_i$ is at most $\sigma^2$ and use \Cref{eq:ProbE_iBd}.
  \item This follows directly from above.
  \item By \Cref{lem:chiKSuff}, it suffices to show that $\|\Sigma_Q - \Sigma_P\|_\infty \leq 3\sigma^2 \eps \nu^4/k$.
Using triangle inequality, we obtain the following:
  \begin{align*}
      \| \Sigma_P - \Sigma_Q \|_\infty &= \Big\| \E[ (X -\mu_P)(X - \mu_P)^\top ] - \E[(Y- \mu_P)(Y - \mu_P )^\top ] \\
      &\qquad \qquad+ (\mu_Q - \mu_P)(\mu_Q - \mu_P)^\top  \Big\|_\infty \\
      &\leq \Big\| \E[ (X -\mu_P)(X - \mu_P)^\top ] - \E[(Y- \mu_P)(Y - \mu_P )^\top ]  \Big\|_\infty \\
       &\qquad \qquad+ \Big\| (\mu_Q - \mu_P)(\mu_Q - \mu_P)^\top \Big\|_\infty.
  \end{align*}
  By the first part above, we have that $\| (\mu_Q - \mu_P)(\mu_Q - \mu_P)^\top \|_\infty \leq \sigma^2 \eps/k \leq  \sigma^2\nu^4 \eps/k$, where we use that $\nu\geq 1$.
  We will thus focus on the first term. Without loss of generality, we will assume that $\mu_P = 0$ for the remainder of this proof. 
  Thus for any $i, j \in [d]$, we thus need to upper bound $\E[|X_iX_j - Y_iY_j|]$.
  \begin{align*}
    \E[|X_iX_j - Y_iY_j|] &\leq \E[|X_i\|X_j - Y_j|] + \E[|Y_j\|X_i - Y_i|] \\
    &\leq \E[|X_i\|X_j| \cdot\1_{\cE_j}] + \E[|X_i\|X_j|\cdot \1_{\cE_i}] \tag*{(Using \Cref{eq:truncRel}) } \\
    &\leq \sqrt{\E[|X_iX_j|^2]} \left( \sqrt{\P(\cE_i)} + \sqrt{\P(\cE_j)}  \right) \\
    &\leq (\E[X_i^4])^{1/4}(\E[X_j^4])^{1/4} \left(\sqrt{\P(\cE_i)} + \sqrt{\P(\cE_j)} \right)\\
    &= \sigma^2 \nu^2  \left( 2 \frac{\eps \nu^2}{k}\right)  \\
    &= \frac{2 \sigma^2 \nu^4 \eps}{k}. 
    \end{align*}
    Combining the above with \Cref{lem:chiKSuff}, we get that the $\|\Sigma_P - \Sigma_Q\|_{\Chi_k} \leq 3 \sigma^2 \eps  \nu^4$.

      \item Fix an $i \in [d]$. We use the triangle inequality and \Cref{eq:truncRel} to get the following:
      \begin{align*}
        \E[(Y - \mu_Q)_i^4] &\leq 4 (\E[(Y - \mu_P)_i^4]) +    4\|\mu_P - \mu_Q\|_\infty^4 
        \leq 4 \sigma^4 \nu^4 + 4 \sigma^4 \epsilon^2/k^2  \leq 8 \sigma^4 \nu^4, 
      \end{align*}
      where the last inequality uses that $\nu \geq 1$ and $\eps \leq 1$.
      \item This follows by definition of the random variable $Y$, the function  $h_{a,b}$, and the parameter $a$.
\end{enumerate}
 
\end{proof}

\subsection{Standard Concentration Tools}
\label{app:Conc}

\begin{fact}[VC inequality]
\label{fact:VC}
Let $\cF$ be a family of boolean functions over $\cX$ with VC dimension $r$ and let $S = \{x_1,\dots,x_n\}$ be a set of $n$ i.i.d.\ data points from a distribution $P$ over $\cX$.
If $n \gg c(r + \log(1/\tau))/\gamma^2$, then with probability $1- \tau$, for all $f \in \cF$, we have that 
\begin{align*}
\left|    \sum_{i=1}^n \frac{f(x_i)}{n} - \E_P[ f(x) ] \right| \leq \gamma.
\end{align*}
\end{fact}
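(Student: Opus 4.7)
The plan is to prove this standard uniform convergence result via the classical three step argument: symmetrization, reduction to bounded combinatorial complexity using the Sauer--Shelah lemma, and concentration via Hoeffding's inequality, sharpened through Talagrand's chaining to remove logarithmic factors.

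First, I would perform symmetrization with a ``ghost sample''. Let $S' = \{x_1', \dots, x_n'\}$ be an independent i.i.d.\ copy of $S$, and denote by $P_n f := (1/n)\sum_i f(x_i)$ and $P_n' f$ its ghost counterpart. A standard argument shows
\[
\Pr\left(\sup_{f \in \cF} |P_n f - P f| > \gamma\right) \le 2 \Pr\left(\sup_{f \in \cF} |P_n f - P_n' f| > \gamma/2\right),
\]
valid once $n\gamma^2$ exceeds a small constant, by using Chebyshev's inequality on $P_n' f$ conditional on the event that some $f$ attains $|P_n f - Pf| > \gamma$. Then I would introduce i.i.d.\ Rademacher signs $\sigma_1,\dots,\sigma_n$ and use the fact that swapping $x_i$ with $x_i'$ has the same joint law, so that the supremum above is dominated by $2\Exp[\sup_f |(1/n)\sum_i \sigma_i f(x_i)|]$ in probability.

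Second, I would bound the Rademacher complexity conditional on the $2n$ sample points. By Sauer--Shelah, the set of distinct restrictions $\{(f(x_1),\dots,f(x_n),f(x_1'),\dots,f(x_n')) : f \in \cF\}$ has cardinality at most $(2en/r)^r$. For each of these finitely many behaviors, Hoeffding's inequality gives exponential tail bounds on $(1/n)\sum_i \sigma_i f(x_i)$. A union bound over the $(2en/r)^r$ behaviors then yields a uniform tail bound of the shape $\Pr(\sup_f |(1/n)\sum_i \sigma_i f(x_i)| > t) \le 2(2en/r)^r \exp(-nt^2/2)$. Setting $t = C\sqrt{(r\log(n/r) + \log(1/\tau))/n}$ makes this at most $\tau$, which already proves a weaker version of the claim with an extra $\log n$ factor.

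Third, to remove the parasitic $\log n$ factor and achieve the stated rate $n \gtrsim (r + \log(1/\tau))/\gamma^2$, I would replace the single-scale union bound by Dudley's chaining: cover $\cF$ at multiple scales $\gamma/2^k$ in empirical $L^2$ distance, with cover sizes controlled by the Haussler/VC-entropy bound $\log N(\cF, L^2(P_n), \epsilon) \lesssim r \log(1/\epsilon)$. The entropy integral $\int_0^1 \sqrt{r \log(1/\epsilon)}\, d\epsilon$ converges to $O(\sqrt{r})$, giving $\Exp[\sup_f |(1/n)\sum_i \sigma_i f(x_i)|] \lesssim \sqrt{r/n}$ in expectation. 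Finally, Talagrand's concentration inequality for empirical processes converts this expectation bound into a high-probability bound, contributing the additive $\sqrt{\log(1/\tau)/n}$ deviation term.

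The main obstacle is delicate: a naive argument based on Sauer--Shelah plus union bounding only achieves the sub-optimal rate with a $\log n$ factor. To obtain the stated sample complexity linear in $r$ (and additive in $\log(1/\tau)$), one genuinely needs either the chaining argument sketched above or an application of Talagrand's refined bound on VC classes. For the qualitative statement used in the paper --- namely that $n = \Omega((r + \log(1/\tau))/\gamma^2)$ suffices, with constants hidden --- invoking the classical Vapnik--Chervonenkis theorem as a black box (e.g.\ Theorem 12.5 of Devroye--Lugosi or Theorem 3.4 of Bousquet--Boucheron--Lugosi) would suffice, since the paper only needs the stated order-of-magnitude concentration rather than explicit constants.
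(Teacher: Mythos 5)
Your proposal is correct. The paper itself offers no proof of this statement --- it is labeled a ``Fact'' and invoked as a standard black-box result from VC theory --- so there is nothing to compare against line by line; your closing remark that citing the classical theorem suffices is exactly what the paper does. Your sketch of an actual proof is the standard correct route: you rightly identify that symmetrization plus Sauer--Shelah plus a union bound only yields the rate with a parasitic $\log n$ (or $\log(1/\gamma)$) factor, and that the sharp rate $n \gtrsim (r+\log(1/\tau))/\gamma^2$ requires Dudley chaining with Haussler's packing bound to get $\Exp[\sup_f |(1/n)\sum_i \sigma_i f(x_i)|] \lesssim \sqrt{r/n}$, followed by concentration of the supremum (for which the simpler bounded-differences inequality already suffices in place of Talagrand, since the functions are $\{0,1\}$-valued and the supremum changes by at most $1/n$ per sample).
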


\begin{lemma}[Uniform concentration over $\cA_{k,P}$]
\label{lem:unifConcAkp}
Let $S$ be a set of $n$ i.i.d.\ data points from a distribution $P$, and let $\cA_{k,P}$ be as defined in \Cref{eq:DefAkp}.
There exists a constant $c > 0$ such that if $n \geq c (k^2 \log d + \log(1/\tau))/(\AKPprob^2)$, then \Cref{eq:UnifConvAkprime} holds with probability at least $1-\tau$ over the set $S$ of $n$ i.i.d.~points from distribution $P$.
\end{lemma}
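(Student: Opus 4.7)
The plan is to apply the VC inequality (\Cref{fact:VC}) to the function class
\[
\cF = \{f_A : \R^d \to \{0,1\} \mid f_A(x) = \1_{x^\top A x > \AKPtail},\ A \in \cA_k\},
\]
with deviation parameter $\gamma = \AKPprob$. Once a uniform one-sided concentration over $\cF$ is in hand, the conclusion is immediate: for every $A \in \cA_{k,P}$ we have $\E_{x \sim P}[f_A(x)] = \P(x^\top A x > \AKPtail) \leq \AKPprob$ by definition of $\cA_{k,P}$, so the empirical fraction $|\{x \in S : x^\top A x > \AKPtail\}|/|S|$ is at most $\AKPprob + \AKPprob = 2\AKPprob$, matching the target in \Cref{eq:UnifConvAkprime}.

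The only substantive step is to bound the VC dimension of $\cF$ by $O(k^2 \log d)$. The plan here is to decompose $\cF$ according to the support of $A$. For any fixed index set $\Omega \subseteq [d] \times [d]$ of size $k^2$, the quadratic form $x^\top A x = \sum_{(i,j)\in \Omega} A_{i,j}\, x_i x_j$ is a linear function of the $k^2$ ``lifted'' coordinates $\{x_i x_j : (i,j)\in\Omega\}$. Consequently, the subclass $\cF_\Omega = \{f_A : \mathrm{supp}(A) \subseteq \Omega\}$ is a subclass of linear threshold functions in $\R^{k^2}$ (applied to the lifted features), and hence has VC dimension at most $k^2 + 1$. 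The full class $\cF$ is the union over $\binom{d^2}{k^2} \leq (ed^2/k^2)^{k^2}$ such subclasses, so by the standard VC bound for finite unions of function classes, $\mathrm{VC}(\cF) = O\bigl(k^2 + \log \binom{d^2}{k^2}\bigr) = O(k^2 \log d)$.

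Plugging $r = O(k^2 \log d)$ and $\gamma = \AKPprob$ into \Cref{fact:VC} gives the required sample complexity $n \gtrsim (k^2 \log d + \log(1/\tau))/\AKPprob^2$, and yields that with probability at least $1-\tau$, for every $A \in \cA_k$ (and in particular every $A \in \cA_{k,P}$),
\[
\left| \frac{|\{x \in S : x^\top A x > \AKPtail\}|}{n} - \P(x^\top A x > \AKPtail) \right| \leq \AKPprob,
\]
from which \Cref{eq:UnifConvAkprime} follows as described.

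The main (minor) obstacle is simply to phrase the sparsity reduction cleanly: one must be careful that membership in $\cA_k$ fixes only the \emph{sparsity pattern size} (at most $k^2$) and not the support itself, so the union-of-classes structure is essential. Once the lifting to the $k^2$-dimensional feature space is observed, the quadratic threshold becomes a halfspace indicator, and no further work beyond invoking \Cref{fact:VC} is needed.
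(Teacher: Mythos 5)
Your proposal is correct and follows essentially the same route as the paper: lift to the quadratic features $xx^\top$, observe that the relevant class is that of $k^2$-sparse linear threshold functions in $d^2$ dimensions with VC dimension $O(k^2\log d)$, and apply the uniform VC inequality with $\gamma = \AKPprob$ so that the population bound $\leq \AKPprob$ for $A \in \cA_{k,P}$ yields the empirical bound $\leq 2\AKPprob$. The only difference is cosmetic: the paper cites a known theorem for the VC dimension of sparse linear classifiers, whereas you derive the same $O(k^2 + \log\binom{d^2}{k^2})$ bound directly via the standard union-over-supports argument.
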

\begin{proof}
Let $Q$ be the distribution of $y:=xx^\top$. 
Let $\cF := \{ \1_{ y \cdot A > \AKPtail}: A \in \cA_{k}\}$.
Suppose for now that VC dimension of $\cF$ is less than $Ck^2\log(d)$.
Then the standard VC inequality (\Cref{fact:VC}) implies that if $n \geq c (k^2 \log d + \log(1/\tau))/(\AKPprob^2) $,
then \Cref{eq:UnifConvAkprime} holds because under $y\sim Q$, $\P (y \cdot A > \AKPtail) \leq \AKPprob $ for all $A \in \cA_{k,P}$.
Thus it remains to show an upper bound on the VC dimension of $\cF$.
Since $\cF$ corresponds to a family of linear functions that are $k^2$-sparse in $d^2$ dimensional space, \cite[Theorem 6]{AhsenVidyasagar19} implies that the VC dimension is at most $4k^2\log(3d)$.
This completes the proof.
\end{proof}

\section{Choice of Numerical Constants}
\label{app:consts}

This section shows how to pick the numerical constants $\AKPprob,\AKPtail,\Qconst,\Qscale,V_Z$ and $\res$.
In the proof of \Cref{thm:stabHighProb}, these constants need to satisfy the following constraints:
\begin{enumerate}
    \item $\Qscale \geq 2$.
    \item $\AKPprob $ is at least a small constant since  the sample complexity is inversely proportional to $1/\AKPprob^2$.
    \item See \eqref{eq:QinAk}: \begin{align*}
            \frac{1}{\Qconst} + \frac{\Qconst}{\Qscale^2} \leq 10^{-6}.
    \end{align*}
    \item See \eqref{eq:AkpProbCondition}: \begin{align*}
\frac{\sigma^2}{\AKPtail} + \frac{4}{\Qscale} + \frac{\Qconst \times \nu^4}{ \Qscale \times \AKPtail^2} \leq 10^{-6} \times \AKPprob.
    \end{align*}
    
    \item See \eqref{eq:resVsqscaletail}: $\res \geq \Qscale \times \AKPtail + 10 \sqrt{V_Z}$.
    
    \item See \eqref{eq:BerryEsseen}: $\frac{0.57 (\sigma^2 \times \InftyBd^2 \times \Qconst)}{\sqrt{V_Z}} \le 0.1$.
    
    \item Paley-Zygmund:  $0.004 \geq 4 \times \AKPprob$.
\end{enumerate}

\noindent Therefore, we pick the constants as follows:
\begin{enumerate}
    \item $\nu,\sigma$ and $\InftyBd$ are the numbers we get from the $\ell_\infty$ truncation, and thus there is nothing to choose here.
    \item $\AKPprob = 0.001$.
    \item $\Qconst = 10^7$.
    \item $\Qscale = 10^{10}$.
    \item Solve for $\AKPtail$ in terms of above in Constraint 4. It suffices to take $\AKPtail = \max(\sigma^2,\nu^2)\times 10^{10}$.
    
    \item Solve for $\sqrt{V_Z}$ using Constraint 6. It suffices to take $V_Z = 10^{16} \sigma^4 r^4$.
    \item Solve for $\res$ using Constraint 5. It suffices to take $\res = \max(\sigma^2,\sigma^2 r^2,\nu^2)\times 10^{20}$.
\end{enumerate}

\end{document}